\newtheorem{corollary}{Corollary}
\newtheorem{definition}{Definition}
\newtheorem{theorem}{Theorem}
\newtheorem{proposition}{Proposition}
\newtheorem{lemma}{Lemma}
\newtheorem{remark}{Remark}
\newtheorem{example}{Example}
\newcommand{\expectation}{\ensuremath{\mathbb{E}}}
\newcommand{\prob}{\ensuremath{\mathbb{P}}}
\newcommand{\naturals}{\ensuremath{\mathbb{N}}}
\newcommand{\Reals}{\ensuremath{\mathbb{R}}}
\newcommand{\set}{\ensuremath{\mathcal}}
\renewcommand{\dcases}
 {
  \MT_start_cases:nnnn
    {\quad}
    {$\m@th\displaystyle##$\hfil}
    {$\m@th\displaystyle##$\hfil}
    {\lbrace}
 }
\begin{document}

\title{\huge{Improved Bounds on Lossless Source Coding\\
and Guessing Moments via R\'{e}nyi Measures}}
\author{Igal Sason \qquad Sergio Verd\'{u}
\thanks{
I. Sason is with the Andrew and Erna Viterbi Faculty of Electrical Engineering,
Technion--Israel Institute of Technology, Haifa 32000, Israel (e-mail:
sason@ee.technion.ac.il).}
\thanks{
S. Verd\'{u} is with the Department of Electrical Engineering, Princeton
University, Princeton, New Jersey 08544, USA (e-mail: verdu@princeton.edu).}
\thanks{
This manuscript has been submitted to the {\em IEEE Transactions on Information Theory}
in November~1, 2017, and accepted for publication in January~28, 2018. It is published 
in the {\em IEEE Trans. on Information Theory}, vol.~64, no. 6, pp.~4323--4346, June 2018.
This is a post-print which includes few corrections of printing typos.}
\thanks{
Communicated by S. Watanabe, Associate Editor for Shannon Theory.}
\thanks{
Parts of this work have been presented at the 2018 International Zurich Seminar on
Information and Communication, Zurich, Switzerland, February 21-23, 2018, and in
the 2018 IEEE International Symposium on Information Theory, Vail, Colorado, USA,
June 17--22, 2018.}
\thanks{
This work has been supported by the Israeli Science Foundation (ISF) under
Grant 12/12, by ARO-MURI contract number W911NF-15-1-0479
and in part by the Center for Science of Information, an NSF Science and
Technology Center under Grant CCF-0939370.}
}

\maketitle

\begin{abstract}
This paper provides upper and lower bounds on the optimal guessing moments
of a random variable taking values on a finite set when side information
may be available. These moments quantify the number of guesses required for
correctly identifying the unknown object and, similarly to Arikan's bounds,
they are expressed in terms of the Arimoto-R\'{e}nyi conditional entropy.
Although Arikan's bounds are asymptotically tight, the improvement of the
bounds in this paper is significant in the non-asymptotic regime. Relationships
between moments of the optimal guessing function and the MAP error probability
are also established, characterizing the exact locus of their attainable values.
The bounds on optimal guessing moments serve to improve non-asymptotic
bounds on the cumulant generating function of the codeword lengths for
fixed-to-variable optimal lossless source coding without prefix constraints.
Non-asymptotic bounds on the reliability function of discrete
memoryless sources are derived as well. Relying on these techniques,
lower bounds on the cumulant generating function of the codeword lengths are
derived, by means of the smooth R\'{e}nyi entropy, for source codes that allow
decoding errors.
\end{abstract}

\begin{keywords}
Cumulant generating function,
guessing moments,
lossless source coding,
$M$-ary hypothesis testing,
R\'enyi entropy,
R\'{e}nyi divergence,
Arimoto-R\'enyi conditional entropy,
smooth R\'{e}nyi entropy.
\end{keywords}
%\eject

\section{Introduction}
\label{section: introduction}

\subsection{Prior work}
The problem of guessing discrete random variables has found a variety of applications
in information theory, coding theory, cryptography, and searching and sorting algorithms.
The central object of interest is the distribution of the number of guesses
required to identify a realization of a random variable $X$, taking values on a
finite or countably infinite set $\set{X} = \{1, \ldots, |\set{X}|\}$, by asking
questions of the form ``Is $X$ equal to $x$?''.
A {\em guessing function} is a one-to-one
function $g \colon \set{X} \to \set{X}$, which can be viewed as a permutation
of the elements of $\set{X}$ in the order in which they are guessed. We can
envision a generic algorithm that outputs $g^{-1}(1)$; a
supervisor checks whether $X = g^{-1}(1)$, if so then the algorithm halts; otherwise,
the algorithm outputs $g^{-1}(2)$ and the process repeats until the value of $X$ is guessed
correctly. Therefore, the number of guesses is $g(x)$ when the true outcome is $x \in \set{X}$.

Lower and upper bounds on the minimal expected number of required guesses for correctly
identifying the realization of $X$, expressed as a function of the Shannon entropy $H(X)$,
have been respectively derived by Massey \cite{Massey94} and by McEliece and Yu
\cite{McElieceYu95}. More generally, given a probability mass function $P_X$ on $\set{X}$,
it is of interest to minimize the generalized guessing moment
\begin{align}  \label{eq: 20171019-1}
\expectation[g^{\rho}(X)] = \underset{x \in \set{X}}{\sum} P_X(x) g^{\rho}(x), \quad \rho > 0.
\end{align}
For an arbitrary positive $\rho$, the $\rho$-th moment of the number
of guesses is minimized by selecting the guessing function to be a {\em ranking function}
$g_X$, for which $g_X(x)=k$ if $P_X(x)$ is the $k$-th largest mass.
Upper and lower bounds on the $\rho$-th moment of ranking functions, expressed
in terms of the R\'{e}nyi entropy $H_{\alpha}(X)$ of order $\alpha = \frac1{1+\rho}$,
were derived by Arikan \cite{Arikan96}, followed by a refined upper bound
by Bozta\c{s} \cite{Boztas97}.
Although if $|\set{X}|$ is small, it is straightforward to evaluate numerically the guessing
moments, the benefit of bounds
expressed in terms of R\'{e}nyi entropies is particularly relevant when dealing with
a random vector $X^n = (X_1, \ldots, X_n)$ whose letters belong to a finite
alphabet $\set{A}$; computing all the probabilities of the mass
function $P_{X^n}$ over the set $\set{A}^n$, and then sorting
them in decreasing order for the calculation of the $\rho$-th moment of the optimal
guessing function for the elements of $\set{A}^n$ has exponential complexity in $n$.
Therefore, it becomes infeasible even for moderate values of $n$. In contrast, regardless
of the value of $n$, bounds on guessing moments which depend on the R\'{e}nyi entropy are
readily computable if for example $\{X_i\}_{i=1}^n$ are independent; in which case,
the R\'{e}nyi entropy of the vector is equal to the sum of the R\'{e}nyi entropies of its
components (hence, the exponential complexity is reduced to linear
complexity in $n$; furthermore, in the i.i.d. case, the complexity in calculating the
R\'{e}nyi entropy of $X^n$ is independent of $n$).
Arikan's bounds are asymptotically tight for random vectors of length $n$ as $n \to \infty$,
so another benefit of these bounds is that they provide the correct exponential growth rate of
the guessing moments for sufficiently large $n$. In \cite{Arikan96}, Arikan generalized his
bounds to allow side information, leading to asymptotically tight bounds which are expressed
in terms of the Arimoto-R\'{e}nyi conditional entropy \cite{Arimoto75}.

The guessing problem has been studied in the information-theoretic literature in various
contexts, which include: guessing subject to distortion \cite{ArikanM98-1}, joint
source-channel coding and guessing with application to sequential decoding \cite{ArikanM98-2},
guessing with a prior access to a malicious oracle \cite{BurinS_arXiv17}, a large
deviations approach to guessing and source compression (\cite{ChristiansenD13, HanawalS11, Sundaresan07b}),
guessing with limited memory \cite{Huleihel17}, guesswork exponents for Markov sources
\cite{MaloneS}, guessing in secrecy problems (\cite{Salam17, Yona17}), and guessing under
source uncertainty \cite{Sundaresan07}.

For uniquely-decodable lossless source coding, Campbell (\cite{Campbell65, Campbell66})
proposed the normalized cumulant generating function of the codeword lengths as a generalization
to the frequently used design criterion of normalized average code length. Campbell's
motivation in \cite{Campbell65} was to control the contribution of the longer codewords
via a free parameter in the cumulant generating function: if the value of this parameter
tends to zero, then the resulting design criterion becomes the normalized average
code length while by increasing the value of the free parameter, the penalty for longer
codewords is more severe, and the resulting code optimization yields a reduction in the fluctuations
of the codeword lengths. In \cite{Campbell65}, Campbell obtained asymptotically tight upper and lower bounds
on the minimum normalized cumulant generating function for discrete memoryless stationary
sources with finite alphabet. These bounds, expressed in terms of the R\'{e}nyi entropy,
imply that for sufficiently long source sequences, it is possible to make the normalized
cumulant generating function of the codeword lengths approach the R\'{e}nyi entropy as closely
as desired by a proper fixed-to-variable uniquely-decodable source code; moreover, a converse
result in \cite{Campbell65} shows that there is no uniquely-decodable source code for which the normalized
cumulant generating function of its codeword lengths lies below the R\'{e}nyi entropy.
In addition, this type of bounds was studied in the context of various
coding problems, including guessing (see, e.g., \cite{Arikan96, ArikanM98-1, ArikanM98-2,
BracherHL15, BracherLP17, BunteL14a, CV2014a, CV2014b, HanawalS11, Kuzuoka16,
MerhavAr99, Merhav11, Merhav17, Sundaresan07, verdubook}).

Kontoyiannis and Verd\'{u} \cite{KYSV14} studied the behavior of the best achievable rate and
other fundamental limits in variable-rate lossless source compression without prefix constraints.
In the non-asymptotic regime, the fundamental limits of fixed-to-variable lossless compression
with and without prefix constraints were shown to be tightly coupled. Reference \cite{KYSV14} obtains
non-asymptotic upper and lower bounds on the distribution of codeword lengths, along with
a rigorous proof of the Gaussian approximation put forward in 1962 by Strassen \cite{Strassen} for memoryless
sources. An alternative approach was followed by Courtade and Verd\'{u} in \cite{CV2014a}, where they
derived non-asymptotic bounds for the normalized cumulant generating function of the codeword
lengths for optimal variable-length lossless codes without prefix constraints; these bounds
are used in \cite{CV2014a} to obtain simple proofs of the asymptotic normality and the reliability
function of memoryless sources allowing countably infinite alphabets.

In \cite{KosPV14}, Kostina {\em et al.} studied the fundamental limits of the minimum average
length of lossless and lossy variable-length compression, allowing a nonzero error probability
$\varepsilon \in [0,1)$ for almost lossless compression. The bounds in \cite{KosPV14} were used to obtain
a Gaussian approximation on the speed of convergence of the minimum average length,
which was shown to be quite accurate for all but small blocklengths. In \cite{KogaY05}, Koga
and Yamamoto followed an information-spectrum approach to obtain asymptotic properties of the codeword
lengths for prefix fixed-to-variable source codes, allowing decoding errors. This work was refined
in the non-asymptotic setting by Kuzuoka \cite{Kuzuoka16}, which bounds the cumulant generating
function of the codeword lengths via the smooth R\'{e}nyi entropy.

\subsection{Organization of the paper}
Section~\ref{section: preliminaries} defines the
R\'{e}nyi information measures, and summarizes those properties used in this paper.
Section~\ref{section: bounds for guessing} derives upper and lower bounds on the minimal
guessing moments of a random variable taking a finite number of values where side information
on its value may be available. In the non-asymptotic regime, these bounds improve earlier
results by Arikan \cite{Arikan96} and Bozta\c{s} \cite{Boztas97}.
Section~\ref{section: bounds for guessing} also derives tight lower and upper bounds
which establish relationships between the MAP error probability in $M$-ary hypothesis testing,
and the moments of the optimal guessing function for correctly identifying $X$ when side
information $Y$ is available. The bounds on the guessing moments are applied
in Section~\ref{section: lossless source coding} to optimal variable-length lossless data
compression. We derive improved bounds on the normalized cumulant generating function of the
codeword lengths for fixed-to-variable optimal codes, and on the non-asymptotic reliability function of
discrete memoryless sources, tightening earlier results by Courtade and Verd\'{u} \cite{CV2014a}.
Following up the aforementioned work by Kuzuoka \cite{Kuzuoka16},
Section~\ref{sec: almost-lossless source coding} relies on the techniques in
Sections~\ref{section: bounds for guessing} and~\ref{section: lossless source coding}
in order to derive improved lower bounds on the cumulant generating function of the codeword lengths
for fixed-to-variable source coding allowing errors via the use of the smooth
R\'{e}nyi entropy (\cite{Cachin}, \cite{Koga13}, \cite{RennerW04}, \cite{RennerW05}).
The bounds in Section~\ref{sec: almost-lossless source coding} are derived for
source codes allowing a given maximal or average error probability.

\section{Preliminaries}
\label{section: preliminaries}

The information measures used in this paper apply to discrete random variables.

\begin{definition} \cite{Renyientropy}
\label{definition: Renyi entropy}
Let $X$ be a discrete random variable taking values on a finite
or countably infinite set $\set{X}$, and let $P_X$ be its probability
mass function. The R\'{e}nyi entropy of order
$\alpha \in (0,1) \cup (1, \infty)$
is given by\footnote{Unless explicitly stated, the logarithm base
can be chosen by the reader, with $\exp$ indicating the inverse function
of $\log$.}
\begin{align} \label{eq: Renyi entropy}
H_{\alpha}(X) & = H_{\alpha}(P_X) = \frac1{1-\alpha} \, \log
\sum_{x \in \set{X}} P_X^{\alpha}(x).
\end{align}
By its continuous extension,
\begin{align}
\label{eq: RE of zero order}
& H_0(X) = \log \, \bigl| \{x \in \set{X} \colon
P_X(x) > 0 \} \bigr|, \\
\label{eq: Shannon entropy}
& H_1(X) = H(X), \\
\label{eq: RE of infinite order}
& H_{\infty}(X) = \log \frac1{p_{\max}}
\end{align}
where $H(X)$ is the entropy of $X$, and
\begin{align}  \label{eq: 20171031-a}
p_{\max} = \underset{x \in \set{X}}{\max} \, P_X(x).
\end{align}
\end{definition}
All definitions in this section extend in a natural way to random vectors.
\begin{lemma} \cite{Renyientropy}
\label{lemma: RE - indep.}
Let $X^n = (X_1, \ldots, X_n)$ be an $n$-dimensional random vector with independent components.
Then, for all $\alpha \in [0, \infty]$,
\begin{align} \label{eq: 20171016-a}
H_{\alpha}(X^n) = \sum_{i=1}^n H_{\alpha}(X_i).
\end{align}
\end{lemma}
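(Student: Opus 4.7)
The plan is to verify the identity by direct computation for $\alpha \in (0,1) \cup (1,\infty)$, and then handle the three boundary values $\alpha \in \{0,1,\infty\}$ either by separate arguments or by continuity.

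For $\alpha \in (0,1) \cup (1,\infty)$, I would use the fact that independence of the components means the joint PMF factorizes, $P_{X^n}(x^n) = \prod_{i=1}^n P_{X_i}(x_i)$, so that $P_{X^n}^{\alpha}(x^n) = \prod_{i=1}^n P_{X_i}^{\alpha}(x_i)$. Summing over $x^n \in \mathcal{X}_1 \times \cdots \times \mathcal{X}_n$ then factorizes as a product of sums:
\begin{align*}
\sum_{x^n} P_{X^n}^{\alpha}(x^n) = \prod_{i=1}^n \sum_{x_i \in \mathcal{X}_i} P_{X_i}^{\alpha}(x_i).
\end{align*}
Taking $\log$ and multiplying by $\frac{1}{1-\alpha}$ turns the product into a sum, yielding \eqref{eq: 20171016-a} directly from \eqref{eq: Renyi entropy}.

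For the boundary cases, I would verify each one separately. At $\alpha = 0$, definition \eqref{eq: RE of zero order} gives $H_0(X^n) = \log |\mathrm{supp}(P_{X^n})|$, and the support of a product measure is the product of the supports, so $|\mathrm{supp}(P_{X^n})| = \prod_i |\mathrm{supp}(P_{X_i})|$, which gives additivity after taking the log. At $\alpha = \infty$, independence implies $\max_{x^n} P_{X^n}(x^n) = \prod_i \max_{x_i} P_{X_i}(x_i)$ (since the maximum of a product of nonnegative independent factors is the product of maxima), and the log turns this into a sum, matching \eqref{eq: RE of infinite order}. At $\alpha = 1$, \eqref{eq: Shannon entropy} reduces the claim to the well-known additivity of Shannon entropy under independence.

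Alternatively, the three boundary cases can be obtained at once by invoking the continuous extension in Definition~\ref{definition: Renyi entropy}: the function $\alpha \mapsto H_{\alpha}(X)$ is continuous on $[0,\infty]$, so the identity verified on $(0,1) \cup (1,\infty)$ extends to the full range by taking limits on both sides, provided the alphabets are such that $H_{\alpha}(X_i)$ is finite. I do not foresee a real obstacle here; the only thing to watch is the $\alpha = \infty$ case, where one must argue carefully that the maximum of a product of independent PMFs factorizes, which follows because each factor $P_{X_i}(x_i)$ is maximized independently over $x_i$.
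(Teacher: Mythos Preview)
Your proof is correct and is the standard direct verification. Note, however, that the paper does not supply its own proof of this lemma: it is stated with a citation to R\'{e}nyi's original paper and used as a known fact, so there is nothing in the paper to compare your argument against.
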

Note however that in contrast to the Shannon entropy, $H_{\alpha}(X^n)$ may exceed the
right side of \eqref{eq: 20171016-a} when $\alpha \neq 1$ and $\{X_i\}$ are dependent.

\begin{definition} \cite{Renyientropy}
\label{definition: Renyi divergence}
Given probability mass functions $P$ and $Q$ on a finite
or countably infinite set $\set{X}$, the {\em R\'{e}nyi
divergence of order $\alpha > 0$} is defined as follows:
\begin{itemize}
\item
If $\alpha \in (0,1) \cup (1, \infty) $, then
\begin{align}
\label{eq:RD1}
D_{\alpha}(P\|Q) &= \frac1{\alpha-1} \; \log \sum_{x \in \set{X}}
P^\alpha(x) \, Q^{1-\alpha}(x).
\end{align}
\item By the continuous extension of $D_{\alpha}(P \| Q)$, the R\'{e}nyi divergences of
orders 0, 1, and $\infty$ are defined as
\begin{align}
\label{eq: d0}
& D_0(P \| Q) = -\log Q\bigl(\{x \in \set{X} \colon P(x) > 0 \} \bigr), \\
\label{def:d1}
& D_1(P\|Q) = D(P\|Q), \\
\label{def:dinf}
& D_{\infty}(P\|Q) = \log \, \sup_{x \in \set{X}} \frac{P(x)}{Q(x)},
\end{align}
\end{itemize}
where $D(P\|Q)$ denotes the relative entropy.
\end{definition}
Properties of the R\'{e}nyi divergence were studied in
\cite{Atar15}, \cite{ErvenH14}, \cite[Section~8]{SV16}
and \cite{Shayevitz_ISIT11}. The R\'{e}nyi divergence
of negative orders is defined by extending \eqref{eq:RD1}
to $\alpha \in (-\infty, 0)$ \cite[Section~5]{ErvenH14}.

\begin{lemma}\cite[Section~5]{ErvenH14} \label{lemma: RD}
The following properties are satisfied by the R\'{e}nyi divergence:
\begin{itemize}
\item For all $\alpha \neq 0$, $D_{\alpha}(P\|Q)=0$ if and only if $P=Q$.
\item
$D_{\alpha}(P \| Q) \in [0, \infty]$ for $\alpha \in [0, \infty]$, and
$D_{\alpha}(P \| Q) \in [-\infty, 0]$ for $\alpha \in [-\infty, 0)$ (with the continuous extension where
$D_{-\infty}(P\|Q) \triangleq \underset{\alpha \to -\infty}{\lim} D_{\alpha}(P\|Q)$ \cite[(81)]{ErvenH14}).
\end{itemize}
\end{lemma}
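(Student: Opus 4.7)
The plan is to reduce both claims to a single convexity/concavity argument via the identity
\begin{align}
\sum_{x \in \set{X}} P^{\alpha}(x) \, Q^{1-\alpha}(x) = \expectation_{Q}\!\left[ \left(\tfrac{P(X)}{Q(X)}\right)^{\!\alpha} \right],
\end{align}
valid on the support of $Q$ (with the usual conventions for points where $Q(x)=0$), and then to handle the boundary orders $\alpha \in \{0,1,\pm\infty\}$ separately by invoking the continuous extensions \eqref{eq: d0}--\eqref{def:dinf}.

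First, I would fix $\alpha \in (0,1)\cup(1,\infty)$ and apply Jensen's inequality to the function $t \mapsto t^{\alpha}$ on $(0,\infty)$, which is strictly concave for $\alpha \in (0,1)$ and strictly convex for $\alpha>1$. Since $\expectation_{Q}\!\bigl[\tfrac{P(X)}{Q(X)}\bigr]=1$, we obtain
\begin{align}
\expectation_{Q}\!\left[ \left(\tfrac{P(X)}{Q(X)}\right)^{\!\alpha} \right] \;\lessgtr\; 1
\end{align}
depending on the sign of $\alpha-1$, with equality if and only if $P(X)/Q(X)$ is $Q$-almost surely constant, i.e. $P=Q$. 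Dividing the logarithm of this inequality by $\alpha-1$ flips the direction precisely when $\alpha<1$, so in both regimes $D_{\alpha}(P\|Q)\ge 0$, with equality iff $P=Q$. For $\alpha<0$, the same argument applies because $t\mapsto t^{\alpha}$ is again strictly convex on $(0,\infty)$, and now $\alpha-1<0$, yielding $D_{\alpha}(P\|Q)\le 0$ with equality iff $P=Q$. Finiteness from above (resp.\ below) by $+\infty$ (resp.\ $-\infty$) is immediate from the definition.

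Next, I would dispose of the endpoints. The case $\alpha=1$ is the classical Gibbs inequality for relative entropy. For $\alpha=\infty$, \eqref{def:dinf} gives $D_{\infty}(P\|Q)=\log\sup_{x} P(x)/Q(x)\ge \log 1 = 0$, since $\sup_x P(x)/Q(x)<1$ would contradict $\sum_x P(x)=\sum_x Q(x)=1$; moreover equality forces $P(x)\le Q(x)$ for all $x$, hence $P=Q$. For $\alpha=0$, \eqref{eq: d0} gives $D_0(P\|Q)=-\log Q(\mathrm{supp}(P))\ge 0$, but this can vanish whenever $\mathrm{supp}(P)\subseteq\mathrm{supp}(Q)$, which is why the first bullet excludes $\alpha=0$. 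Finally, $D_{-\infty}(P\|Q)\in[-\infty,0]$ follows from the negative-order bound passed to the limit along $\alpha\to-\infty$, using the continuous-extension definition cited from \cite[Section~5]{ErvenH14}.

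The only delicate point is the equality analysis for $\alpha<0$ and the careful treatment of zeros of $P$ or $Q$ (so that the Radon--Nikodym-like ratio $P/Q$ is well-defined wherever it appears in the Jensen step). I would handle this by restricting the sum to $\{x:Q(x)>0\}$ when $\alpha\in(0,1)$ and to $\{x:P(x)Q(x)>0\}$ when $\alpha>1$ or $\alpha<0$, following the standard conventions recalled in \cite{ErvenH14}; strict convexity/concavity then yields the stated equality conditions without additional work.
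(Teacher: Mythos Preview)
The paper does not supply its own proof of this lemma; it is stated with a citation to \cite[Section~5]{ErvenH14} and used as a black box. Your Jensen-inequality argument is the standard route and is essentially correct.

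One small correction: the identity $\expectation_{Q}\bigl[P(X)/Q(X)\bigr]=1$ holds only when $P \ll Q$; in general it is $\le 1$. This does not break the sign conclusions---for $\alpha\in(0,1)$ concavity together with monotonicity of $t\mapsto t^{\alpha}$ gives $\expectation_{Q}\bigl[(P/Q)^{\alpha}\bigr]\le\bigl(\expectation_{Q}[P/Q]\bigr)^{\alpha}\le 1$, and for $\alpha<0$ convexity together with $t\mapsto t^{\alpha}$ being decreasing gives $\expectation_{Q}\bigl[(P/Q)^{\alpha}\bigr]\ge\bigl(\expectation_{Q}[P/Q]\bigr)^{\alpha}\ge 1$---but it does matter for the equality analysis: to conclude $P=Q$ you need both strict equality in Jensen \emph{and} $\expectation_{Q}[P/Q]=1$ (i.e., $P\ll Q$). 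Your final paragraph already flags the support issues, so this is just a matter of tightening the wording.
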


\begin{definition} \label{definition: binary RD}
For all $\alpha \in (0,1) \cup (1, \infty)$, the {\em binary R\'{e}nyi divergence of
order $\alpha$}, denoted by $d_{\alpha}(p\|q)$ for $(p,q) \in [0,1]^2$,
is defined as $D_{\alpha}([p ~1-p] \| [q ~1-q])$. It is the continuous
extension to $[0,1]^2$ of
\begin{equation}
\label{eq1: binary RD}
d_\alpha (p \| q ) =\frac1{\alpha-1} \; \log \Bigl(p^{\alpha} q^{1-\alpha}
+ (1-p)^{\alpha} (1-q)^{1-\alpha} \Bigr).
\end{equation}
\end{definition}

\begin{definition} \cite{Arimoto75}
\label{definition: AR conditional entropy}
Let $P_{XY}$ be defined on $\set{X} \times \set{Y}$,
where $X$ is a discrete random variable.
The \textit{Arimoto-R\'{e}nyi conditional entropy of order
$\alpha \in [ 0, \infty]$} of $X$ given $Y$ is defined as follows:
\begin{itemize}
\item
If $\alpha \in (0,1) \cup (1, \infty) $, then
\begin{align}
\label{eq1: Arimoto - cond. RE}
H_{\alpha}(X | Y) &= \frac{\alpha}{1-\alpha} \,
\log \, \expectation \left[
\left( \, \sum_{x \in \set{X}} P_{X|Y}^{\alpha}(x|Y)
\right)^{\frac1{\alpha}} \right] \\
%&= \frac{\alpha}{1-\alpha} \, \log \expectation
%\left[ \| P_{X|Y} ( \cdot | Y ) \|_\alpha \right] \\
\label{eq2: Arimoto - cond. RE}
&= \frac{\alpha}{1-\alpha} \, \log
\, \sum_{y \in \set{Y}} P_Y(y) \, \exp \left(
\frac{1-\alpha}{\alpha} \;
H_{\alpha}(X | Y=y) \right),
\end{align}
where \eqref{eq2: Arimoto - cond. RE}
applies if $Y$ is a discrete random variable.
\item By its continuous extension, the Arimoto-R\'enyi conditional
entropy of orders $0$, $1$, and $\infty$ are defined as
\begin{align}
\label{eq2: cond. RE at 0:gral}
H_0(X|Y) &= \sup_{y \in \set{Y}} \, H_0(X \, | \, Y=y), \\
\label{eq: cond. RE at 1}
H_1(X|Y) &= H(X|Y), \\
\label{eq: cond. RE at infinity}
H_{\infty}(X|Y) &= \log \, \frac1{\expectation
\Bigl[ \underset{x \in \set{X}}{\max} \, P_{X|Y}(x|Y) \Bigr]}.
\end{align}
\end{itemize}
\end{definition}
Properties of the Arimoto-R\'{e}nyi conditional entropy were studied in
\cite{FehrB14}, \cite{SakaiY_arXiv17} and \cite{SV18}.

As in \cite[Section~4]{SV18}, we find several useful results satisfied by the
Arimoto-R\'{e}nyi conditional entropy of negative orders.

Another R\'{e}nyi information measure used in this paper is
the smooth R\'{e}nyi entropy, introduced by Renner and Wolf \cite{RennerW05}
(after a different definition in \cite{RennerW04}).
\begin{definition} \cite{RennerW05}
\label{def: smooth Renyi entropy}
Let $X$ be a discrete random variable taking values on $\set{X}$, and let
$P_X$ denote the probability mass function of $X$. Let $\alpha \in (0,1) \cup (1, \infty)$
and $\varepsilon \in [0,1)$. The {\em $\varepsilon$-smooth R\'{e}nyi entropy of order $\alpha$}
is given by
\begin{align} \label{eq: smooth RE 1}
H_{\alpha}^{(\varepsilon)}(X) = \frac1{1-\alpha} \; \min_{\mu \in B^{(\varepsilon)}(P_X)}
\log \sum_{x \in \set{X}} \mu^{\alpha}(x)
\end{align}
where $B^{(\varepsilon)}(P_X)$ is the following subset of sub-probability measures defined on $\set{X}$:
\begin{align} \label{eq: smooth RE 2}
B^{(\varepsilon)}(P_X) \triangleq \left\{ \mu \colon \set{X} \to [0,1]:
\, \sum_{x \in \set{X}} \mu(x) \geq 1-\varepsilon,
\quad \mu(x) \leq P_X(x), \; \forall x \in \set{X} \right\}.
\end{align}
The $\varepsilon$-smooth R\'{e}nyi entropy becomes the R\'{e}nyi entropy when $\varepsilon=0$, i.e.,
\begin{align}  \label{eq: smooth RE 3}
H_{\alpha}^{(0)}(X) = H_{\alpha}(X)
\end{align}
for all $\alpha \in (0,1) \cup (1, \infty)$.
\end{definition}
Properties of $H_{\alpha}^{(\varepsilon)}(X)$ were studied in \cite{Koga13} and \cite{RennerW05}.

\begin{lemma} \cite[Theorem~1]{Koga13}
\label{lemma: Koga13}
Let $X$ be a random variable taking values on a finite set
$\set{X}=\{x_1, \ldots, x_M\}$, whose elements are
ordered such that
\begin{align} \label{eq: 20171026-a}
P_X(x_1) \geq P_X(x_2) \geq \ldots \geq P_X(x_M),
\end{align}
and let $\varepsilon \in [0,1)$. Then,
\begin{enumerate}[a)]
\item \label{lemma: Koga13-a}
For $\alpha \in (0,1)$, the minimum in the right side of \eqref{eq: smooth RE 1} is achieved
by $\mu_1 \in B^{(\varepsilon)}(P_X)$ given by
\begin{align}  \label{eq: 20171026-b}
\mu_1(x_i) =
\begin{dcases}
P_X(x_i), & \quad i = 1, \ldots, J_{\varepsilon}-1 \\
1-\varepsilon - \sum_{j=1}^{J_{\varepsilon}-1} P_X(x_j), & \quad i=J_{\varepsilon} \\
0, & \quad i = J_{\varepsilon}+1, \ldots, M
\end{dcases}
\end{align}
with
\begin{align}  \label{eq: 20171026-c}
J_{\varepsilon} = \min \left\{1 \leq j \leq M \colon \sum_{i=1}^j P_X(x_i) \geq 1-\varepsilon \right\}.
\end{align}
\item \label{lemma: Koga13-b}
For $\alpha > 1$, the minimum in the right side of \eqref{eq: smooth RE 1} is achieved
by $\mu_2 \in B^{(\varepsilon)}(P_X)$ given by
\begin{align}  \label{eq: 20171026-d}
\mu_2(x_i) =
\begin{dcases}
\beta, & \quad i = 1, \ldots, K_{\beta} \\
P_X(x_i), & \quad i = K_{\beta}+1, \ldots, M
\end{dcases}
\end{align}
where $\beta \in (0,1)$ and $K_{\beta} \in \{1, \ldots, M\}$ are jointly selected such that
\begin{align}
\label{eq: 20171026-e}
& \sum_{i=1}^M \mu_2(x_i) = 1-\varepsilon, \\
\label{eq: 20171026-f}
& K_{\beta} = \max \bigl\{ 1 \leq j \leq M \colon P_X(x_j) \geq \beta \bigr\}.
\end{align}
\end{enumerate}
\end{lemma}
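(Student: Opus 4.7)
The plan is to recast both parts as the single constrained minimization $\min_{\mu \in B^{(\varepsilon)}(P_X)} F(\mu)$ with $F(\mu) := \sum_{x \in \set{X}} \mu^\alpha(x)$, using that $\frac{1}{1-\alpha} \log(\cdot)$ is strictly monotone in both cases. Since $t \mapsto t^\alpha$ is strictly increasing on $[0,\infty)$, $F$ is also strictly increasing in each coordinate, so any minimizer has $\sum_x \mu(x) = 1-\varepsilon$. Moreover $F$ is a symmetric sum of $\phi(\mu(x))$ with $\phi(t) = t^\alpha$ concave when $\alpha \in (0,1)$ and convex when $\alpha > 1$; hence $F$ is Schur-concave in the first case and Schur-convex in the second. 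Consequently, part (a) reduces to showing $\mu_1$ majorizes every feasible $\mu$ with total mass $1-\varepsilon$, while part (b) reduces to showing every such $\mu$ majorizes $\mu_2$.

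For part (a), I would verify $\sum_{i=1}^K \mu_{1,[i]} \ge \sum_{i=1}^K \mu_{[i]}$ for each $K$, where $\mu_{[1]} \ge \mu_{[2]} \ge \ldots$ denotes the nonincreasing rearrangement of $\mu$. By construction $\mu_{1,[i]} = P_X(x_i)$ for $i < J_\varepsilon$, and $\sum_{i=1}^{J_\varepsilon} \mu_{1,[i]} = 1-\varepsilon$ with $\mu_{1,[i]} = 0$ for $i > J_\varepsilon$. For $K < J_\varepsilon$ the required bound is $\sum_{i=1}^K \mu_{[i]} \le \sum_{i=1}^K P_X(x_i)$; since $\mu \le P_X$ pointwise, at least $i$ coordinates of $P_X$ dominate $\mu_{[i]}$, yielding $\mu_{[i]} \le P_X(x_i)$ and hence the claim by summation. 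For $K \ge J_\varepsilon$ the left side equals the total mass $1-\varepsilon$, which trivially upper-bounds the top-$K$ sum of $\mu$. This establishes $\mu_1 \succeq \mu$ in majorization order and, by Schur-concavity, $F(\mu_1) \le F(\mu)$.

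For part (b), I would split on $K \le K_\beta$ versus $K > K_\beta$. Applying $\mu \le P_X$ to indices $i > K_\beta$ and using $\sum_{i > K_\beta} P_X(x_i) = 1-\varepsilon - K_\beta \beta$ from \eqref{eq: 20171026-e} gives $\sum_{i \le K_\beta} \mu(x_i) \ge K_\beta \beta$, whence $\sum_{i=1}^{K_\beta} \mu_{[i]} \ge K_\beta \beta$. Because the top-$K$ averages of a nonincreasing sequence are themselves nonincreasing in $K$, the bound extends to $\sum_{i=1}^K \mu_{[i]} \ge K\beta = \sum_{i=1}^K \mu_{2,[i]}$ for every $K \le K_\beta$. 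For $K > K_\beta$, taking $T^{\star} := \{x_{K+1}, \ldots, x_M\}$ as a witness set of size $M-K$ gives $\sum_{i=K+1}^M \mu_{[i]} \le \sum_{x \in T^{\star}} \mu(x) \le \sum_{i=K+1}^M P_X(x_i)$; subtracting from the common total $1-\varepsilon$ and using $\mu_{2,[i]} = P_X(x_i)$ for $i > K_\beta$ yields the partial-sum inequality. Hence $\mu \succeq \mu_2$, and Schur-convexity gives $F(\mu) \ge F(\mu_2)$.

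The main technical subtlety lies in the two-range split for (b), specifically in propagating the $K = K_\beta$ bound downward via monotonicity of sorted averages, and in correctly aligning $\mu_{2,[i]}$ with $P_X(x_i)$ when $i > K_\beta$, which relies on the placement $P_X(x_{K_\beta+1}) < \beta \le P_X(x_{K_\beta})$ enforced by \eqref{eq: 20171026-f}. A KKT-based alternative for (b) could be pursued by writing the stationarity condition $\alpha \mu(x)^{\alpha-1} = \lambda - \gamma_x + \nu_x$ of the convex program and reading off the capped form of $\mu_2$ directly; however, the majorization route is preferable here because it exhibits the unified structure shared with part (a).
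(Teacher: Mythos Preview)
Your proof is correct. Note, however, that the paper does not supply its own proof of this lemma: it is quoted verbatim from \cite[Theorem~1]{Koga13} and used as a black box. Your majorization argument---reducing to the simplex $\sum_x \mu(x)=1-\varepsilon$, then invoking Schur-concavity/convexity of $\mu\mapsto\sum_x\mu^\alpha(x)$ and showing that $\mu_1$ majorizes (resp.\ $\mu_2$ is majorized by) every feasible $\mu$---is precisely the route taken in Koga's original paper, as its very title (``Characterization of the smooth R\'enyi entropy using majorization'') suggests. So your approach is not merely correct but coincides with the cited source.
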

\begin{remark} \label{remark: koga13}
The sub-probability measures $\mu_1, \mu_2 \in B^{(\varepsilon)}(P_X)$ in Lemma~\ref{lemma: Koga13}
are {\em independent} of $\alpha$ for $\alpha \in (0,1)$ or $\alpha > 1$, respectively.
\end{remark}

\begin{lemma} \label{lemma: 20171029}
Under the assumption in Lemma~\ref{lemma: Koga13}, the following inequalities hold for $\varepsilon \in (0,1)$:
\begin{enumerate}[a)]
\item If $\alpha \in (0,1)$
\begin{align}  \label{eq: 20171029-a}
\frac1{\alpha-1} \, \log \frac1{1-\varepsilon} \leq H_{\alpha}^{(\varepsilon)}(X)
\leq \frac1{\alpha-1} \, \log \left( \frac1{1-\varepsilon} \right) + \log \left( \frac1{\mu_1(x_{_{\hspace*{-0.5mm} J_{\varepsilon}}})} \right)
\end{align}
with $J_{\varepsilon}$ as defined in \eqref{eq: 20171026-c}.
\item If $\alpha > 1$
\begin{align}  \label{eq: 20171029-b}
\frac1{\alpha-1} \, \log \frac1{1-\varepsilon} \leq H_{\alpha}^{(\varepsilon)}(X)
\leq \frac1{\alpha-1} \, \log \left( \frac1{1-\varepsilon} \right) + \log \left(\frac1{\min\{P_{\min}, \beta\}}\right)
\end{align}
where $P_{\min}$ denotes the minimal non-zero mass of $P_X$, and $\beta$ is defined in \eqref{eq: 20171026-d}--\eqref{eq: 20171026-f}.
\end{enumerate}
\end{lemma}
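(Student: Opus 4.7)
My plan is to use Lemma~\ref{lemma: Koga13} to replace the minimization in the definition \eqref{eq: smooth RE 1} with an explicit sub-probability measure, and then bound $\sum_x \mu^{\alpha}(x)$ above and below by factoring $\mu^{\alpha}(x) = \mu(x) \cdot \mu^{\alpha-1}(x)$ and controlling the factor $\mu^{\alpha-1}(x)$ using extremal values of $\mu$. Throughout, I have to be careful that the scaling $\frac{1}{1-\alpha}$ is negative for $\alpha>1$, so bounds on $\sum_x \mu^{\alpha}(x)$ translate into bounds on $H_{\alpha}^{(\varepsilon)}(X)$ with flipped direction in that regime.

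For the \emph{lower bounds}, let $\mu$ stand for $\mu_1$ when $\alpha\in(0,1)$ and for $\mu_2$ when $\alpha>1$; in either case Lemma~\ref{lemma: Koga13} gives $H_{\alpha}^{(\varepsilon)}(X) = \frac{1}{1-\alpha}\log \sum_{x\in\set{X}}\mu^{\alpha}(x)$ with $\sum_x \mu(x)=1-\varepsilon$ (the constraint is tight because the minimizer makes \eqref{eq: smooth RE 2} tight). Writing $\mu^{\alpha}(x)=\mu(x)\,\mu^{\alpha-1}(x)$ and using that $\mu(x)\in[0,1]$, we have $\mu^{\alpha-1}(x)\geq 1$ when $\alpha\in(0,1)$ and $\mu^{\alpha-1}(x)\leq 1$ when $\alpha>1$. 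Consequently $\sum_x \mu^{\alpha}(x)\geq 1-\varepsilon$ in the first case and $\sum_x \mu^{\alpha}(x)\leq 1-\varepsilon$ in the second. Applying $\frac{1}{1-\alpha}\log(\cdot)$, whose sign depends on whether $\alpha<1$ or $\alpha>1$, yields $H_{\alpha}^{(\varepsilon)}(X)\geq \frac{1}{\alpha-1}\log\frac{1}{1-\varepsilon}$ in both cases.

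For the \emph{upper bounds}, the idea is to identify the smallest nonzero value $\mu^{*}$ attained by $\mu$, then bound $\mu^{\alpha-1}(x)$ in the opposite direction. Inspecting \eqref{eq: 20171026-b}, the nonzero values of $\mu_{1}$ are $P_X(x_1)\geq\cdots\geq P_X(x_{J_\varepsilon -1})$ together with $\mu_1(x_{_{\! J_{\varepsilon}}})\leq P_X(x_{J_\varepsilon})$, so the minimum is $\mu^{*}=\mu_{1}(x_{_{\! J_{\varepsilon}}})$. Similarly, from \eqref{eq: 20171026-d}--\eqref{eq: 20171026-f}, the nonzero values of $\mu_{2}$ are either $\beta$ (for $i\leq K_{\beta}$) or $P_X(x_i)$ with $P_X(x_i)>0$ (for $i>K_{\beta}$), and so $\mu^{*}=\min\{\beta,P_{\min}\}$. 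In the region where $\mu(x)>0$ we thus have $\mu(x)\geq \mu^{*}$, which gives $\mu^{\alpha-1}(x)\leq (\mu^{*})^{\alpha-1}$ for $\alpha\in(0,1)$ and $\mu^{\alpha-1}(x)\geq (\mu^{*})^{\alpha-1}$ for $\alpha>1$. In either case
\[
\sum_{x\in\set{X}}\mu^{\alpha}(x)\;\lesseqgtr\;(\mu^{*})^{\alpha-1}\sum_{x\in\set{X}}\mu(x)=(1-\varepsilon)\,(\mu^{*})^{\alpha-1},
\]
with the inequality being $\leq$ when $\alpha\in(0,1)$ and $\geq$ when $\alpha>1$. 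Taking $\frac{1}{1-\alpha}\log$, the sign flip in the regime $\alpha>1$ again lines up correctly, producing in both cases $H_{\alpha}^{(\varepsilon)}(X)\leq \frac{1}{\alpha-1}\log\frac{1}{1-\varepsilon}+\log\frac{1}{\mu^{*}}$, which is exactly \eqref{eq: 20171029-a} or \eqref{eq: 20171029-b}.

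The computations are entirely elementary; the main thing to get right is keeping track of the direction of each inequality after multiplying or dividing by $\frac{1}{1-\alpha}$, and pinning down the correct value of $\mu^{*}$ from the structure of the optimizers in Lemma~\ref{lemma: Koga13}. No deeper machinery seems needed.
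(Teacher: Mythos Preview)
Your proof is correct and follows essentially the same approach as the paper: both arguments bound $\mu^{\alpha}(x)$ pointwise by writing $\mu^{\alpha}(x)=\mu(x)\,\mu^{\alpha-1}(x)$ and sandwiching $\mu^{\alpha-1}(x)$ between $1$ and $(\mu^{*})^{\alpha-1}$, where $\mu^{*}$ is the smallest nonzero value of the optimizer from Lemma~\ref{lemma: Koga13}, then sum and apply $\frac{1}{1-\alpha}\log(\cdot)$. The paper compresses these steps into the two displayed pointwise inequalities $\mu_1(x)\leq\mu_1^{\alpha}(x)\leq\mu_1(x)\,\mu_1^{\alpha-1}(x_{J_{\varepsilon}})$ and $\mu_2(x)\,\min\{P_{\min}^{\alpha-1},\beta^{\alpha-1}\}\leq\mu_2^{\alpha}(x)\leq\mu_2(x)$, but the content is the same.
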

\begin{proof}
From \eqref{eq: 20171026-b} and \eqref{eq: 20171026-d}, it follows that for all $x \in \set{X}$
\begin{align}
\label{eq: 20171029-c}
& \mu_1(x) \leq \mu_1^{\alpha}(x) \leq \mu_1(x) \, \mu_1^{\alpha-1}(x_{_{\hspace*{-0.5mm} J_{\varepsilon}}}), \quad \quad \alpha \in (0,1), \\
\label{eq: 20171029-d}
& \mu_2(x) \, \min\{P_{\min}^{\alpha-1}, \beta^{\alpha-1}\} \leq \mu_2^{\alpha}(x) \leq \mu_2(x), \quad \alpha > 1.
\end{align}
The bounds in \eqref{eq: 20171029-a} and \eqref{eq: 20171029-b} can be verified from Definition~\ref{def: smooth Renyi entropy},
\eqref{eq: 20171026-b}, \eqref{eq: 20171026-e}, \eqref{eq: 20171029-c} and \eqref{eq: 20171029-d}.
\end{proof}
\begin{remark} \label{remark: 20171029}
The left inequality in \eqref{eq: 20171029-a} can be obtained from \cite[Lemma~2]{RennerW05}.
\end{remark}

The following result readily follows from Lemma~\ref{lemma: 20171029}.
\begin{corollary} \label{corollary: 20171028}
If $X$ takes values on a finite set, then for all $\varepsilon \in (0,1)$,
\begin{align}
\label{eq: 20171028-b}
& \lim_{\alpha \uparrow 1} H_{\alpha}^{(\varepsilon)}(X) = -\infty, \\
\label{eq: 20171028-c}
& \lim_{\alpha \downarrow 1} H_{\alpha}^{(\varepsilon)}(X) = +\infty.
\end{align}
\end{corollary}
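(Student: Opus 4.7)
The plan is to derive both limits directly from the two-sided bounds established in Lemma~\ref{lemma: 20171029}, observing that the additive constant terms on the right-hand sides are finite and independent of $\alpha$ (by Remark~\ref{remark: koga13}), while the multiplicative factor $\frac{1}{\alpha-1}\log\frac{1}{1-\varepsilon}$ diverges in a sign-controlled way as $\alpha \to 1$.

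First I would handle the limit from below, i.e.\ $\alpha \uparrow 1$. I would invoke the upper bound in \eqref{eq: 20171029-a}, namely
\begin{align*}
H_{\alpha}^{(\varepsilon)}(X) \leq \frac{1}{\alpha-1}\,\log\frac{1}{1-\varepsilon} + \log\frac{1}{\mu_1(x_{_{\hspace*{-0.5mm} J_{\varepsilon}}})}.
\end{align*}
Here $\log\frac{1}{1-\varepsilon} > 0$ since $\varepsilon \in (0,1)$, and by Remark~\ref{remark: koga13} the sub-probability measure $\mu_1$ does not depend on $\alpha$, so the second term is a fixed finite constant (one checks $\mu_1(x_{_{\hspace*{-0.5mm} J_{\varepsilon}}}) > 0$ from \eqref{eq: 20171026-b} together with the minimality in \eqref{eq: 20171026-c}). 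As $\alpha \uparrow 1$, the factor $\frac{1}{\alpha-1}$ tends to $-\infty$, so the upper bound diverges to $-\infty$, giving \eqref{eq: 20171028-b}.

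Next I would handle $\alpha \downarrow 1$ using the lower bound in \eqref{eq: 20171029-b},
\begin{align*}
H_{\alpha}^{(\varepsilon)}(X) \geq \frac{1}{\alpha-1}\,\log\frac{1}{1-\varepsilon}.
\end{align*}
Again $\log\frac{1}{1-\varepsilon} > 0$, and now the factor $\frac{1}{\alpha-1}$ is positive and tends to $+\infty$ as $\alpha \downarrow 1$, so the lower bound, and hence $H_{\alpha}^{(\varepsilon)}(X)$ itself, diverges to $+\infty$. This establishes \eqref{eq: 20171028-c}.

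There is essentially no obstacle here beyond verifying that the constants entering the bounds of Lemma~\ref{lemma: 20171029} are finite and independent of $\alpha$. The only mild subtlety worth spelling out is that for $\varepsilon \in (0,1)$ the index $J_{\varepsilon}$ defined in \eqref{eq: 20171026-c} exists (because $\sum_{i=1}^M P_X(x_i) = 1 > 1-\varepsilon$) and yields $\mu_1(x_{_{\hspace*{-0.5mm} J_{\varepsilon}}}) > 0$ by the minimality of $J_{\varepsilon}$; this ensures the additive term in the upper bound is finite. The lower-bound argument for the second limit does not even require such a check, since it uses only the trivial part of \eqref{eq: 20171029-b}.
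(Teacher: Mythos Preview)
Your proposal is correct and follows exactly the approach the paper takes: the paper simply states that the corollary ``readily follows from Lemma~\ref{lemma: 20171029},'' and your argument spells out precisely how the two-sided bounds there, combined with the $\alpha$-independence of $\mu_1$ and $\mu_2$ noted in Remark~\ref{remark: koga13}, force the claimed divergences.
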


\begin{remark} \label{remark: 20171028}
In contrast to $H_{\alpha}(X)$ which is non-negative, continuous and monotonically non-increasing
in $\alpha$ when $X$ takes values on a finite set, Corollary~\ref{corollary: 20171028} shows that these
properties do not fully extend to $H_{\alpha}^{(\varepsilon)}(X)$ with $\varepsilon \in (0,1)$.
Bearing in mind the discontinuity shown in Corollary~\ref{corollary: 20171028}, $H_{\alpha}^{(\varepsilon)}(X)$
is monotonically non-increasing on both $\alpha \in (0,1)$ and $\alpha \in (1,\infty)$ \cite[Lemma~1]{RennerW05}.
\end{remark}

\section{Improved Bounds on Guessing Moments}
\label{section: bounds for guessing}

This section provides improved upper and lower bounds on the guessing moments
of a discrete random variable. The upper bounds in this section correspond to
the case where the guessing function is a ranking function.

\subsection{Key result}
\label{subsec: key result}

\begin{theorem} \label{thm: key result}
Given a discrete random variable $X$ taking values on a set $\set{X}$,
a function $g \colon \set{X} \to (0, \infty)$, and a scalar $\rho \neq 0$, then
\begin{enumerate}[1)]
\item \label{item: key1}
\begin{align}
& \sup_{\beta \in (-\rho, +\infty) \setminus \{0\}}
\frac{1}{\beta} \left[ H_{\frac{\beta}{\beta+\rho}}(X)
- \log \sum_{x \in \set{X}} g^{-\beta}(x) \right] \nonumber \\[0.1cm]
\label{eq: key result}
\leq & \frac1\rho \log \expectation[g^{\rho}(X)] \\[-0.1cm]
\label{eq: key result-b}
\leq & \inf_{\beta \in (-\infty, -\rho) \setminus \{0\}}
\frac{1}{\beta} \left[ H_{\frac{\beta}{\beta+\rho}}(X)
- \log \sum_{x \in \set{X}} g^{-\beta}(x) \right].
\end{align}
\item \label{item: key3}
For $\tau \in \Reals$, define the probability mass function
\begin{align} \label{eq: 20171013a}
Q_{\tau}(x) = \frac{g^{-\tau}(x)}{\underset{a \in \set{X}}{\sum} g^{-\tau}(a)}, \quad x \in \set{X},
\end{align}
provided that the sum in the right side of \eqref{eq: 20171013a} is finite. The following results hold:
\begin{enumerate}[a)]
\item \label{item: key3-a}
If $P_X = Q_{\rho}$ and $\set{X}$ is a finite set, then
\begin{align}
\label{eq: 20171014-a}
\frac1\rho \log \expectation[g^{\rho}(X)] =
-\frac1\rho \log \left( \frac1{|\set{X}|} \sum_{x \in \set{X}} g^{-\rho}(x) \right).
\end{align}
\item \label{item: key3-b}
If $P_X = Q_{\nu}$ with $\nu>0$ and $\nu \neq \rho$, then the supremum in the left side of
\eqref{eq: key result} is attained at $\beta = \nu - \rho$, and the inequality in \eqref{eq: key result}
becomes an identity. Conversely, if \eqref{eq: key result} is an identity and the supremum is attained at
$\beta^{\ast} \in (-\rho, +\infty) \setminus \{0\}$, then $P_X = Q_{\rho+\beta^{\ast}}$.
\item \label{item: key3-c}
If $P_X = Q_{\nu}$ with $\nu<0$ and $\nu \neq \rho$, then the infimum in the right side of
\eqref{eq: key result-b} is attained at $\beta = \nu - \rho$, and the inequality in \eqref{eq: key result-b}
becomes an identity. Conversely, if \eqref{eq: key result-b} is an identity and the infimum is attained at
$\beta^{\ast} \in (-\infty, -\rho) \setminus \{0\}$, then $P_X = Q_{\rho+\beta^{\ast}}$.
\end{enumerate}
\end{enumerate}
\end{theorem}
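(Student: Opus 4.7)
My plan is to derive the entire theorem from a single application of H\"older's inequality (standard or reverse, depending on the parameter regime), together with its equality condition.

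\emph{Step 1: factorization and H\"older.} First, I will substitute $\alpha = \tfrac{\beta}{\beta+\rho}$, under which $1-\alpha = \tfrac{\rho}{\beta+\rho}$, $\tfrac{\alpha}{1-\alpha} = \tfrac{\beta}{\rho}$, and $\rho\alpha - \beta(1-\alpha) = 0$. The last identity yields the pointwise factorization
\begin{align*}
P_X^{\alpha}(x) = \bigl[P_X(x)\, g^{\rho}(x)\bigr]^{\alpha} \cdot \bigl[g^{-\beta}(x)\bigr]^{1-\alpha}, \quad x \in \set{X}.
\end{align*}
Summing over $\set{X}$ and applying H\"older to $\sum_x A_x B_x$ with $A_x = [P_X(x)\,g^{\rho}(x)]^{\alpha}$, $B_x = [g^{-\beta}(x)]^{1-\alpha}$ and conjugate exponents $(p,q) = (1/\alpha,\, 1/(1-\alpha))$---so that $A_x^{p} = P_X(x)\,g^{\rho}(x)$ and $B_x^{q} = g^{-\beta}(x)$---produces
\begin{align*}
\sum_{x \in \set{X}} P_X^{\alpha}(x) \;\lessgtr\; \bigl(\expectation[g^{\rho}(X)]\bigr)^{\alpha}\, \biggl(\sum_{x \in \set{X}} g^{-\beta}(x)\biggr)^{1-\alpha},
\end{align*}
with $\leq$ when $\alpha \in (0,1)$ (both $p,q > 1$; standard H\"older) and $\geq$ when $\alpha \in (-\infty,0)\cup(1,\infty)$ (reverse H\"older, which applies because $P_X\, g^{\rho}$ and $g^{-\beta}$ are strictly positive on the support of $P_X$).

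\emph{Step 2: extraction of the target bound.} I will then take $\log$, divide by $1-\alpha$ (recovering $H_{\alpha}(X)$ on the left), rearrange, and divide by $\beta$. Tracking the signs of $1-\alpha$ and $\beta$ in all admissible $(\rho,\beta)$ regimes, the net effect is
\begin{align*}
\frac{1}{\beta}\Bigl[H_{\beta/(\beta+\rho)}(X) - \log \sum_{x \in \set{X}} g^{-\beta}(x)\Bigr] \;\leq\; \frac{1}{\rho}\log \expectation[g^{\rho}(X)] \quad \text{for } \beta \in (-\rho,\infty)\setminus\{0\},
\end{align*}
and the reverse inequality for $\beta \in (-\infty,-\rho)$. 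Taking the supremum over the first range and the infimum over the second gives \eqref{eq: key result}--\eqref{eq: key result-b}.

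\emph{Step 3: Part~\ref{item: key3}.} Item \ref{item: key3-a} follows by direct substitution: $P_X = Q_{\rho}$ gives $\expectation[g^{\rho}(X)] = |\set{X}|/\sum_x g^{-\rho}(x)$, which is \eqref{eq: 20171014-a}. For items \ref{item: key3-b}--\ref{item: key3-c}, I set $\beta = \nu - \rho$, giving $\rho+\beta = \nu$ and $\alpha = 1-\rho/\nu$; plugging $P_X = Q_{\nu}$ into both sides of the bound from Step~2 and simplifying shows that they coincide, so the supremum (resp.\ infimum) is attained at $\beta = \nu-\rho$. For the converses, I invoke the equality case of H\"older's step: equality in Step~1 requires $A_x^{p} \propto B_x^{q}$, i.e., $P_X(x)\, g^{\rho}(x) \propto g^{-\beta^{\ast}}(x)$, which rearranges to $P_X(x) \propto g^{-(\rho+\beta^{\ast})}(x)$, namely $P_X = Q_{\rho+\beta^{\ast}}$.

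\emph{Anticipated obstacle.} The argument is short, but the bookkeeping is delicate. Three sign-dependent operations compose: the direction of H\"older depends on the signs of $\alpha$ and $\alpha-1$, while the subsequent divisions by $1-\alpha$ and by $\beta$ each flip the inequality when negative. The real verification is that, on each side of $\beta = -\rho$ and for both signs of $\rho$, the composed sign flips yield exactly the $\leq$ in \eqref{eq: key result} and the $\geq$ in \eqref{eq: key result-b}.
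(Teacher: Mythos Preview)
Your proof is correct and takes a genuinely different route from the paper's. The paper works through the R\'enyi divergence: it introduces the scaled distribution $R_{\alpha}(x) \propto P_X^{1/\alpha}(x)$ and the normalized weights $Q_{\beta}(x) \propto g^{-\beta}(x)$, computes $D_{\alpha}(R_{\alpha}\|Q_{\beta})$ explicitly (with $\alpha = (\beta+\rho)/\beta$), and then invokes the sign of $D_{\alpha}$ from Lemma~\ref{lemma: RD} to obtain the inequalities; the equality cases in Item~\ref{item: key3} then reduce to $D_{\alpha}(R_{\alpha}\|Q_{\beta}) = 0 \Leftrightarrow R_{\alpha} = Q_{\beta}$. (The paper actually presents two variants: a bootstrap from the $\beta=1$ case via the substitution $(\rho,g)\leftarrow(\rho/\beta,\, g^{\beta})$, and the direct $D_{\alpha}(R_{\alpha}\|Q_{\beta})$ identity in Remark~\ref{remark: alternative proof - Th 1}.) You instead apply H\"older's inequality directly to the pointwise factorization $P_X^{\alpha} = (P_X g^{\rho})^{\alpha}(g^{-\beta})^{1-\alpha}$, which the paper itself notes---in the remark following Corollary~\ref{corollary: beta=1}---is the route taken by Arikan and Courtade--Verd\'u for the special case $\beta=1$; your contribution is to carry that H\"older argument through the full $\beta$-range in one shot. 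The R\'enyi-divergence framing packages the sign analysis into the single dichotomy ``$\alpha>0$ vs.\ $\alpha<0$'' and makes the equality conditions entirely structural, whereas your approach is more elementary but, as you correctly anticipate, requires chasing three composed sign flips (H\"older vs.\ reverse H\"older, division by $1-\alpha$, division by $\beta$) across the four $(\rho,\beta)$ quadrants. Both arrive at the same place; the divergence route is cleaner for the equality analysis, while yours avoids introducing the auxiliary distributions $R_{\alpha}$.
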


\begin{proof}
It is instructive to prove a weaker result first where instead of optimizing with respect to $\beta$,
we simply take $\beta=1$ in the upper/lower bound.

Let $\alpha \neq 0$, and let $R_{\alpha}$ be the scaled probability mass function defined by
\begin{align} \label{R dist}
& R_{\alpha}(x) = \frac{P_X^{\frac1{\alpha}}(x)}{\underset{a \in \set{X}}{\sum} P_X^{\frac1{\alpha}}(a)}, \quad x \in \set{X}.
\end{align}
Then,
\begin{align}
D_{1+\rho}(R_{1+\rho} \| Q_1)
& = \frac1{\rho} \, \log \left( \sum_{x \in \set{X}} P_X(x) g^{\rho}(x) \right)
+ \log \left( \sum_{x \in \set{X}} \frac1{g(x)} \right)
- \frac{1+\rho}{\rho} \, \log \left( \sum_{x \in \set{X}} P_X^{\frac1{1+\rho}}(x) \right) \label{eq2} \\
&= \frac1{\rho} \log \expectation[g^{\rho}(X)] + \log \left( \sum_{x \in \set{X}} \frac1{g(x)} \right)
- H_{\frac1{1+\rho}}(X) \label{eq3}
\end{align}
where \eqref{eq2} follows from \eqref{eq:RD1}, \eqref{eq: 20171013a} and
\eqref{R dist}; \eqref{eq3} is due to \eqref{eq: Renyi entropy}.
Since $D_{1+\rho}(R_{1+\rho} \| Q_1) \geq 0$ for $\rho > -1$ (see Lemma~\ref{lemma: RD}),
the non-negativity of the right side of \eqref{eq3} implies that
\begin{align} \label{eq: 20171013b}
\frac1\rho \, \log \expectation[g^{\rho}(X)] \geq
\frac{1}{\beta} \left[ H_{\frac{\beta}{\beta+\rho}}(X)
- \log \sum_{x \in \set{X}} g^{-\beta}(x) \right]
\end{align}
holds for $\beta=1$ and $\rho \in (-1, 0) \cup (0, \infty)$; furthermore, since
$D_{1+\rho}(R_{1+\rho} \| Q_1) \leq 0$ for $\rho < -1$ (see Lemma~\ref{lemma: RD}), the
non-positivity of the right side of \eqref{eq3} implies that
\begin{align} \label{eq: 20171013c}
\frac1\rho \, \log \expectation[g^{\rho}(X)] \leq
\frac{1}{\beta} \left[ H_{\frac{\beta}{\beta+\rho}}(X)
- \log \sum_{x \in \set{X}} g^{-\beta}(x) \right]
\end{align}
holds for $\beta=1$ and $\rho \in (-\infty, -1)$.

We proceed to show \eqref{eq: 20171013b} and \eqref{eq: 20171013c} for the range of $\beta$
specified in \eqref{eq: key result} and \eqref{eq: key result-b} respectively.
Consider next $\rho \neq 0$ and $\beta \neq 0$. Let $\widetilde{\rho} \triangleq \frac{\rho}{\beta}$.

To prove \eqref{eq: key result}:
\begin{enumerate}[i)]
\item If $\beta \in (-\rho, 0)$ with $\rho \in (0, \infty)$, then $\widetilde{\rho} \in (-\infty, -1)$.
Since $1 \in (-\infty, -\widetilde{\rho})$, \eqref{eq: 20171013b} follows from the specialized version
of \eqref{eq: 20171013c} with $\beta \leftarrow 1$ and $(\rho, g) \leftarrow (\widetilde{\rho}, g^{\beta})$.
\item If $\beta \in (0, \infty)$ with $\rho \in (0, \infty)$, then $\widetilde{\rho} \in (0, \infty)$;
and if $\beta \in (-\rho, \infty)$ with $\rho \in (-\infty, 0)$, then $\widetilde{\rho} \in (-1, 0)$.
In both cases, $1 \in (-\widetilde{\rho}, \infty)$ and, consequently, \eqref{eq: 20171013b}
follows from its specialized version with $\beta \leftarrow 1$ and $(\rho, g) \leftarrow (\widetilde{\rho}, g^{\beta})$.
\end{enumerate}

To prove \eqref{eq: key result-b}:
\begin{enumerate}[i)]
\setcounter{enumi}{2}
\item If $\beta \in (-\infty, -\rho)$ with $\rho \in (0, \infty)$, then $\widetilde{\rho} \in (-1, 0)$;
and if $\beta \in (-\infty, 0)$ with $\rho \in (-\infty, 0)$, then $\widetilde{\rho} \in (0, \infty)$.
In both cases $1 \in (-\widetilde{\rho}, \infty)$, and therefore
\eqref{eq: 20171013c} follows from the specialized version of \eqref{eq: 20171013b}
with $\beta \leftarrow 1$ and $(\rho, g) \leftarrow (\widetilde{\rho}, g^{\beta})$.
\item If $\beta \in (0, -\rho)$ with $\rho \in (-\infty, 0)$, then $\widetilde{\rho} \in (-\infty, -1)$; this
yields \eqref{eq: 20171013c} from its specialized version with $\beta \leftarrow 1$
and $(\rho, g) \leftarrow (\widetilde{\rho}, g^{\beta})$.
\end{enumerate}

To prove Item~\ref{item: key3}):
\begin{itemize}
\item
Suppose the set $\set{X}$ is finite. By letting $\tau = \nu$ in \eqref{eq: 20171013a}, the identity in \eqref{eq: 20171014-a} follows easily.
\item
Suppose that $P_X = Q_{\nu}$ with $\nu>0$ and $\nu \neq \rho$. Let $\beta^\ast = \nu - \rho$ and
let $(\rho, g) \leftarrow (\frac{\rho}{\beta^\ast}, g^{\beta^\ast})$, which yields $Q_1 \leftarrow Q_{\beta^\ast}$
(see \eqref{eq: 20171013a}), $R_{1+\rho} \leftarrow R_{\nu/{\beta^\ast}}$, and $\frac1{1+\rho} \leftarrow \frac{\beta^\ast}{\beta^\ast+\rho}$.
Then, from \eqref{eq2}--\eqref{eq3},
\begin{align}
\label{eq: 20171014-b}
D_{\nu/{\beta^\ast}}(R_{\nu/{\beta^\ast}} \| Q_{\beta^\ast})
&= \frac{\beta^\ast}{\rho} \log \expectation[g^{\rho}(X)] + \log \left( \sum_{x \in \set{X}} \frac1{g^{\beta^\ast}(x)} \right)
- H_{\frac{\beta^\ast}{\beta^\ast+\rho}}(X).
\end{align}
Since by assumption $P_X = Q_{\nu}$, it is easy to verify from \eqref{eq: 20171013a} and \eqref{R dist} that $R_{\nu/{\beta^\ast}} = Q_{\beta^\ast}$.
Hence, both sides of \eqref{eq: 20171014-b} are equal to zero, which therefore implies that the supremum in the left side of
\eqref{eq: key result} is attained at $\beta = \beta^\ast \in (-\rho, +\infty) \setminus \{0\}$, and the inequality in
\eqref{eq: key result} becomes an identity. This proves the first part of Item~\ref{item: key3-b}). To prove its second part,
assume that \eqref{eq: key result} is an identity and the supremum is attained at $\beta^{\ast}$ in the range of $\beta$
specified in \eqref{eq: key result}. This implies that the right side of \eqref{eq: 20171014-b} is equal to zero, and
so is its left side. In view of Lemma~\ref{lemma: RD}, since $\nu=0$, it follows that $R_{\nu/{\beta^\ast}} = Q_{\beta^\ast}$
which then gives that $P_X = Q_{\nu}$.
\item
The proof of Item~\ref{item: key3-c}) is analogous to the proof of Item~\ref{item: key3-b}).
\end{itemize}
\end{proof}

\begin{remark} \label{remark: alternative proof - Th 1}
Theorem~\ref{thm: key result} can be proved in the following alternative way:
For $\beta \neq 0$, let
\begin{align} \label{eq: 20171013d}
\alpha = \frac{\rho+\beta}{\beta}.
\end{align}
It can be verified from \eqref{eq:RD1}, \eqref{eq: 20171013a} and \eqref{R dist} that
\begin{align} \label{eq4}
D_{\alpha}(R_\alpha \| Q_{\beta}) &= \frac{\beta}{\rho} \, \log \expectation[g^{\rho}(X)]
+ \log \left( \sum_{x \in \set{X}} g^{-\beta}(x) \right) - H_{\frac{\beta}{\beta+\rho}}(X),
\end{align}
and then, in view of Lemma~\ref{lemma: RD}, the following cases are considered for the free parameter $\beta$:
\begin{enumerate}[i)]
\item If $\beta \in (-\rho, \infty)$ and $\beta>0$, then $\alpha > 0$ and $D_{\alpha}(R_\alpha \| Q_\beta) \geq 0$;
hence, the right side of \eqref{eq4} is non-negative, and dividing it by $\beta$ gives \eqref{eq: 20171013b};
\item If $\beta \in (-\rho, \infty)$ and $\beta<0$, then $\alpha < 0$, and therefore $D_{\alpha}(R_\alpha \| Q_\beta) \leq 0$;
since the right side of \eqref{eq4} is non-positive, dividing it by $\beta$ gives \eqref{eq: 20171013b};
\item If $\beta \in (-\infty, -\rho)$ and $\beta<0$, then $\alpha > 0$ and $D_{\alpha}(R_\alpha \| Q_\beta) \geq 0$;
hence, the right side of \eqref{eq4} is non-negative, and dividing it by $\beta$ gives \eqref{eq: 20171013c};
\item If $\beta \in (-\infty, -\rho)$ and $\beta>0$, then $\alpha < 0$, and therefore $D_{\alpha}(R_\alpha \| Q_\beta) \leq 0$;
since the right side of \eqref{eq4} is non-positive, dividing it by $\beta$ gives \eqref{eq: 20171013c}.
\end{enumerate}
This gives the lower and upper bounds in \eqref{eq: key result} and \eqref{eq: key result-b}, respectively,
after an optimization of the right sides of \eqref{eq: 20171013b}
and \eqref{eq: 20171013c} over the free parameter $\beta \in (-\rho, \infty) \setminus \{0\}$ and
$\beta \in (-\infty, -\rho) \setminus \{0\}$, respectively. Item~\ref{item: key3}) can be proved
in a similar way to our earlier proof by relying on \eqref{eq: 20171013d}, \eqref{eq4},
and Lemma~\ref{lemma: RD}; note that in view of \eqref{eq: 20171013a}, \eqref{R dist} and
\eqref{eq: 20171013d}, if $\beta = \nu-\rho$, then $R_\alpha=Q_{\beta}$ if and only if
$P_X = Q_\nu$.
\end{remark}

\begin{remark}
For $\rho > 0$, the supremum over $\beta$ in the right side of \eqref{eq: key result}
involves negative orders of the R\'{e}nyi entropy whenever $\beta \in (-\rho, 0)$.
As shown in Example~\ref{example: optimal beta}, the optimal value of
$\beta \in (-\rho, \infty) \setminus \{0\}$ can be negative; furthermore, for
every such $\beta$, Theorem~\ref{thm: key result} asserts the existence of a
probability mass function for which \eqref{eq: key result} is achieved with
equality. Allowing R\'{e}nyi entropy of negative orders in
Theorem~\ref{thm: key result} is therefore beneficial.
\end{remark}

\par
The particularization of Item~\ref{item: key1}) in Theorem~\ref{thm: key result}
to $\beta=1$ yields the following, generally looser, bound:
\begin{corollary}\cite[Lemma~2]{CV2014a} \label{corollary: beta=1}
Let $X$ and $g$ be as in Theorem~\ref{thm: key result},
and $\rho \in (-1, 0) \cup (0, \infty)$. Then,
\begin{align} \label{eq: CV2014a - Lemma2}
\frac1{\rho} \, \log \expectation \bigl[ g^{\rho}(X) \bigr]
\geq H_{\frac1{1+\rho}}(X)-\log \sum_{x \in \set{X}} \frac1{g(x)}.
\end{align}
\end{corollary}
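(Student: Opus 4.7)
The plan is to derive Corollary~\ref{corollary: beta=1} as a direct specialization of part~\ref{item: key1}) of Theorem~\ref{thm: key result} with $\beta = 1$. Since the inequality in \eqref{eq: key result} holds with the supremum on the left over $\beta \in (-\rho, \infty) \setminus \{0\}$, any specific admissible choice of $\beta$ yields a valid (possibly weaker) lower bound on $\frac{1}{\rho}\log \expectation[g^{\rho}(X)]$.

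First I would verify that $\beta = 1$ lies in $(-\rho, \infty) \setminus \{0\}$ for every $\rho \in (-1,0) \cup (0,\infty)$. For $\rho > 0$ this is immediate since $-\rho < 0 < 1$, and for $\rho \in (-1,0)$ the condition $1 > -\rho = |\rho|$ holds because $|\rho| < 1$. Hence the choice $\beta = 1$ is always admissible in the supremum on the left side of \eqref{eq: key result}.

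Next I would substitute $\beta = 1$ into the bracketed expression
\begin{align*}
\frac{1}{\beta} \left[ H_{\frac{\beta}{\beta+\rho}}(X) - \log \sum_{x \in \set{X}} g^{-\beta}(x) \right],
\end{align*}
which collapses to $H_{\frac{1}{1+\rho}}(X) - \log \sum_{x \in \set{X}} \frac{1}{g(x)}$, exactly the right side of \eqref{eq: CV2014a - Lemma2}. Combining this with the fact that the supremum over $\beta$ is no smaller than its value at $\beta = 1$, and chaining with the left inequality of Theorem~\ref{thm: key result}, gives \eqref{eq: CV2014a - Lemma2}.

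There is no real obstacle here; the only point worth attention is the range check for $\beta = 1$, and (if one prefers a self-contained derivation rather than invoking the theorem) observing that in the alternative proof of Remark~\ref{remark: alternative proof - Th 1} the choice $\beta = 1$ corresponds to $\alpha = 1+\rho > 0$, so non-negativity of $D_{1+\rho}(R_{1+\rho}\|Q_1)$ in identity \eqref{eq4} yields \eqref{eq: CV2014a - Lemma2} directly. The bound of Corollary~\ref{corollary: beta=1} is therefore strictly weaker than Theorem~\ref{thm: key result} whenever the supremum in \eqref{eq: key result} is attained at some $\beta \ne 1$.
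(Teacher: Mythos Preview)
Your proposal is correct and matches the paper's approach exactly: the paper introduces Corollary~\ref{corollary: beta=1} precisely as ``the particularization of Item~\ref{item: key1}) in Theorem~\ref{thm: key result} to $\beta=1$,'' and your range check confirming $1 \in (-\rho,\infty)\setminus\{0\}$ for $\rho \in (-1,0)\cup(0,\infty)$ is the only detail needed. Your closing remark about the self-contained derivation via non-negativity of $D_{1+\rho}(R_{1+\rho}\|Q_1)$ also mirrors the paper, which in fact proves this $\beta=1$ case first (equations \eqref{eq2}--\eqref{eq3}) as the base step in establishing Theorem~\ref{thm: key result}.
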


\begin{remark}
The proof of Corollary~\ref{corollary: beta=1}, which serves
as a first step in proving Theorem~\ref{thm: key result},
differs from its proof in \cite[Lemma~2]{CV2014a} (see also \cite[Theorem~1]{Arikan96}
for a specialized version where $\rho > 0$). The proofs in \cite{Arikan96} and
\cite{CV2014a} rely on the reverse H\"{o}lder inequality, whereas
the proof here is based on the R\'{e}nyi divergence.
\end{remark}

\subsection{Lower bounds}
\label{subsec: LB guessing}

Theorem~\ref{thm: key result} is applied in this section to derive lower bounds on
guessing moments with or without side information, improving the bounds in \cite{Arikan96}.

\begin{theorem} \label{theorem: improving Arikan's bound}
Let $X$ be a random variable taking values on the
finite set $\set{X}=\{1, \ldots, M\}$, and let
$g \colon \set{X} \to \set{X}$ be
an arbitrary guessing function.
Then, for every $\rho \neq 0$,
\begin{align} \label{eq: improving Arikan's bound}
\frac1{\rho} \, \log \expectation\bigl[g^{\rho}(X)\bigr]
\geq \sup_{\beta \in (-\rho, \infty) \setminus \{0\}} \frac1{\beta} \,
\Bigl[ H_{\frac{\beta}{\beta+\rho}}(X) - \log u_M(\beta) \Bigr]
\end{align}
with
\begin{align}  \label{eq: u}
u_M(\beta) =
\begin{dcases}
\log_{\mathrm{e}} M + \gamma + \frac1{2M}-\frac5{6(10M^2+1)}
& \quad \beta=1, \\[0.4cm]
\min\left\{\zeta(\beta) -
\frac{(M+1)^{1-\beta}}{\beta-1} -\frac{(M+1)^{-\beta}}{2}, \, u_M(1) \right\}
& \quad \beta > 1, \\[0.4cm]
1 + \frac{1}{1-\beta} \, \left[\left(M+\tfrac12\right)^{1-\beta}-\left(\tfrac32\right)^{1-\beta}\right]
& \quad |\beta|<1, \\[0.4cm]
\frac{M^{1-\beta}-1}{1-\beta} + \tfrac12 \left(1+M^{-\beta}\right)
& \quad \beta \leq -1
\end{dcases}
\end{align}
where $\gamma \approx 0.5772$ is the Euler-Mascheroni constant,
and $\zeta(\beta) = \overset{\infty}{\underset{n=1}{\sum}} \frac1{n^\beta}$
is the Riemann zeta function for $\beta>1$.
\end{theorem}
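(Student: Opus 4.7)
The plan is to apply Theorem~\ref{thm: key result} with the given $g$ and then bound the sum $\sum_{x \in \set{X}} g^{-\beta}(x)$ appearing on the right-hand side. Because $g$ is a bijection of $\set{X} = \{1, \ldots, M\}$, one has $\sum_{x \in \set{X}} g^{-\beta}(x) = \sum_{k=1}^M k^{-\beta}$ regardless of $g$, so the lower bound in \eqref{eq: key result} becomes $\frac{1}{\rho} \log \expectation[g^\rho(X)] \geq \frac{1}{\beta}\bigl[ H_{\beta/(\beta+\rho)}(X) - \log \sum_{k=1}^M k^{-\beta} \bigr]$. The factor $1/\beta$ reverses the direction of the inequality across $\beta = 0$, so replacing $\sum k^{-\beta}$ by $u_M(\beta)$ preserves the bound provided that $\sum_{k=1}^M k^{-\beta} \leq u_M(\beta)$ for $\beta > 0$ and $\sum_{k=1}^M k^{-\beta} \geq u_M(\beta)$ for $\beta < 0$. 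Taking the supremum over $\beta \in (-\rho,\infty)\setminus\{0\}$ then yields \eqref{eq: improving Arikan's bound}, and the proof reduces to establishing these sum estimates in each of the four ranges of \eqref{eq: u}.

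For the upper bounds ($\beta > 0$): when $\beta = 1$, the sum is the harmonic number $H_M$, and I would invoke (or verify by induction on $M$ using Taylor remainder estimates of $\log(1 + 1/M)$) a sharp DeTemple--T\'{o}th-type refinement giving the first line of \eqref{eq: u}. When $\beta > 1$, the function $f(x) = x^{-\beta}$ is positive, decreasing and convex, so the trapezoidal inequality $\tfrac{1}{2}(f(k) + f(k+1)) \geq \int_k^{k+1} f(x)\,dx$ telescoped from $k = M+1$ yields $\sum_{k=M+1}^\infty k^{-\beta} \geq \frac{(M+1)^{1-\beta}}{\beta - 1} + \frac{(M+1)^{-\beta}}{2}$, and subtracting from $\zeta(\beta)$ produces the first expression inside the minimum in \eqref{eq: u}; the alternative $u_M(1)$ in the minimum is also valid because $k^{-\beta} \leq k^{-1}$ for every $k \geq 1$ and $\beta \geq 1$, so the tighter of the two may always be used. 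For $\beta \in (0,1)$, $x^{-\beta}$ is convex, so Jensen's inequality applied to the uniform distribution on $[k - \tfrac{1}{2}, k + \tfrac{1}{2}]$ gives $f(k) \leq \int_{k-1/2}^{k+1/2} f(x)\,dx$ for $k \geq 2$, and summing from $k = 2$ while singling out the $k = 1$ term produces the third branch of \eqref{eq: u}.

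For the lower bounds ($\beta < 0$), the same integral comparisons apply, but the curvature of $x^{-\beta}$ flips sign and the inequalities reverse in exactly the direction required by the negative factor $1/\beta$. For $\beta \in (-1, 0)$, $x^{-\beta}$ is concave, so Jensen's inequality gives $f(k) \geq \int_{k-1/2}^{k+1/2} f(x)\,dx$ and the very same closed form as in the $\beta \in (0, 1)$ subcase now serves as a lower bound on the sum. For $\beta \leq -1$, $x^{-\beta}$ is convex and increasing, and the trapezoidal inequality delivers $\sum_{k=1}^M k^{-\beta} \geq \int_1^M x^{-\beta}\,dx + \tfrac{1}{2}(1 + M^{-\beta})$, the last branch of \eqref{eq: u}. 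The main obstacle is the $\beta = 1$ case: the explicit higher-order correction $-\tfrac{5}{6(10M^2 + 1)}$ is sharper than any one-step midpoint or trapezoidal estimate and requires an Euler--Maclaurin argument with controlled error, or else a specialized induction; the remaining three branches are routine consequences of the midpoint and trapezoidal rules together with the correct sign of $f''$.
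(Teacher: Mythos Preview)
Your proposal is correct and follows essentially the same route as the paper: apply Theorem~\ref{thm: key result}, reduce $\sum_{x} g^{-\beta}(x)$ to $\sum_{k=1}^M k^{-\beta}$, and then establish the two-sided estimates on this sum case by case via the midpoint rule (Jensen) for $|\beta|<1$, the trapezoidal rule for $\beta>1$ and $\beta\le -1$, and the monotonicity bound $k^{-\beta}\le k^{-1}$ for the second option when $\beta>1$. The only difference is cosmetic: for the sharp $\beta=1$ harmonic-number bound the paper simply cites Guo--Qi \cite{GouQi11} rather than redoing the Euler--Maclaurin analysis you outline.
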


\begin{proof}
Since $g \colon \set{X} \to \set{X}$ is a one-to-one function with $\set{X}=\{1, \ldots, M\}$,
\begin{align}
\sum_{x \in \set{X}} g^{-\beta}(x) = \sum_{i=1}^M \frac1{i^\beta}.
\end{align}
In view of \eqref{eq: key result}, we derive \eqref{eq: improving Arikan's bound}
by proving that
\begin{subequations}
\begin{align}
\label{eq1: UB on harmonic sum}
& \sum_{i=1}^M \frac1{i^\beta} \leq u_M(\beta),
\quad \beta \geq 0, \\
\label{eq2: UB on harmonic sum}
& \sum_{i=1}^M \frac1{i^\beta} \geq u_M(\beta),
\quad \beta \leq 0
\end{align}
\end{subequations}
where $u_M(\beta)$ is given in \eqref{eq: u}. Note that the restriction
$\beta > -\rho$ in \eqref{eq: improving Arikan's bound} is due to
\eqref{eq: key result}.
The proof of \eqref{eq1: UB on harmonic sum} and \eqref{eq2: UB on harmonic sum}
is deferred to Appendix~\ref{appendix: UB on harmonic sum}.
\end{proof}

\begin{remark} \label{remark: Arikan1}
Specializing Theorem~\ref{theorem: improving Arikan's bound} to $\beta=1$
and using $u_M(1) \leq 1 + \log_{\mathrm{e}} M$ for $M \geq 2$, we obtain
\begin{align} \label{eq: Arikan's bound}
\frac1{\rho} \; \log \expectation\bigl[g^{\rho}(X)\bigr]
\geq H_{\frac{1}{1+\rho}}(X) - \log \bigl(1 + \log_{\mathrm{e}} M\bigr)
\end{align}
for $\rho \in (-1, \infty)$. This bound was obtained in the range
$\rho > 0$ by Arikan \cite[(1)]{Arikan96}.
\end{remark}

The following remark justifies the utility of Theorem~\ref{theorem: improving Arikan's bound}.
\begin{remark}
Since Theorem~\ref{thm: key result} also applies to guessing functions, it gives a
lower bound on $\frac1{\rho} \log \expectation\bigl[g^{\rho}(X)\bigr]$ where
$u_M(\beta)$ in \eqref{eq: improving Arikan's bound}--\eqref{eq: u} is replaced
by the finite sum $\overset{M}{\underset{j=1}{\sum}} \frac1{j^\beta}$ for
$\beta \in (-\rho, \infty) \setminus \{0\}$. While numerical
evidence shows that it is slightly better than the bound in
Theorem~\ref{theorem: improving Arikan's bound} for large $M$,
the latter bound is much easier to compute if $M$ is large.
\end{remark}

\par
The following simple example illustrates the improvement afforded
by the lower bound in Theorem~\ref{theorem: improving Arikan's bound},
as well as the sub-optimality of $\beta=1$.
\begin{example}
\label{example: optimal beta}
Let $X$ be geometrically distributed restricted to $\{1, \ldots , M\}$ with
the probability mass function
\begin{align} \label{eq: geometric dist.}
P_X(k) = \frac{(1-a) \, a^{k-1}}{1-a^M}, \quad k \in \{1, \ldots, M\}
\end{align}
where $a = \tfrac{24}{25}$ and $M = 128$. Since $P_X$ is monotonically
decreasing on $\{1, \ldots, M\}$, the optimal guessing function
is $g_X(i)=i$ for $i \in \{1, \ldots, M\}$.
Figure~\ref{figure: 6th moment} compares
$ \tfrac16 \, \log_{\mathrm{e}} \expectation \bigl[ g_X^6(X) \bigr]$
with its lower bounds in \eqref{eq: improving Arikan's bound} and
\eqref{eq: Arikan's bound}.
\begin{figure}[h]
\begin{center}
\includegraphics[width=10cm]{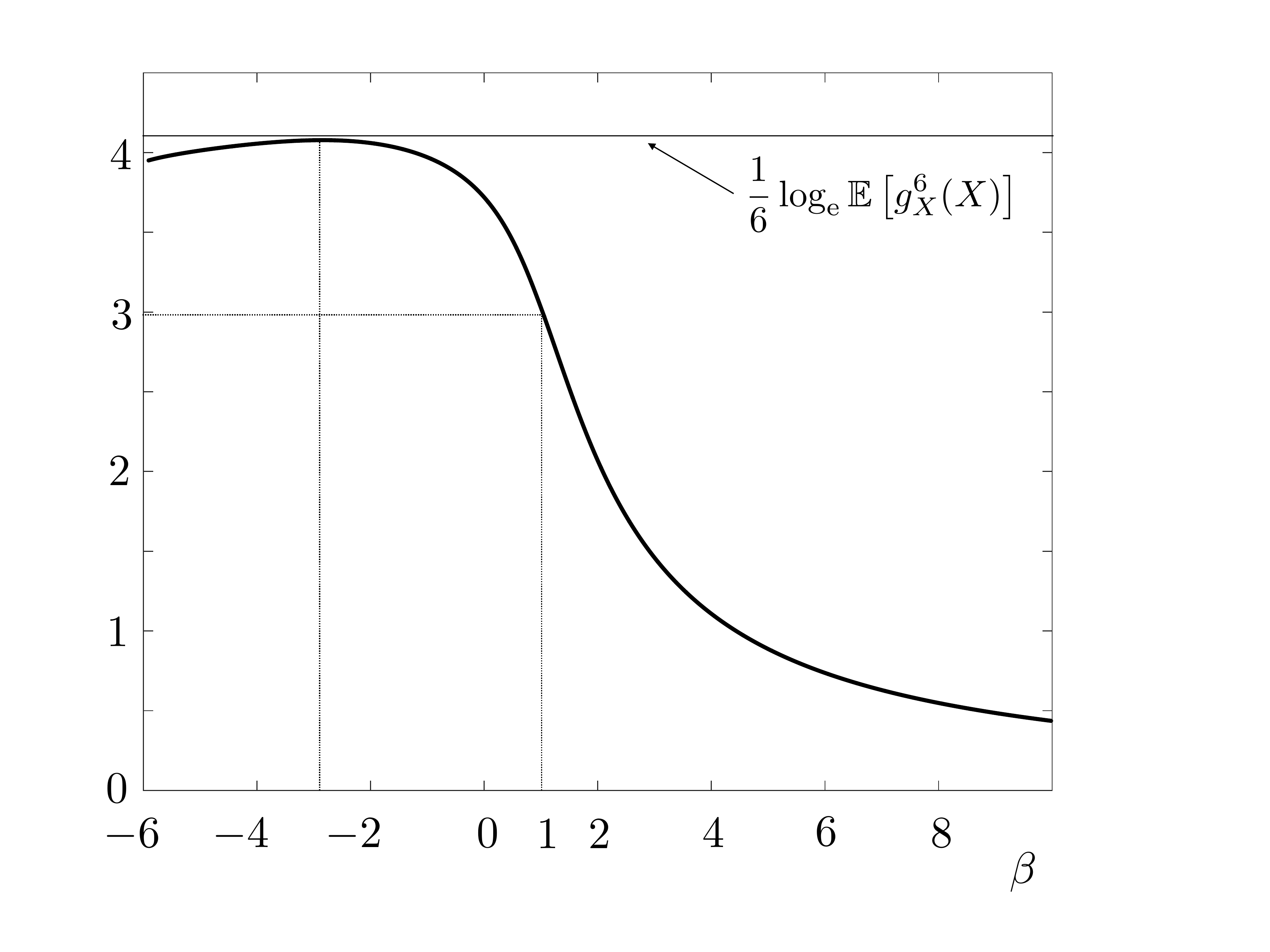}
\end{center}
\vspace*{-1cm}
\caption{\label{figure: 6th moment} Comparison of
$ \tfrac16 \log_{\mathrm{e}} \expectation \bigl[ g_X^6(X) \bigr] = 4.084$ for the
random variable $X$ in Example~\ref{example: optimal beta} with its lower bounds.
The lower bound in \eqref{eq: Arikan's bound} is equal to $2.953$, while
Theorem~\ref{theorem: improving Arikan's bound} improves this lower bound to $4.078$
(attained at $\beta = -2.85$).}
\end{figure}
The numerical results exemplify the sub-optimality of $\beta=1$ in the right side of
\eqref{eq: improving Arikan's bound}.
Not only is the improvement over the lower bound in \cite[(1)]{Arikan96} significant,
but the improved lower bound is very close to the actual value of the sixth guessing
moment.
Furthermore, the improved lower bound is attained at $\beta = -2.85$,
thereby showing the benefit of allowing negative orders in the definition of
R\'{e}nyi entropy.
\end{example}

\begin{remark}
Massey \cite{Massey94} obtained a lower bound on the expected value of
the minimal number of guesses for correctly guessing the value of a discrete
random variable $X$ which does not necessarily take a finite number of values:
\begin{align}
\label{eq: Massey94}
H(X) &\leq \expectation[g(X)] \, h\left(\frac1{\expectation[g(X)]}\right) \\
\label{eq-b: Massey94}
&\leq \log \expectation[g(X)] + \log \mathrm{e},
\end{align}
where $h$ is the binary entropy function,
and \eqref{eq: Massey94} is achieved with equality if and only if $X$ is
geometrically distributed. The idea behind \eqref{eq: Massey94} is that
$H(X) = H(g(X))$ (since $X$ and $g(X)$ are in one-to-one correspondence),
and the entropy of the positive integer-valued random variable $g(X)$ with
a given value of its expectation is maximized when $X$ is geometrically
distributed, which gives the right side of \eqref{eq: Massey94}. Note that
\eqref{eq: Massey94} provides an implicit lower bound on $\expectation[g(X)]$
as a function of the Shannon entropy $H(X)$, whereas \eqref{eq-b: Massey94}
gives a looser though explicit lower bound.

Massey's bound \eqref{eq: Massey94} can be generalized to obtain a
lower bound on the $\rho$-th moment of the guessing function as a function
of $H_{\alpha}(X)$ for arbitrary $\alpha, \rho >0$.
To this end, we first rely on the basic equality
\begin{align} \label{20171015-a}
H_{\alpha}(X) = H_{\alpha}(g(X))
\end{align}
since $g$ is a one-to-one function, and then we seek the distribution of
the random variable $Z = g(X)$, taking values on $\{1, \ldots, M\}$ with possibly
$M=\infty$, which maximizes $H_{\alpha}(Z)$ under the equality constraint
\begin{align}  \label{20171015-b}
\expectation[Z^\rho] = \theta
\end{align}
for fixed $\theta > 1$.
The problem of maximizing the R\'{e}nyi entropy under general equality constraints was
studied by Bunte and Lapidoth (see \cite[Theorem~II.1]{BunteL_Eilat14}) in the setting
of continuous random variables. In the discrete setting, a similar proof
which relies on the non-negativity of Sundaresan's divergence \cite{Sundaresan07}
provides the maximizing distribution of the R\'{e}nyi entropy $H_{\alpha}(Z)$
for a fixed value of $\expectation[Z^\rho]$:
\begin{align}  \label{eq: pmf}
P_Z(n) =
\begin{dcases}
c \left(1+ \lambda n^\rho\right)^{\frac1{\alpha-1}} \, 1\{1 \leq n \leq M \},
& \quad \alpha \in (0,1), \; \lambda > -M^{-\rho}  \\
c \left(1+ \lambda n^\rho\right)^{\frac1{\alpha-1}} \, 1\{n \in \set{T}\},
& \quad \alpha > 1
\end{dcases}
\end{align}
where $\set{T} \triangleq \set{T}_{\lambda, \rho} = \{1, \ldots, M\} \cap \{n: 1 + \lambda n^{\rho} \geq 0\}$,
and $\lambda \in \Reals$ is set to satisfy \eqref{20171015-b}
with the normalizing constant $c$ in \eqref{eq: pmf}.
Hence, we get from \eqref{20171015-a} and \eqref{eq: pmf} that
\begin{align} \label{20171015-b2}
H_{\alpha}(X) \leq H_{\alpha}(Z)
\end{align}
with equality for the maximizing distribution $P_Z$ in \eqref{eq: pmf} which is selected to
satisfy \eqref{20171015-b}. An analogous way to view this problem is to
minimize $\expectation[g^{\rho}(X)]$ for a given value of $H_{\alpha}(X)$.
This leads to an extension of Massey's bound in \eqref{eq: Massey94} which, however,
does not lend itself to a closed-form generalized bound and its
computation is rather involved (especially, for $\alpha > 1$). The lower bound in
Theorem~\ref{theorem: improving Arikan's bound}, on the other hand, can be calculated
more easily.
\end{remark}

\subsection{Upper bounds}
\label{subsec: UB guessing}

The average number of guesses is minimized by taking the guessing
function to be the ranking function $g_X$, for which $g_X(x)=k$
if $P_X(x)$ is the $k$-th largest mass \cite{Massey94}. Although
the tie breaking affects the choice of $g_X$, the distribution of
$g_X(X)$ does not depend on how ties are resolved. Not only does
this strategy minimize the average number of guesses, but it also
minimizes the $\rho$-th moment of the number of guesses for every
$\rho > 0$.

\begin{theorem} \cite[Proposition~4]{Arikan96}
\label{thm: Arikan's UB}
Let $X$ be a discrete random variable taking values on a set
$\set{X}$, and let $g_X$ be the ranking function according to $P_X$.
Then, for all $\rho > 0$,
\begin{align} \label{eq: Arikan96 - prop. 4 without conditioning}
\expectation[g_{X}^{\rho}(X)]
\leq \exp\Bigl(\rho H_{\frac1{1+\rho}}(X) \Bigr).
\end{align}
\end{theorem}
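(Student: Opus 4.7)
The plan is to give a direct, elementary proof that does not invoke Theorem~\ref{thm: key result}, relying only on the defining monotonicity of a ranking function together with the definition of R\'{e}nyi entropy. The key observation is that if the elements of $\set{X}$ are relabelled in decreasing order of probability, $p_k := P_X(g_X^{-1}(k))$ with $p_1 \geq p_2 \geq \ldots$, then
\begin{align*}
\expectation\bigl[g_X^\rho(X)\bigr] = \sum_{k \geq 1} k^\rho \, p_k,
\qquad \exp\bigl(\rho H_{\frac1{1+\rho}}(X)\bigr) = \Bigl(\, \sum_{k \geq 1} p_k^{\frac1{1+\rho}} \Bigr)^{1+\rho},
\end{align*}
so the theorem reduces to a statement comparing these two sums.

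First I would extract the crucial inequality from the monotonicity of $\{p_k\}$ and of the map $t \mapsto t^{1/(1+\rho)}$ on $(0, \infty)$. Since $p_i^{1/(1+\rho)} \geq p_k^{1/(1+\rho)}$ for all $i \leq k$, the partial sum satisfies $\sum_{i=1}^{k} p_i^{1/(1+\rho)} \geq k \, p_k^{1/(1+\rho)}$, which rearranges to
\begin{align*}
k \;\leq\; p_k^{-\frac1{1+\rho}} \sum_{i=1}^{k} p_i^{\frac1{1+\rho}} \;\leq\; p_k^{-\frac1{1+\rho}} \, S,
\qquad S := \sum_{i \geq 1} p_i^{\frac1{1+\rho}}.
\end{align*}

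Second, since $\rho > 0$, I would raise this to the $\rho$-th power, multiply by $p_k$, and sum over $k$:
\begin{align*}
\expectation\bigl[g_X^\rho(X)\bigr]
\;\leq\; S^\rho \sum_{k \geq 1} p_k \cdot p_k^{-\frac{\rho}{1+\rho}}
\;=\; S^\rho \sum_{k \geq 1} p_k^{\frac1{1+\rho}}
\;=\; S^{\,1+\rho},
\end{align*}
which is exactly $\exp\bigl(\rho H_{1/(1+\rho)}(X)\bigr)$ by \eqref{eq: Renyi entropy}. This concludes the argument.

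The proof has no real obstacle; the only care is to handle the case of countably infinite $\set{X}$, where one notes that if $S = \infty$ the bound is vacuous, while if $S < \infty$ all the rearrangements above hold term by term for non-negative summands. As an alternative route, one could invoke Item~\ref{item: key3-b}) of Theorem~\ref{thm: key result} with $g \leftarrow g_X$ and the tilted distribution $Q_{\nu}$ chosen so that the R\'{e}nyi divergence in \eqref{eq4} vanishes, but the elementary monotonicity argument sketched above is shorter and avoids choosing an optimal $\beta$.
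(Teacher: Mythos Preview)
Your proof is correct. The paper does not supply its own proof of this statement; it is quoted verbatim as \cite[Proposition~4]{Arikan96} and used as a benchmark that Theorems~\ref{theorem: UB-NSI}--\ref{thm: rho>=2} subsequently sharpen. Your argument is in fact Arikan's original one, phrased in terms of the ordered masses: the pointwise bound $k \leq p_k^{-1/(1+\rho)} S$ is exactly the inequality $g_X(x) \leq \sum_{x'} \bigl(P_X(x')/P_X(x)\bigr)^{1/(1+\rho)}$ used in \cite{Arikan96}, and the remainder of the computation is identical. The only minor omission is that the step $k \leq p_k^{-1/(1+\rho)} S$ presupposes $p_k > 0$; you should note that indices with $p_k = 0$ contribute nothing to $\expectation[g_X^\rho(X)]$ and may be discarded. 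The alternative route you mention via Theorem~\ref{thm: key result} is unnecessary here, since that theorem gives a \emph{lower} bound on $\tfrac1\rho \log \expectation[g^\rho(X)]$ for $\beta \in (-\rho,\infty)\setminus\{0\}$ and an upper bound only for $\beta < -\rho$, which does not directly yield \eqref{eq: Arikan96 - prop. 4 without conditioning}.
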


The following result tightens Theorem~\ref{thm: Arikan's UB}.
\begin{theorem} \label{theorem: UB-NSI}
Under the assumptions in Theorem~\ref{thm: Arikan's UB}, for all $\rho \geq 0$,
\begin{align} \label{eq1: UB-NSI}
\expectation[g_{X}^{\rho}(X)]
\leq \frac1{1+\rho} \left[ \exp\Bigl(\rho H_{\frac1{1+\rho}}(X) \Bigr)
- 1 \right] + \exp \Bigl( (\rho-1)^+ H_{\frac1{\rho}}(X) \Bigr)
\end{align}
where $(x)^+ \triangleq \max\{x,0\}$ for $x \in \Reals$.
\end{theorem}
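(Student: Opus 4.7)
The plan is to refine Arikan's bound (Theorem~\ref{thm: Arikan's UB}) by retaining finer information in its key combinatorial inequality. Let $p_1 \ge p_2 \ge \cdots \ge p_M$ denote the decreasingly sorted masses of $P_X$, set $\alpha = 1/(1+\rho)$, $s_k = \sum_{j=1}^{k} p_j^{\alpha}$, and $S = s_M$, so that $S^{1+\rho} = \exp\bigl(\rho \, H_{1/(1+\rho)}(X)\bigr)$. Using $p_j^{\alpha} \ge p_k^{\alpha}$ for $j \le k$, the Arikan-type inequality $k \, p_k^{\alpha} \le s_k$ gives $k^{\rho} p_k \le s_k^{\rho} p_k^{\alpha} = s_k^{\rho}(s_k - s_{k-1})$, whence
\[
\expectation[g_X^{\rho}(X)] = \sum_{k=1}^{M} k^{\rho} p_k \;\le\; \sum_{k=1}^{M} s_k^{\rho}(s_k - s_{k-1}).
\]
Rather than the crude estimate $S^{\rho} \cdot S = S^{1+\rho}$ used by Arikan, I would view the right-hand side as a right-Riemann sum for $\int_{0}^{S} t^{\rho} \, dt = S^{1+\rho}/(1+\rho)$. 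Writing the per-interval overshoot $\Delta_k := \int_{s_{k-1}}^{s_k}(s_k^{\rho} - t^{\rho}) \, dt \ge 0$, this gives
\[
\expectation[g_X^{\rho}(X)] \;\le\; \frac{S^{1+\rho}}{1+\rho} + \sum_{k=1}^{M} \Delta_k,
\]
so the task reduces to proving $\sum_k \Delta_k \le \exp\bigl((\rho-1)^{+} H_{1/\rho}(X)\bigr) - 1/(1+\rho)$.

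For $\rho \in [0, 1]$ the function $t^{\rho}$ is concave, and the tangent-line inequality at $s_k$ yields $s_k^{\rho} - t^{\rho} \le \rho \, s_k^{\rho-1}(s_k - t)$ on $[s_{k-1}, s_k]$, so $\Delta_k \le \tfrac{1}{2}\rho \, s_k^{\rho-1}(s_k - s_{k-1})^2$. Since $s_k \ge p_k^{\alpha}$ and $\rho - 1 \le 0$, the monotonicity of $t \mapsto t^{\rho-1}$ gives $s_k^{\rho-1} \le p_k^{\alpha(\rho-1)}$, hence $s_k^{\rho-1} p_k^{2\alpha} \le p_k^{\alpha(\rho+1)} = p_k$. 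Summing produces $\sum_k \Delta_k \le \rho/2 \le \rho/(1+\rho) = 1 - 1/(1+\rho)$, where the last inequality uses $\rho \le 1$. Combined with $\exp\bigl(0 \cdot H_{1/\rho}(X)\bigr) = 1$ (valid since $(\rho-1)^{+} = 0$), this closes the case $\rho \in [0, 1]$.

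For $\rho > 1$ the concavity argument breaks down and the extra term $T^{\rho} := \exp\bigl((\rho-1) H_{1/\rho}(X)\bigr) = \bigl(\sum_k p_k^{1/\rho}\bigr)^{\rho}$ must be produced. The natural ingredient is the companion Arikan-type inequality $k \, p_k^{1/\rho} \le \tau_k := \sum_{j=1}^{k} p_j^{1/\rho}$, which yields $k^{\rho-1} p_k \le \tau_k^{\rho-1}(\tau_k - \tau_{k-1})$. Combining it with $(k-1)^{\rho} p_k \le s_{k-1}^{\rho} p_k^{\alpha}$ and the mean-value bound $k^{\rho} \le (k-1)^{\rho} + \rho \, k^{\rho-1}$ (valid for $\rho \ge 1$) splits the sum as
\[
\sum_k k^{\rho} p_k \;\le\; \underbrace{\sum_k s_{k-1}^{\rho}(s_k - s_{k-1})}_{\le \, S^{1+\rho}/(1+\rho)} \;+\; \rho \sum_k \tau_k^{\rho-1}(\tau_k - \tau_{k-1}),
\]
where the first piece is a left-Riemann sum for the increasing integrand $t^{\rho}$. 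The main technical obstacle lies in the second piece: the crude estimate $\sum_k \tau_k^{\rho-1}(\tau_k - \tau_{k-1}) \le T^{\rho-1} \cdot T = T^{\rho}$ yields $\rho T^{\rho}$, which exceeds the target $T^{\rho}$ by a factor of $\rho$. Closing this gap is the crux, and I would attack it by exploiting the convexity bound $\tau_k^{\rho} - \tau_{k-1}^{\rho} \ge \rho \, \tau_{k-1}^{\rho-1}(\tau_k - \tau_{k-1})$ to telescope $\rho \sum_k \tau_{k-1}^{\rho-1}(\tau_k - \tau_{k-1}) \le T^{\rho}$, absorbing the residual $\rho \sum_k (\tau_k^{\rho-1} - \tau_{k-1}^{\rho-1})(\tau_k - \tau_{k-1})$ against the $-1/(1+\rho)$ shift from replacing $S^{1+\rho}/(1+\rho)$ by $(S^{1+\rho}-1)/(1+\rho)$ in the first piece.
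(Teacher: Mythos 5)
Your reduction $\expectation[g_X^{\rho}(X)] \le \sum_k s_k^{\rho}(s_k-s_{k-1})$ and the right-Riemann decomposition are sound, and your case $\rho\in[0,1]$ closes correctly: the concavity bound yields $\sum_k \Delta_k \le \rho/2 \le \rho/(1+\rho)$, which is what is needed. This is a genuinely different and in fact slightly sharper route than the paper's, which for that range applies the pointwise inequality $u^{\rho}\le \frac{u^{1+\rho}-(u-1)^{1+\rho}+\rho}{1+\rho}$ at $u=g_X(X)$ and then invokes Bozta\c{s}'s inequality.

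The case $\rho>1$ has a gap that cannot be closed along the lines you suggest, and the obstruction is not the constant you hope to harvest from the left-Riemann undershoot but the very first splitting step. The mean-value estimate $k^{\rho}\le(k-1)^{\rho}+\rho\,k^{\rho-1}$ evaluated at $k=1$ reads $1\le\rho$, so the decomposition already discards a slack of $\rho-1$ on the mass at the most likely symbol, and this slack cannot be recovered. Test this on a deterministic $X$: the target bound is tight there, with $\expectation[g_X^{\rho}(X)]=\frac{S^{1+\rho}-1}{1+\rho}+T^{\rho}=1$, yet your two pieces evaluate to $0$ and $\rho$, giving an intermediate bound of $\rho>1$. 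The left-Riemann undershoot you propose to exploit equals $1/(1+\rho)$ in this example, and you must additionally surrender $1/(1+\rho)$ when replacing $S^{1+\rho}/(1+\rho)$ by $(S^{1+\rho}-1)/(1+\rho)$; together these contribute nothing toward absorbing an excess of $\rho-1$. The overshoot and undershoot also live on different partitions (the $\tau$-grid versus the $s$-grid), so there is no pointwise cancellation to invoke. The paper avoids the issue by never splitting $k^{\rho}$ at the integer level: it bounds $\expectation[(g_X(X)-1)^{\rho}]\le \frac{1}{1+\rho}\bigl[\exp(\rho H_{1/(1+\rho)}(X))-1\bigr]$ via the pointwise inequality $r(u)=\frac{u^{1+\rho}-(u-1)^{1+\rho}-1}{1+\rho}-(u-1)^{\rho}\ge 0$ for $u\ge1$ (a consequence of convexity of $u\mapsto u^{\rho}$), and then adds Bozta\c{s}'s inequality at order $\rho-1$, namely $\expectation[g_X^{\rho}(X)]-\expectation[(g_X(X)-1)^{\rho}]\le\exp\bigl((\rho-1)H_{1/\rho}(X)\bigr)$. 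Both ingredients are equalities at $u=1$, so no slack is introduced at the dominant term and the deterministic case comes out tight, as it must.
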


\begin{proof}
From \cite[Lemma~2]{Boztas97}, if $\rho \geq 0$, then
\begin{align}
\expectation\bigl[g_{X}^{1+\rho}(X)\bigr] -
\expectation\bigl[\bigl(g_{X}(X)-1\bigr)^{1+\rho}\bigr] \leq
\left( \sum_{x \in \set{X}} P_X^{\frac1{1+\rho}}(x)\right)^{1+\rho}
= \exp\Bigl(\rho H_{\frac1{1+\rho}}(X) \Bigr).
\label{eq: from Boztas' paper}
\end{align}
At this point we deviate from the analysis in \cite{Boztas97}.
For $\rho \geq 1$, let $r \colon [1, \infty) \to \Reals$
and $v \colon [0, \infty) \to \Reals$ be given by
\begin{align}
& r(u) = \frac{u^{1+\rho}-(u-1)^{1+\rho}-1}{1+\rho} - (u-1)^\rho,
\quad u \geq 1 \label{eq: r fun} \\
& v(u) = u^{\rho}, \quad u \geq 0. \label{eq: v fun}
\end{align}
Expressing the derivative of \eqref{eq: r fun} with the aid of \eqref{eq: v fun}
implies that, for all $u \geq 1$,
\begin{align}
r'(u) = v(u) - v(u-1) - v'(u-1) \geq 0,
\end{align}
where the inequality is due to the convexity of $v(\cdot)$ in $(0,\infty)$.
Since $r(1)=0$, it follows that $r(\cdot)$ is non-negative in $[1,\infty)$.
Invoking the result in \eqref{eq: from Boztas' paper}, along with $\expectation[r(g_X(X))] \geq 0$,
implies that for $\rho \geq 1$
\begin{align}  \label{haifa}
\expectation\bigl[\bigl(g_{X}(X)-1\bigr)^{\rho}\bigr] \leq
\frac1{1+\rho} \left[ \exp\Bigl(\rho H_{\frac1{1+\rho}}(X) \Bigr) - 1 \right].
\end{align}
A replacement of $\rho$ by $\rho-1$ in
\eqref{eq: from Boztas' paper} gives that for $\rho \geq 1$
\begin{align}   \label{tel-aviv}
\expectation\bigl[g_X^{\rho}(X)\bigr] - \expectation\bigl[\bigl(g_X(X)-1\bigr)^{\rho}\bigr]
\leq \exp\Bigl((\rho-1) \, H_{\frac1{\rho}}(X) \Bigr).
\end{align}
Hence, for $\rho \geq 1$, adding \eqref{haifa} and \eqref{tel-aviv} yields
\begin{align} \label{eq4: UB-NSI}
\expectation[g_{X}^{\rho}(X)]
\leq \frac1{1+\rho} \left[ \exp\Bigl(\rho H_{\frac1{1+\rho}}(X) \Bigr)
- 1 \right] + \exp \Bigl( (\rho-1) H_{\frac1{\rho}}(X) \Bigr).
\end{align}
This proves \eqref{eq1: UB-NSI} for $\rho \geq 1$.
We next consider the case where $\rho \in (0,1)$.
Lemma~\ref{lemma: 0<rho<1} in Appendix~\ref{appendix: UB-NSI} yields
\begin{align} \label{eq1: for 0<rho<1}
u^{\rho} \leq \frac{u^{1+\rho} - (u-1)^{1+\rho} + \rho}{1+\rho}, \quad u \geq 1,
\end{align}
and therefore, for $\rho \in (0,1)$,
\begin{align} \label{june17}
\expectation\bigl[g_X^{\rho}(X)\bigr] \leq
\frac1{1+\rho} \, \left( \expectation\left[g_X^{1+\rho}(X)\right]
- \expectation\left[\bigl(g_X(X)-1\bigr)^{1+\rho}\right] \right)
+ \frac{\rho}{1+\rho}.
\end{align}
Combining \eqref{eq: from Boztas' paper} and \eqref{june17} gives
\eqref{eq1: UB-NSI} for $\rho \in (0,1)$. The case $\rho=0$, for
which \eqref{eq1: UB-NSI} holds with equality, is trivial.
\end{proof}

\begin{remark}
Bound \eqref{eq1: UB-NSI} strictly (unless $X$ is deterministic)
improves the bound in \eqref{eq: Arikan96 - prop. 4 without conditioning}.
This holds for $\rho \in (0,1)$ since
$H_{\frac1{1+\rho}}(X) \geq 0$; for $\rho \in [1, \infty)$,
the difference between the bounds in
\eqref{eq: Arikan96 - prop. 4 without conditioning}
and \eqref{eq1: UB-NSI} is equal to
\begin{align}
& \exp\Bigl(\rho \, H_{\frac1{1+\rho}}(X) \Bigr) - \left\{ \frac1{1+\rho}
\left[ \exp\Bigl(\rho H_{\frac1{1+\rho}}(X) \Bigr) - 1 \right]
+ \exp \Bigl( (\rho-1) H_{\frac1{\rho}}(X) \Bigr) \right\} \nonumber \\
& = \frac{\rho}{1+\rho} \left( \, \sum_{x \in \set{X}} P_X^{\frac1{1+\rho}}(x)
\right)^{1+\rho} - \left( \, \sum_{x \in \set{X}} P_X^{\frac1{\rho}}(x)
\right)^{\rho} + \frac1{1+\rho} \nonumber \\
& \geq \frac{\rho}{1+\rho} \left( \, \sum_{x \in \set{X}} P_X^{\frac1{1+\rho}}(x)
\right)^{1+\rho} - \left( \, \sum_{x \in \set{X}} P_X^{\frac1{1+\rho}}(x)
\right)^{\rho} + \frac1{1+\rho} \label{nahalal} \\
& = s\left(\sum_{x \in \set{X}} P_X^{\frac1{1+\rho}}(x)\right)
\label{ramat-gan}
\end{align}
where \eqref{nahalal} holds since
$P_X^{\frac1{\rho}}(x) \leq P_X^{\frac1{1+\rho}}(x)$,
and \eqref{ramat-gan} holds with $s \colon [1, \infty) \to \Reals$ given by
\begin{align}
s(u) = u^\rho \left(\frac{\rho u}{1+\rho} - 1 \right) + \frac1{1+\rho}, \quad u \geq 1.
\end{align}
In \eqref{ramat-gan}, note that
$\underset{x \in \set{X}}{\sum} P_X^{\frac1{1+\rho}}(x) \geq \underset{x \in \set{X}}{\sum} P_X(x) = 1$.
The right side of \eqref{ramat-gan} is non-negative since $s(1)=0$, and
\begin{align}
s'(u)= \rho (u-1) u^{\rho-1} > 0
\end{align}
for all $u > 1$.
This shows that the bound in \eqref{eq1: UB-NSI} is at least as tight as the bound
in \eqref{eq: Arikan96 - prop. 4 without conditioning} with a strict improvement unless
$X$ is deterministic.
\end{remark}

In the range $\rho \in [0,2]$, we can tighten \eqref{eq1: UB-NSI} according to
the following result.
\begin{theorem} \label{thm: 0-2}
Under the assumptions in Theorem~\ref{thm: Arikan's UB},
\begin{enumerate}[a)]
\label{label1.5: UB-NSI}
\item For $\rho \in [0,1]$
\begin{align}  \label{eq1.5: UB-NSI}
\expectation[g_{X}^{\rho}(X)]
\leq \frac1{1+\rho} \, \exp\Bigl(\rho H_{\frac1{1+\rho}}(X) \Bigr)
+ \frac{\rho - (1-\rho)(2^{\rho}-1)(1-p_{\max})}{1+\rho}.
\end{align}
\item \label{label2: UB-NSI}
For $\rho \in [1,2]$
\begin{align} \label{eq2: UB-NSI}
\expectation[g_{X}^{\rho}(X)] \leq \frac1{1+\rho} \,
\exp\Bigl(\rho H_{\frac1{1+\rho}}(X) \Bigr)
+ \frac1\rho \,\exp\left((\rho-1) H_{\frac1\rho}(X) \right)
+ \frac{\rho^2-\rho-1}{\rho (1+\rho)}.
\end{align}
\end{enumerate}
Furthermore, both \eqref{eq1.5: UB-NSI} and \eqref{eq2: UB-NSI} hold with equality
if $X$ is deterministic.
\end{theorem}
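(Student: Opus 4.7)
The plan is to establish, for each of the two parts, a pointwise inequality on $[1,\infty)$ that is tight at $u=1$, then substitute $u \leftarrow g_X(X)$, take expectation, and bound the right side via \eqref{eq: from Boztas' paper}---once in Part~(a) and twice (with two different parameter shifts) in Part~(b). Tightness at $u=1$ pins down the additive constants and handles the deterministic case automatically: if $X$ is deterministic then $g_X(X) \equiv 1$ and every R\'enyi entropy vanishes, so both right sides collapse to $1 = \expectation[g_X^\rho(X)]$, giving the claimed equality.

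For Part~(a), the target pointwise inequality is
\begin{align*}
u^\rho \leq \frac{u^{1+\rho}-(u-1)^{1+\rho}+\rho}{1+\rho} - \frac{(1-\rho)(2^\rho-1)}{1+\rho}\,\mathbf{1}\{u\geq 2\}, \qquad u \geq 1,\ \rho \in [0,1].
\end{align*}
Writing $d(u)$ for the difference between the first summand on the right and $u^\rho$, a brief calculation gives $d(1)=0$ and $d(2)=(1-\rho)(2^\rho-1)/(1+\rho)$, while concavity of $w\mapsto w^\rho$ on $[0,\infty)$ yields $d'(u) = u^\rho-(u-1)^\rho - \rho u^{\rho-1}\geq 0$ (the secant slope over $[u-1,u]$ dominates the tangent slope at $u$). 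Taking expectation with $u=g_X(X)$, using $\prob[g_X(X)\geq 2]=1-p_{\max}$, and applying \eqref{eq: from Boztas' paper} with parameter $\rho$ produces \eqref{eq1.5: UB-NSI}.

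For Part~(b), the target is the pointwise inequality
\begin{align*}
u^\rho \leq \frac{u^{1+\rho}-(u-1)^{1+\rho}}{1+\rho} + \frac{u^\rho-(u-1)^\rho}{\rho} + \frac{\rho^2-\rho-1}{\rho(1+\rho)}, \qquad u \geq 1,\ \rho \in [1,2],
\end{align*}
whose additive constant is forced by equality at $u=1$. Denoting the deficit by $\phi(u)$, a direct computation gives $\phi(1)=0$, so the job reduces to showing $\phi'(u) = u^\rho-(u-1)^\rho - (\rho-1) u^{\rho-1} - (u-1)^{\rho-1} \geq 0$ for $u \geq 1$---the main obstacle. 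I plan to clear it via a trapezoidal-rule bound: for $\rho\in[1,2]$, the integrand $t\mapsto \rho t^{\rho-1}$ is concave on $[0,\infty)$ (its second derivative is $\rho(\rho-1)(\rho-2)t^{\rho-3}\leq 0$), so
\begin{align*}
u^\rho-(u-1)^\rho = \int_{u-1}^u \rho t^{\rho-1}\,dt \geq \frac{\rho}{2}\bigl[u^{\rho-1}+(u-1)^{\rho-1}\bigr],
\end{align*}
and the last expression majorizes $(\rho-1) u^{\rho-1} + (u-1)^{\rho-1}$ because their difference simplifies to $\tfrac{2-\rho}{2}\bigl[u^{\rho-1}-(u-1)^{\rho-1}\bigr] \geq 0$ (using $\rho\leq 2$ together with monotonicity of $w\mapsto w^{\rho-1}$). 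Substituting $u = g_X(X)$, taking expectation, and invoking \eqref{eq: from Boztas' paper} twice---with $\sigma=\rho$ for the $(1+\rho)$-difference and $\sigma=\rho-1$ for the $\rho$-difference---delivers \eqref{eq2: UB-NSI}.
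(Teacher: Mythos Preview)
Your proof is correct and follows essentially the same route as the paper: both parts rest on the very same pointwise inequalities (Lemma~\ref{lemma: 0<rho<1} and Lemma~\ref{lemma: 1<rho<2} in the paper), followed by substituting $u=g_X(X)$, taking expectations, and applying \eqref{eq: from Boztas' paper} once in~(a) and twice in~(b). The only cosmetic difference is how you certify $\phi'(u)\geq 0$ in Part~(b): your trapezoidal-rule bound (concavity of $t\mapsto \rho t^{\rho-1}$ for $\rho\in[1,2]$) replaces the paper's tangent-line/convexity manipulation, but the structure and conclusions are identical.
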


\begin{proof}
See Appendix~\ref{appendix: UB-NSI}.
\end{proof}

\begin{remark}
Particularizing \eqref{eq2: UB-NSI} to $\rho=1$ and $\rho=2$, we
recover the bounds on the first and second moments in
\cite[Theorem~3]{Boztas97}. Furthermore, the bounds in \eqref{eq1.5: UB-NSI}
and \eqref{eq2: UB-NSI} provide a continuous transition at $\rho=1$.
\end{remark}

\begin{theorem} \label{thm: rho>=2}
Under the assumptions in Theorem~\ref{thm: Arikan's UB}, for $\rho \geq 2$,
\begin{align} \label{eq: rho>=2}
\expectation[g_{X}^{\rho}(X)] \leq 1 + \sum_{j=0}^{\lfloor \rho \rfloor}
c_j(\rho) \left[ \exp \left( (\rho-j) \, H_{\frac1{1+\rho-j}}(X)\right) - 1 \right],
\end{align}
where $\{c_j(\rho)\}$ is given by
\begin{align} \label{eq: c_j}
c_j(\rho) =
\begin{dcases}
\hfil \frac1{1+\rho}  & \quad j=0 \\[0.1cm]
\hfil \tfrac12        & \quad j=1 \\[0.1cm]
\frac{\rho \ldots (\rho-j+2)}{2^j}  & \quad
j \in \{2, \ldots, \lfloor \rho \rfloor - 1\} \\[0.1cm]
\frac{\rho \ldots (\rho-j+2)}{2^{j- 1} \; (\rho-j+1)}
& \quad j = \lfloor \rho \rfloor
\end{dcases}
\end{align}
and $\lfloor x \rfloor$ denotes the largest integer that is smaller than or equal to $x$.
\end{theorem}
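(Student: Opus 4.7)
My plan is to reduce \eqref{eq: rho>=2} to a pointwise analytic inequality and then iterate Bozta\c{s}'s bound \eqref{eq: from Boztas' paper} at multiple orders. Concretely, the goal is to establish that, for every $\rho \geq 2$ and every integer $u \geq 1$,
\begin{align}
u^\rho \leq \biggl(1 - \sum_{j=0}^{\lfloor \rho \rfloor} c_j(\rho)\biggr)
+ \sum_{j=0}^{\lfloor \rho \rfloor} c_j(\rho) \bigl[u^{1+\rho-j} - (u-1)^{1+\rho-j}\bigr],
\label{eq: plan pw}
\end{align}
with $c_j(\rho)$ as in \eqref{eq: c_j}. Once \eqref{eq: plan pw} is in hand, the theorem follows by substituting $u = g_X(X)$, taking expectations, and applying \eqref{eq: from Boztas' paper} with $\rho$ replaced by $\rho - j$ for each $j$, i.e.\ $\expectation\bigl[g_X^{1+\rho-j}(X)\bigr] - \expectation\bigl[(g_X(X)-1)^{1+\rho-j}\bigr] \leq \exp\bigl((\rho-j) H_{\frac{1}{1+\rho-j}}(X)\bigr)$; then rewriting $(1-\sum_j c_j) + \sum_j c_j \exp(\cdot)$ as $1 + \sum_j c_j [\exp(\cdot)-1]$ produces \eqref{eq: rho>=2}. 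This scheme generalizes the two-term iteration underlying Theorem~\ref{theorem: UB-NSI}: the $j=0$ contribution recovers \eqref{haifa} and the $j=1$ contribution recovers \eqref{tel-aviv}, while Theorem~\ref{thm: rho>=2} uses $\lfloor \rho \rfloor + 1$ such terms.

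The substantive work is in verifying \eqref{eq: plan pw}. Direct substitution shows that both sides agree at $u = 1$. For integer $\rho = n$, \eqref{eq: plan pw} reduces by the binomial theorem to an explicit polynomial inequality; for instance, at $\rho = 3$ the difference between the two sides simplifies to $u - 1$, which is non-negative on $u \geq 1$, and at $\rho = 2$ it becomes an identity. For general $\rho \geq 2$, I would proceed by induction on $n = \lfloor \rho \rfloor$. The base case $\rho \in [2, 3)$ refines the convexity/mean-value argument used to prove part~\ref{label2: UB-NSI} of Theorem~\ref{thm: 0-2}, combining the $r(u) \geq 0$ inequality from \eqref{eq: r fun}--\eqref{eq: v fun} with a Hermite--Hadamard-type bound on the remaining lower-order residual. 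The inductive step peels off the $j = 0$ contribution via the same convexity argument and applies the inductive hypothesis at $\rho - 1$ to the residual; the closed form \eqref{eq: c_j} then emerges from the combinatorics of this recursion, and the distinct expression for $c_{\lfloor \rho \rfloor}$ reflects that the induction terminates at the base step rather than recursing further.

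The main obstacle will be tracking the coefficients through the induction and verifying that they coincide exactly with \eqref{eq: c_j} without incurring extra slack at any iteration. A useful simplification is that $g_X$ is integer valued, so \eqref{eq: plan pw} only needs to be verified at $u \in \{1, 2, \ldots\}$ rather than on the full half-line $u \geq 1$, which provides additional room in the residual estimates; if a pointwise argument on $F(u) := \mathrm{RHS}-\mathrm{LHS}$ proves delicate on the continuum, one can fall back on the discrete check $F(u+1) - F(u) \geq 0$ starting from the equality $F(1) = 0$.
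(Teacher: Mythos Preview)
Your proposal is correct and follows essentially the same route as the paper: peel off one order at a time via a pointwise inequality (the paper's recursive step, proved by a third-order Taylor expansion of $u \mapsto u^\rho$, reads $u^\rho \leq \frac{u^{1+\rho}-(u-1)^{1+\rho}}{1+\rho} + \frac{\rho}{2}\,u^{\rho-1} - \frac{\rho(\rho-1)}{2(1+\rho)}$ for $\rho \geq 2$), iterate down to $\rho - \lfloor\rho\rfloor + 1 \in [1,2]$ where Theorem~\ref{thm: 0-2}\ref{label2: UB-NSI}) terminates the recursion, and apply Bozta\c{s}'s bound \eqref{eq: from Boztas' paper} at each order. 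The only organizational difference is that the paper carries out the recursion at the expectation level (stating it as $\expectation[g_X^\rho(X)] \leq \frac{1}{1+\rho}\exp\bigl(\rho H_{\frac{1}{1+\rho}}(X)\bigr) + \frac{\rho}{2}\,\expectation[g_X^{\rho-1}(X)] - \frac{\rho(\rho-1)}{2(1+\rho)}$) rather than unrolling everything into a single pointwise inequality before taking expectations.

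The one shortcut you are missing, which eliminates what you flag as ``the main obstacle,'' concerns the additive constant. Rather than tracking $1 - \sum_j c_j(\rho)$ through the iteration, the paper observes that every step of the recursion (both the $\rho \geq 2$ step and the terminal $[1,2]$ bound) holds with equality when $X$ is deterministic. Hence the unrolled bound necessarily has the form $\sum_j c_j(\rho)\exp\bigl((\rho-j)H_{\frac{1}{1+\rho-j}}(X)\bigr) + d(\rho)$, and evaluating both sides at a deterministic $X$ (where $g_X(X)=1$ and all R\'{e}nyi entropies vanish) forces $d(\rho) = 1 - \sum_j c_j(\rho)$ for free. Only the multiplicative coefficients $c_j(\rho)$ need to be tracked, and those drop out mechanically from the $\frac{\rho}{2}$ factor in the recursion together with the $\frac{1}{1+\rho}$ and $\frac{1}{\rho}$ coefficients in the terminal step.
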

\begin{proof}
See Appendix~\ref{appendix: thm for rho>=2}.
\end{proof}

\begin{remark}
In contrast to \cite[Theorem~3]{Boztas97}, the results in
Theorems~\ref{thm: 0-2} and~\ref{thm: rho>=2} provide an
explicit upper bound on $\expectation[g_{X}^\rho(X)]$ for
$\rho \in (0, \infty)$ as a function of R\'{e}nyi entropies of $X$.
Note also that the upper bounds in \eqref{eq2: UB-NSI} and
\eqref{eq: rho>=2} coincide at $\rho=2$.
\end{remark}

\begin{remark} \label{remark: numerical}
Numerical evidence shows that none of the bounds in
\eqref{eq1: UB-NSI} and \eqref{eq: rho>=2} supersedes the other
for $\rho > 2$ (as it is next illustrated in
Examples~\ref{example: 3rd moment} and~\ref{example: 20th moment}).
Since \eqref{eq2: UB-NSI} and \eqref{eq: rho>=2} coincide at $\rho=2$,
Theorem~\ref{thm: 0-2}-\ref{label2: UB-NSI}) implies that the bound
in \eqref{eq: rho>=2} is tighter than \eqref{eq1: UB-NSI} for this
value of $\rho$.
\end{remark}

\begin{example}
\label{example: 3rd moment}
Let $X \in \{1, \ldots, 32\}$ have
the probability distribution in \eqref{eq: geometric dist.} with
$a = 0.9$ and $M = 32$.
Table~\ref{Table: third moment} compares
$\tfrac1{3} \log_{\mathrm{e}} \expectation[g_{X}^{3}(X)]$
to its various lower and upper bounds.
Notice that in this example, the upper bound in \eqref{eq: rho>=2}
improves the bound in \eqref{eq1: UB-NSI}.
\begin{table}[h!]
\caption{\label{Table: third moment} Comparison of
$\tfrac13 \log_{\mathrm{e}} \expectation[g_{X}^3(X)]$ and
bounds in Example~\ref{example: 3rd moment}.}
\renewcommand{\arraystretch}{1.5}
\centering
\begin{tabular}{|c|c||c||c|c|c|} \hline
\eqref{eq: Arikan's bound} & Theorem~\ref{theorem: improving Arikan's bound}
&  $\tfrac13 \log_{\mathrm{e}} \expectation[g_{X}^3(X)]$ & \eqref{eq: rho>=2}
& \eqref{eq1: UB-NSI} & \eqref{eq: Arikan96 - prop. 4 without conditioning} \\
lower bound & lower bound & exact value & upper bound & upper bound & upper bound
\\[0.1cm] \hline
$1.864$ & $2.593$ & $2.609$ & $2.920$ & $2.939$ & $3.360$ \\
\hline
\end{tabular}
\end{table}
\end{example}

\begin{example}
\label{example: 20th moment}
Let $X \in \{1, \ldots, 16\}$ have
the probability distribution in \eqref{eq: geometric dist.} with
$a = 0.9$ and $M = 16$. Table~\ref{Table: 20th moment}
compares $\tfrac1{20} \log_{\mathrm{e}} \expectation[g_{X}^{20}(X)]$
to its various lower and upper bounds.
\begin{table}[h!]
\caption{\label{Table: 20th moment} Comparison of
$\tfrac1{20} \log_{\mathrm{e}} \expectation[g_{X}^{20}(X)]$
and bounds in Example~\ref{example: 20th moment}.}
\renewcommand{\arraystretch}{1.5}
\centering
\begin{tabular}{|c|c||c||c|c|c|} \hline
\eqref{eq: Arikan's bound} & Theorem~\ref{theorem: improving Arikan's bound}
& $\tfrac1{20} \log_{\mathrm{e}} \expectation[g_{X}^{20}(X)]$ & \eqref{eq: rho>=2}
& \eqref{eq1: UB-NSI} & \eqref{eq: Arikan96 - prop. 4 without conditioning} \\
lower bound & lower bound & exact value & upper bound & upper bound & upper bound
\\[0.1cm] \hline
$1.439$ & $2.602$ & $2.606$ & $2.662$ & $2.657$ & $2.767$ \\
\hline
\end{tabular}
\end{table}
Note that in Table~\ref{Table: 20th moment}, the lower bound in \eqref{eq: Arikan's bound} is
quite weaker than the lower bound in Theorem~\ref{theorem: improving Arikan's bound},
which shows an excellent match with its exact value. In contrast to Example~\ref{example: 3rd moment},
Example~\ref{example: 20th moment} shows that the upper bound \eqref{eq: rho>=2} may be
weaker than \eqref{eq1: UB-NSI}.
\end{example}

\subsection{Improved bounds on guessing moments with side information}
\label{subsection: guessing - SI}

This subsection extends the lower and upper bounds in Sections~\ref{subsec: LB guessing}
and~\ref{subsec: UB guessing} to allow side information $Y$
for guessing the value of $X$. These bounds tighten the results in \cite[Theorem~1]{Arikan96}
and \cite[Proposition~4]{Arikan96} for all $\rho > 0$.

\begin{theorem} \label{theorem: LB guessing - SI}
Let $X$ and $Y$ be discrete random variables taking values on the sets
$\set{X} = \{1, \ldots, M\}$ and $\set{Y}$, respectively.
For all $y \in \set{Y}$, let $g(\cdot | y)$ be a guessing function
of $X$ given that $Y=y$. Then, for $\rho \in (0, \infty)$,
\begin{align} \label{eq: improving Arikan's bound - SI}
\frac1{\rho} \; \log \expectation \bigl[g^{\rho}(X|Y)\bigr]
\geq \sup_{\beta \in (-\rho, 0) \cup (0, \infty)} \frac1{\beta} \,
\Bigl[ H_{\frac{\beta}{\beta+\rho}}(X|Y) - \log u_M(\beta) \Bigr]
\end{align}
with $u_M(\cdot)$ as defined in \eqref{eq: u}.
\end{theorem}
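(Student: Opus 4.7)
The plan is to reduce the side-information bound to Theorem~\ref{theorem: improving Arikan's bound} by applying it to each conditional distribution $P_{X|Y=y}$ and then averaging over $Y$ in a way compatible with the exponential form of the Arimoto-R\'{e}nyi conditional entropy.

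First I would fix $\beta \in (-\rho,0)\cup(0,\infty)$ and observe that, for each $y \in \set{Y}$, the map $g(\cdot|y)\colon\set{X}\to\set{X}$ is a guessing function on $\set{X}=\{1,\ldots,M\}$. Hence Theorem~\ref{theorem: improving Arikan's bound} applied to the conditional distribution $P_{X|Y=y}$ yields
\begin{align*}
\frac{1}{\rho}\,\log\expectation\bigl[g^{\rho}(X|Y)\,\big|\,Y=y\bigr]
\;\geq\; \frac{1}{\beta}\,\Bigl[H_{\frac{\beta}{\beta+\rho}}(X|Y=y)-\log u_M(\beta)\Bigr].
\end{align*}
Multiplying through by $\rho>0$ and exponentiating (both operations are monotone and work uniformly in the sign of $\beta$), I obtain
\begin{align*}
\expectation\bigl[g^{\rho}(X|Y)\,\big|\,Y=y\bigr]
\;\geq\; u_M(\beta)^{-\rho/\beta}\,\exp\!\left(\tfrac{\rho}{\beta}\,H_{\frac{\beta}{\beta+\rho}}(X|Y=y)\right).
\end{align*}

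Next I would take $\expectation_Y[\cdot]$ on both sides. On the left this yields $\expectation[g^{\rho}(X|Y)]$. On the right, setting $\alpha=\tfrac{\beta}{\beta+\rho}$ gives the crucial identity $\tfrac{\rho}{\beta}=\tfrac{1-\alpha}{\alpha}$, so the averaged exponential is precisely the quantity appearing in \eqref{eq2: Arimoto - cond. RE}:
\begin{align*}
\expectation_Y\!\left[\exp\!\left(\tfrac{1-\alpha}{\alpha}\,H_{\alpha}(X|Y=y)\right)\right]
=\exp\!\left(\tfrac{1-\alpha}{\alpha}\,H_{\alpha}(X|Y)\right).
\end{align*}
Combining these two observations, taking logarithms and dividing by $\rho$, I recover
$\tfrac{1}{\rho}\log\expectation[g^{\rho}(X|Y)]\geq \tfrac{1}{\beta}\bigl[H_{\alpha}(X|Y)-\log u_M(\beta)\bigr]$, and finally taking the supremum over admissible $\beta$ delivers \eqref{eq: improving Arikan's bound - SI}.

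The only delicate point is the negative-order regime $\beta\in(-\rho,0)$, which forces $\alpha=\tfrac{\beta}{\beta+\rho}<0$. The monotonicity of $\exp$ and linearity of expectation make the inequality manipulations insensitive to the sign of $\tfrac{\rho}{\beta}$, so no flip of direction occurs; what must be granted is that Theorem~\ref{theorem: improving Arikan's bound} already handles R\'{e}nyi entropies of negative orders, and that the exponential identity coming from \eqref{eq2: Arimoto - cond. RE} extends to $\alpha<0$ via the same formal definition, as remarked in the paper after Definition~\ref{definition: AR conditional entropy}. Modulo this extension, which is the main conceptual (rather than computational) hurdle, the proof is essentially a one-line lift of Theorem~\ref{theorem: improving Arikan's bound} through the exponential representation of the Arimoto-R\'{e}nyi conditional entropy.
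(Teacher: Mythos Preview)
Your proposal is correct and follows essentially the same route as the paper: apply Theorem~\ref{theorem: improving Arikan's bound} conditionally for each $y$, exponentiate, average over $Y$, and invoke the exponential representation \eqref{eq2: Arimoto - cond. RE} of the Arimoto--R\'{e}nyi conditional entropy (with $\alpha=\tfrac{\beta}{\beta+\rho}$, hence $\tfrac{\rho}{\beta}=\tfrac{1-\alpha}{\alpha}$) before taking the supremum over $\beta$. The paper's write-up differs only cosmetically in that it places the supremum inside the exponential first and then pulls it outside, whereas you fix $\beta$ from the outset; both orderings yield the same chain of inequalities.
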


\begin{proof}
\begin{align}
\label{shalom}
\expectation[g^{\rho}(X|Y)]
& = \sum_{y \in \set{Y}} P_Y(y) \, \expectation[g^{\rho}(X|y)] \\
\label{hi}
& \geq \sum_{y \in \set{Y}} P_Y(y) \, \exp \left(
\sup_{\beta \in (-\rho, 0) \cup (0, \infty)}
\frac{\rho}{\beta} \left[ H_{\frac{\beta}{\beta+\rho}}(X|Y=y)
- \log u_M(\beta) \right] \right) \\
\label{hello}
& \geq \sup_{\beta \in (-\rho, 0) \cup (0, \infty)} \;
\sum_{y \in \set{Y}} P_Y(y) \, \exp \left( \frac{\rho}{\beta}
\left[ H_{\frac{\beta}{\beta+\rho}}(X|Y=y) - \log u_M(\beta) \right] \right) \\
\label{hola}
&= \sup_{\beta \in (-\rho, 0) \cup (0, \infty)} \; \exp \left( \frac{\rho}{\beta}
\left[ H_{\frac{\beta}{\beta+\rho}}(X|Y) - \log u_M(\beta) \right] \right)
\end{align}
where \eqref{hi} follows from \eqref{eq: improving Arikan's bound} with the
conditional guessing function $g(\cdot|Y=y) \colon \set{X} \to \set{X}$ where
$|\set{X}|=M$; \eqref{hola} follows from \eqref{eq2: Arimoto - cond. RE} with
$\alpha = \frac{\beta}{\rho+\beta}$.
\end{proof}

\begin{remark} \label{remark: Arikan2}
Similarly to Remark~\ref{remark: Arikan1}, the loosening of the result in
Theorem~\ref{theorem: LB guessing - SI} by replacing
the supremum over $\beta \in (-\rho, 0) \cup (0, \infty)$ in the right side of
\eqref{eq: improving Arikan's bound - SI} with the value $\beta=1$
and further using the inequality $u_M(1) \leq 1 + \log_{\mathrm{e}} M$ for
$M \geq 2$ (see \eqref{eq: u}) yields Arikan's result in \cite[(2)]{Arikan96}.
As explained in \cite{Arikan96}, Theorem~\ref{theorem: LB guessing - SI} can
be used to obtain an improved non-asymptotic lower bound on the moments of the
number of computational steps used to decode tree codes by an arbitrary
sequential decoder.
\end{remark}

\begin{theorem} \label{theorem: UB-SI}
Let $X$ and $Y$ be discrete random variables taking values on sets
$\set{X}$ and $\set{Y}$, respectively.
For all $y \in \set{Y}$, let $g_{X|Y}(\cdot | y)$ be a ranking function
of $X$ given that $Y=y$. Then, for $\rho \in (0, \infty)$,
\begin{align}
\label{eq1: UB-SI}
\expectation[g_{X|Y}^{\rho}(X|Y)]
\leq \frac1{1+\rho} \left[ \exp\Bigl(\rho H_{\frac1{1+\rho}}(X|Y) \Bigr) - 1 \right] +
\exp \Bigl( (\rho-1)^+ \, H_{\frac1{\rho}}(X|Y) \Bigr).
\end{align}
For $\rho \in (0,1)$, \eqref{eq1: UB-SI} can be tightened to
\begin{align} \label{eq: 20171012a}
\expectation[g_{X|Y}^{\rho}(X|Y)] \leq \frac1{1+\rho} \,
\exp\Bigl(\rho H_{\frac1{1+\rho}}(X|Y) \Bigr) +
\frac{\rho - (1-\rho) (2^\rho-1) (1-p^\ast)}{1+\rho}
\end{align}
with
\begin{align} \label{p^ast}
p^\ast \triangleq \underset{y \in \set{Y}}{\sup} \, \underset{x \in \set{X}}{\max} \, p_{X|Y}(x|y).
\end{align}
For $\rho \in [1,2]$, \eqref{eq1: UB-SI} can be tightened to
\begin{align} \label{eq2: UB-SI}
\expectation[g_{X|Y}^{\rho}(X|Y)] \leq \frac1{1+\rho} \,
\exp\Bigl(\rho H_{\frac1{1+\rho}}(X|Y) \Bigr)
+ \frac1\rho \,\exp\left((\rho-1) H_{\frac1\rho}(X|Y) \right)
+ \frac{\rho^2-\rho-1}{\rho (1+\rho)}.
\end{align}
Moreover, for $\rho \geq 2$,
\begin{align} \label{eq: rho>=2 - SI}
\expectation[g_{X|Y}^{\rho}(X|Y)] \leq 1 + \sum_{j=0}^{\lfloor \rho \rfloor}
c_j(\rho) \left[ \exp \left( (\rho-j) \, H_{\frac1{1+\rho-j}}(X|Y)\right) - 1 \right],
\end{align}
with $\{c_j(\rho)\}$ defined in \eqref{eq: c_j}.
\end{theorem}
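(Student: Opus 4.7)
The plan is to reduce each of \eqref{eq1: UB-SI}, \eqref{eq: 20171012a}, \eqref{eq2: UB-SI}, and \eqref{eq: rho>=2 - SI} to its unconditional counterpart in Theorems~\ref{theorem: UB-NSI}, \ref{thm: 0-2}, and~\ref{thm: rho>=2} via conditioning on $Y=y$ followed by averaging, exploiting the precise form of \eqref{eq2: Arimoto - cond. RE} to absorb the averaging. Starting from
\begin{align}
\expectation[g_{X|Y}^\rho(X|Y)] = \sum_{y \in \set{Y}} P_Y(y) \, \expectation[g_{X|Y}^\rho(X|y)], \nonumber
\end{align}
I would apply, for each fixed $y$, the relevant unconditional upper bound to $\expectation[g_{X|Y}^\rho(X|y)]$ with $H_\alpha(X)$ replaced by $H_\alpha(X|Y=y)$, and then sum over $y$ weighted by $P_Y(y)$.

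The central observation that makes everything close cleanly is that \eqref{eq2: Arimoto - cond. RE} can be rewritten, for any $\alpha \in (0,1) \cup (1,\infty)$, as
\begin{align}
\sum_{y \in \set{Y}} P_Y(y) \, \exp\!\left( \frac{1-\alpha}{\alpha} \, H_\alpha(X | Y=y) \right) = \exp\!\left( \frac{1-\alpha}{\alpha} \, H_\alpha(X | Y) \right). \nonumber
\end{align}
Setting $\alpha = \frac{1}{1+\rho}$ gives $\frac{1-\alpha}{\alpha} = \rho$; setting $\alpha = \frac{1}{\rho}$ gives $\frac{1-\alpha}{\alpha} = \rho-1$; and setting $\alpha = \frac{1}{1+\rho-j}$ gives $\frac{1-\alpha}{\alpha} = \rho-j$. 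These are exactly the exponents that appear in the unconditional bounds \eqref{eq1: UB-NSI}, \eqref{eq2: UB-NSI}, and \eqref{eq: rho>=2}, so each term of the form $\exp(c \, H_{\alpha}(X|Y=y))$ averages to $\exp(c \, H_{\alpha}(X|Y))$. Constant terms ($\tfrac{\rho}{1+\rho}$, $\tfrac{\rho^2-\rho-1}{\rho(1+\rho)}$, the $-1$'s in \eqref{eq: rho>=2}) pass through the averaging unchanged. This directly yields \eqref{eq1: UB-SI}, \eqref{eq2: UB-SI}, and \eqref{eq: rho>=2 - SI}.

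The one step that requires more than bookkeeping is \eqref{eq: 20171012a}, because the unconditional bound \eqref{eq1.5: UB-NSI} contains $p_{\max}$, which depends on the conditional distribution when one conditions on $Y=y$. After averaging, the $p_{\max}$ terms produce $\expectation\bigl[\max_{x} P_{X|Y}(x|Y)\bigr]$, and since $(1-\rho)(2^{\rho}-1) \geq 0$ for $\rho \in [0,1]$, I would then upper bound this expectation by $p^{\ast} = \sup_{y} \max_{x} P_{X|Y}(x|y)$ defined in \eqref{p^ast}, which gives exactly \eqref{eq: 20171012a}.

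The main obstacle is really only conceptual: recognizing that the particular orders chosen in the unconditional bounds are precisely those that make \eqref{eq2: Arimoto - cond. RE} convert the $P_Y$-average of the nonlinear functional of $H_\alpha(X|Y=y)$ into the same functional of $H_\alpha(X|Y)$ with no inequality loss. Once this alignment is observed, the remaining work is substitution and, in the $\rho \in [0,1]$ case only, the elementary bound $\expectation[\max_x P_{X|Y}(x|Y)] \leq p^{\ast}$.
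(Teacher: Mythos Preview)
Your proposal is correct and follows essentially the same approach as the paper: decompose $\expectation[g_{X|Y}^\rho(X|Y)]$ as $\sum_y P_Y(y)\,\expectation[g_{X|Y}^\rho(X|y)]$, apply the unconditional bounds \eqref{eq1: UB-NSI}, \eqref{eq1.5: UB-NSI}, \eqref{eq2: UB-NSI}, \eqref{eq: rho>=2} conditionally, and then use \eqref{eq2: Arimoto - cond. RE} with $\alpha=\tfrac{1}{1+\rho}$, $\alpha=\tfrac{1}{\rho}$, and $\alpha=\tfrac{1}{1+\rho-j}$ to collapse the averages. Your explicit treatment of the $p_{\max}$ term in \eqref{eq1.5: UB-NSI}---averaging to $\expectation[\max_x P_{X|Y}(x|Y)]$ and then bounding by $p^\ast$ using the sign of $(1-\rho)(2^\rho-1)$---is exactly what the paper leaves implicit when it says the extension ``is similar.''
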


\begin{proof}
From Theorem~\ref{theorem: UB-NSI} and \eqref{shalom}, for all $\rho \in [1, \infty)$,
\begin{align}
& \expectation[g_{X|Y}^{\rho}(X|Y)] \nonumber \\
& \leq \frac1{1+\rho} \left[ \sum_{y \in \set{Y}} P_Y(y) \,
\exp \Bigl( \rho H_{\frac1{1+\rho}}(X \, | \, Y=y) \Bigr) - 1 \right]
+ \sum_{y \in \set{Y}} P_Y(y) \,
\exp \Bigl( (\rho-1) \, H_{\frac1\rho}(X \, | \, Y=y) \Bigr) \label{eilat} \\
\label{arad}
& = \frac1{1+\rho} \left[ \exp \Bigl( \rho H_{\frac1{1+\rho}}(X|Y) \Bigr) - 1 \right]
+ \exp \Bigl( (\rho-1) \, H_{\frac1\rho}(X|Y) \Bigr)
\end{align}
where \eqref{eilat} follows from \eqref{eq1: UB-NSI}, and \eqref{arad} follows from
\eqref{eq2: Arimoto - cond. RE} with $\alpha = \frac1{1+\rho}$ and $\alpha = \frac1\rho$.
The proof of \eqref{eq1: UB-SI} for $\rho \in (0,1)$ is similar.

Extending \eqref{eq1.5: UB-NSI}, \eqref{eq2: UB-NSI} and \eqref{eq: rho>=2} to
\eqref{eq: 20171012a}, \eqref{eq2: UB-SI} and \eqref{eq: rho>=2 - SI}, respectively,
relies on \eqref{eq2: Arimoto - cond. RE} and it is similar to the extension of the
result in Theorem~\ref{theorem: UB-NSI} to \eqref{eq1: UB-SI}.
\end{proof}

\begin{remark}
The bound in \eqref{eq1: UB-SI} improves the result in \cite[Proposition~4]{Arikan96}
since the superiority of the bound in Theorem~\ref{theorem: UB-NSI} over the result
in Theorem~\ref{thm: Arikan's UB} is preserved after the averaging in \eqref{shalom}.
\end{remark}

\begin{remark}
Following Remark~\ref{remark: numerical}, neither of the bounds in
\eqref{eq1: UB-SI} and \eqref{eq: rho>=2 - SI} supersedes the other
for $\rho \geq 2$. Since \eqref{eq2: UB-SI} and \eqref{eq: rho>=2 - SI}
coincide for $\rho=2$, it follows from Theorem~\ref{theorem: UB-SI} that
the upper bound in \eqref{eq: rho>=2 - SI} is tighter than \eqref{eq1: UB-SI}
for this value of $\rho$. Note that both bounds are tight if $X$, conditioned
on $Y$, is deterministic, for which they are equal to
$\expectation[g_{X|Y}^{\rho}(X|Y)]=1$. Finally, note that the transition
from \eqref{eq: 20171012a} to \eqref{eq2: UB-SI} is continuous at $\rho=1$.
\end{remark}

\subsection{Relationship between guessing moments and minimum probability of error}
\label{subsection: guessing moments versus epsilon}
Let $X$ and $Y$ be discrete random variables,\footnote{The assumption that $Y$ is
a discrete random variable can be easily dispensed with.} taking values on the sets
$\set{X}$ and $\set{Y}$ respectively.
The minimum probability of error of $X$ given $Y$, denoted by
$\varepsilon_{X|Y}$, is achieved by the maximum-a-posteriori
(MAP) decision rule. Hence,
\begin{align} \label{eq: epsilon_X|Y}
\varepsilon_{X|Y} = \sum_{y \in \set{Y}} P_Y(y)
\left[ 1 - \max_{x \in \set{X}} P_{X|Y}(x|y) \right].
\end{align}
In contrast, the moments of the ranking function $\expectation[g_{X|Y}^{\rho}(X|Y)]$
quantify the number of guesses required for correctly identifying the unknown object $X$
on the basis of $Y$. It is therefore natural to establish relationships
between both quantities. First note that by definition,
\begin{align} \label{eq: 20171016-b}
1 - \varepsilon_{X|Y} = \prob[g_{X|Y}(X|Y)=1].
\end{align}

In \cite{SV18}, we derived lower and upper bounds
on $\varepsilon_{X|Y}$ as a function of $H_\alpha (X|Y)$
for an arbitrary order $\alpha$. In this section,
Theorems~\ref{theorem: LB guessing - SI}--\ref{theorem: UB-SI} provide
lower and upper bounds on guessing moments of a ranking function
$g_{X|Y}(X|Y)$ as a function of Arimoto-R\'enyi conditional
entropies. As a natural continuation to these studies, we derive
tight lower and upper bounds on $\expectation[g_{X|Y}^{\rho}(X|Y)]$ as
a function of $\varepsilon_{X|Y}$.

\begin{theorem} \label{theorem: g_X|Y - epsilon_X|Y}
Let $X$ and $Y$ be discrete random variables taking values on sets
$\set{X} = \{1, \ldots, M\}$ and $\set{Y}$, respectively.
Then, for $\rho > 0$,
\begin{align} \label{NY}
f_{\rho}(\varepsilon_{X|Y}) & \leq \expectation[g_{X|Y}^{\rho}(X|Y)] \\
\label{NY2}
& \leq 1 + \left( \tfrac1{M-1} \sum_{j=2}^M j^\rho - 1 \right) \varepsilon_{X|Y}
\end{align}
where the function $f_{\rho} \colon [0,1) \to [0, \infty)$ is given by
\begin{align}
\label{eq: f-thm10}
& f_{\rho}(u) = (1-u) \sum_{j=1}^{k_u} j^\rho + [1 - (1-u)k_u] (k_u+1)^{\rho}, \quad u \in [0,1)\\
\label{eq: k-thm10}
& k_u = \left\lfloor \frac1{1-u} \right\rfloor.
\end{align}
Furthermore, the lower and upper bounds in \eqref{NY} and \eqref{NY2} are tight:
\begin{itemize}
\item Let $p_{\max}(y) = \underset{{x \in \set{X}}}{\max} P_{X|Y}(x|y)$ for $y \in \set{Y}$.
The lower bound is attained if and only if $p_{\max}(y) = p_{\max}$
is fixed for all $y \in \set{Y}$, and conditioned on $Y=y$, $X$ has
$\left\lfloor\frac1{p_{\max}}\right\rfloor$ masses equal to $p_{\max}$,
and an additional mass equal to $1-p_{\max} \left\lfloor\frac1{p_{\max}}\right\rfloor$
whenever $\frac1{p_{\max}}$ is not an integer.
\item The upper bound is attained if and only if regardless of $y \in \set{Y}$,
conditioned on $Y=y$, $X$ is equiprobable among its $M-1$ conditionally
least likely values on $\set{X}$.
\end{itemize}
\end{theorem}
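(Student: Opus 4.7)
The plan is to reduce the conditional bounds to two extremal problems for a single sorted probability vector with prescribed maximum, and then lift back by averaging over $Y$. Writing $\varepsilon_y \triangleq \varepsilon_{X|Y=y} = 1 - p_{\max}(y)$, one has
\begin{align*}
\expectation[g_{X|Y}^{\rho}(X|Y)] = \sum_{y \in \set{Y}} P_Y(y)\, \expectation[g_{X|y}^{\rho}(X|y)]
\quad \text{and} \quad
\varepsilon_{X|Y} = \sum_{y \in \set{Y}} P_Y(y)\, \varepsilon_y,
\end{align*}
so it suffices to analyze $\expectation[g_{X|y}^{\rho}(X|y)]$ as a function of the sorted conditional mass function with prescribed peak $1-\varepsilon_y$.

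Fix $y$ and let $p_{(1)}^y \geq \cdots \geq p_{(M)}^y$ denote the sorted values of $P_{X|Y}(\cdot|y)$, so that $\expectation[g_{X|y}^{\rho}(X|y)] = \sum_{k=1}^{M} p_{(k)}^y\, k^{\rho}$ subject to $p_{(1)}^y = 1-\varepsilon_y$ and $\sum_k p_{(k)}^y = 1$. Writing $F_k^y = \sum_{j \leq k} p_{(j)}^y$ and using Abel summation,
\begin{align*}
\sum_{k=1}^{M} p_{(k)}^y k^{\rho} = M^{\rho} - \sum_{k=1}^{M-1} F_k^y \bigl[(k+1)^{\rho} - k^{\rho}\bigr],
\end{align*}
the extremization over feasible non-increasing tails decouples into the extremization of the cumulative $F_k^y$'s (whose coefficients are positive). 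The minimum of $\sum_k p_{(k)}^y k^\rho$ is attained by making each $F_k^y$ as large as possible, i.e., by packing the low-index slots up to the cap: $p_{(1)}^y = \cdots = p_{(k_{\varepsilon_y})}^y = 1-\varepsilon_y$ with residual $1 - k_{\varepsilon_y}(1-\varepsilon_y)$ at position $k_{\varepsilon_y}+1$, yielding $\expectation[g_{X|y}^\rho(X|y)] \geq f_\rho(\varepsilon_y)$. The maximum is attained by flattening the tail, $p_{(k)}^y = \varepsilon_y/(M-1)$ for $k \geq 2$ (feasible because $p_{\max}(y) \geq 1/M$ implies $\varepsilon_y/(M-1) \leq 1-\varepsilon_y$), and equals $1 + \varepsilon_y\bigl[(M-1)^{-1}\sum_{j=2}^M j^\rho - 1\bigr]$. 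Since this upper bound is affine in $\varepsilon_y$, averaging over $y$ immediately yields \eqref{NY2}.

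To lift the pointwise lower bound into \eqref{NY}, I would establish convexity of $f_\rho$ on $[0,1)$ and then invoke Jensen's inequality. On each interval $u \in [1-\tfrac1k,\, 1-\tfrac1{k+1})$ one has $k_u = k$ and a direct computation gives
\begin{align*}
f_\rho(u) = (k+1)^\rho + (1-u)\Bigl[\sum_{j=1}^{k} j^\rho - k(k+1)^\rho\Bigr],
\end{align*}
which is affine in $u$ with slope $s_k = k(k+1)^\rho - \sum_{j=1}^{k} j^\rho$. Continuity at each breakpoint $u = 1 - \tfrac1{k+1}$ follows since both branches evaluate there to $(k+1)^{-1}\sum_{j=1}^{k+1} j^\rho$, while the slope increments $s_{k+1} - s_k = (k+1)\bigl[(k+2)^\rho - (k+1)^\rho\bigr] > 0$ establish convexity. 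Jensen's inequality then gives $\expectation[g_{X|Y}^\rho(X|Y)] \geq \sum_y P_Y(y)\, f_\rho(\varepsilon_y) \geq f_\rho(\varepsilon_{X|Y})$.

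For the equality cases, the upper bound \eqref{NY2} is tight exactly when the pointwise maximizing structure holds for every $y$, i.e., conditionally on $Y=y$, $X$ is equiprobable on its $M-1$ least likely values. Equality in \eqref{NY} requires both the pointwise minimizing form of each $P_{X|Y=y}$ and equality in Jensen's inequality; the latter, combined with the strict convexity of $f_\rho$ across its breakpoints, forces $\varepsilon_y$ to be $y$-independent, so that $p_{\max}(y) = p_{\max}$ is fixed and the conditional distribution assumes the prescribed form. The main obstacle is the convexity verification for $f_\rho$: the function is only piecewise linear, so one must explicitly compute the slopes on each segment, verify their monotonicity, and carefully match values at the breakpoints $u = 1 - 1/(k+1)$. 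A secondary point is the rearrangement-style identification of the extremal sorted probability vectors in the pointwise step, which the Abel identity makes transparent.
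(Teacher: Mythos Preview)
Your proof is correct and takes essentially the same route as the paper: pointwise extremization of $\sum_k p_{(k)}^y k^\rho$ for each $y$, then averaging (the upper bound is affine in $\varepsilon_y$) and Jensen via convexity of $f_\rho$ for the lower bound, with convexity established exactly as you do by showing the piecewise-linear slopes $s_k = k(k+1)^\rho - \sum_{j\le k} j^\rho$ increase and the pieces match at the breakpoints. The only difference is cosmetic: your Abel summation identity handles both pointwise extrema at once, whereas the paper invokes a separate rearrangement lemma for the upper bound and a direct mass-shifting argument for the lower bound.
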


\begin{proof}
For $y \in \set{Y}$ and $i \in \{1, \ldots, M\}$, let $x_i(y) \in \set{X}$ satisfy
$g_{X|Y}\bigl(x_i(y)|y\bigr)=i$.
By the definition of $g_{X|Y}(\cdot | \cdot)$ on $\set{X} \times \set{Y}$,
it follows that
\begin{align} \label{170824-2}
P_{X|Y}\bigl(x_1(y)|y\bigr) \geq P_{X|Y}\bigl(x_2(y)|y\bigr)
\geq \ldots \geq P_{X|Y}\bigl(x_M(y)|y\bigr).
\end{align}
For $\rho \in (0, \infty)$,
\begin{align}
\label{yiannis}
\expectation[ g_{X|Y}^{\rho}(X|Y)]
&= \sum_{y \in \set{Y}} P_Y(y) \, \expectation[ g_{X|Y}^{\rho}(X|y)] \\
%\label{amos}
%&= \sum_{y \in \set{Y}} \left\{ P_Y(y) \sum_{i=1}^M i^\rho
%P_{X|Y}\bigl(x_i(y) | y \bigr) \right\} \\
\label{yossi}
&= \sum_{y \in \set{Y}} \left\{ P_Y(y) \left( \max_{x \in \set{X}}
P_{X|Y}(x|y) + \sum_{i=2}^M i^\rho P_{X|Y}\bigl(x_i(y) | y \bigr) \right) \right\} \\
\label{yuval}
&= 1 - \varepsilon_{X|Y} + \sum_{y \in \set{Y}} \left\{ P_Y(y)
\sum_{i=2}^M i^\rho P_{X|Y}\bigl(x_i(y) | y \bigr) \right\} \\
\label{ido}
&\leq 1 - \varepsilon_{X|Y} + \frac1{M-1} \sum_{y \in \set{Y}}
\left\{ P_Y(y) \left(\sum_{i=2}^M i^\rho\right) \left( \sum_{i=2}^M P_{X|Y}\bigl(x_i(y) | y \bigr) \right) \right\} \\
\label{emre}
&= 1 - \varepsilon_{X|Y} + \frac1{M-1} \sum_{y \in \set{Y}}
\left\{ P_Y(y) \left( 1 - \max_{x \in \set{X}} P_{X|Y}(x|y) \right) \right\} \sum_{i=2}^M i^\rho \\
\label{rudi}
&= 1 + \left( \frac1{M-1} \sum_{i=2}^M i^\rho - 1 \right) \varepsilon_{X|Y}
\end{align}
where \eqref{yossi} and \eqref{emre} follow from \eqref{170824-2};
\eqref{yuval} and \eqref{rudi} follow from \eqref{eq: epsilon_X|Y};
\eqref{ido} follows from Lemma~\ref{lemma-2: f} (see Appendix~\ref{app: d}),
for which we take the strictly monotonically increasing function $f$
to be given by
\begin{align} \label{eq: 20171016-c}
f(i) = (i+1)^\rho, \quad i = 1, \ldots, M-1,
\end{align}
and $q_i \leftarrow P_{X|Y}\bigl(x_{i+1}(y) | y \bigr)$.
This proves the upper bound in \eqref{NY2}.
A necessary and sufficient condition for attaining the upper bound in \eqref{NY2}
follows from Lemma~\ref{lemma-2: f} and due to the strict
monotonicity of the function in \eqref{eq: 20171016-c} for $\rho > 0$. Hence, it follows
that \eqref{ido} is satisfied with equality if and only if, for every $y \in \set{Y}$,
$q_i(y) \triangleq P_{X|Y}\bigl(x_{i+1}(y) | y \bigr)$ is fixed for all $i \in \{1, \ldots, M-1\}$;
due to \eqref{170824-2}, this is equivalent to the requirement that regardless of $y \in \set{Y}$,
conditioned on $Y=y$, the $M-1$ least probable values of $X$ are equiprobable.

To show \eqref{NY}, we write the conditional moment of the ranking function as
\begin{align} \label{dor}
\expectation[ g_{X|Y}^{\rho}(X|Y)]
&= \sum_{y \in \set{Y}} \left\{ P_Y(y)
\sum_{i=1}^M i^\rho P_{X|Y}\bigl(x_i(y) | y \bigr) \right\},
\end{align}
and we denote the conditional error probability given the observation $Y=y$ by
$\varepsilon_{X|Y}(y)= 1-  p_{\max}(y)$. Note from \eqref{170824-2} that $x_1(y)$
is a mode of $P_{X|Y}(\cdot|y)$, and we have
\begin{align} \label{thm 10 - a}
P_{X|Y}\bigl(x_1(y)|y\bigr) = p_{\max}(y) = 1 - \varepsilon_{X|Y}(y).
\end{align}
The inner sum in the right side
of \eqref{dor} is minimized, for
a given value of $\varepsilon_{X|Y}(y)$, by
\begin{align}  \label{thm 10 - b}
P_{X|Y}(x_i(y)|y) =
\begin{dcases}
1-\varepsilon_{X|Y}(y) & \quad i = 1, \ldots, \left\lfloor \frac1{1-\varepsilon_{X|Y}(y)} \right\rfloor \\[0.1cm]
1-(1-\varepsilon_{X|Y}(y)) \left\lfloor \frac1{1-\varepsilon_{X|Y}(y)} \right\rfloor & \quad i =
\left\lfloor \frac1{1-\varepsilon_{X|Y}(y)} \right\rfloor +1 \\
0 & \quad \text{otherwise.}
\end{dcases}
\end{align}
In order to show it, note that according to \eqref{170824-2}, any perturbation of $P_{X|Y}(\cdot|y)$ in \eqref{thm 10 - b}
necessarily shifts mass from $x_i(y)$ to $x_j(y)$ with $j>i$; since $\{k^\rho\}_{k \geq 1}$ is positive and monotonically
increasing in $k$ for $\rho > 0$, this can only increase the inner sum in the right side of \eqref{dor}.

From the minimizing conditional distribution in \eqref{thm 10 - b}, for a given value of $\varepsilon_{X|Y}(y)$, and
\eqref{eq: f-thm10}--\eqref{eq: k-thm10}
\begin{align}  \label{thm 10 - c}
\sum_{i=1}^M i^\rho P_{X|Y}\bigl(x_i(y) | y \bigr) \geq f_{\rho}\bigl( \varepsilon_{X|Y}(y) \bigr).
\end{align}
Due to the convexity of $f_{\rho} \colon [0,1) \to [0, \infty)$ in \eqref{eq: f-thm10}--\eqref{eq: k-thm10} for $\rho>0$
(see Lemma~\ref{lemma: f is convex} in Appendix~\ref{app: d}), and since
\begin{align}  \label{thm 10 - d}
\varepsilon_{X|Y} = \sum_{y \in \set{Y}} P_Y(y) \varepsilon_{X|Y}(y),
\end{align}
the lower bound in \eqref{NY} follows from \eqref{dor}, \eqref{thm 10 - c} and Jensen's inequality.
This also yields the necessary and sufficient condition for the attainability of the bound in \eqref{NY},
as it specified in the statement of the theorem (note that, from \eqref{thm 10 - a}, $p_{\max}(y)$
is fixed for all $y \in \set{Y}$ if and only if $\varepsilon_{X|Y}(y)$ is so).
\end{proof}

\begin{remark} \label{remark: extreme points}
The lower and upper bounds in \eqref{NY} and \eqref{NY2} coincide in each of the extreme cases
$\varepsilon_{X|Y} = 0$ and $\varepsilon_{X|Y} = 1-\frac1M$. The former and latter values refer,
respectively, to the following cases:
\begin{itemize}
\item $X$ is a deterministic function of $Y$,
\item $X$ and $Y$ are independent, and $X$ is equiprobable.
\end{itemize}
\end{remark}

\begin{example} \label{example: 4*4}
Let $X$ and $Y$ be random variables which take values on $\set{X} = \{1, 2, 3, 4\}$,
and let
\begin{align} \label{matrix 4X4}
\bigl[P_{XY}(x,y)\bigr]_{(x,y) \in \set{X}^2}
= \frac{1}{52}\left( \begin{array}{rrrr}
10  &   1  &   1  &   1 \\
 1  &  10  &   1  &   1 \\
 1  &   1  &  10  &   1 \\
 1  &   1  &   1  &  10
\end{array}
\right).
\end{align}
It follows from \eqref{matrix 4X4} that $P_Y(y) = \tfrac14$ for all $y \in \set{X}$, and
we can select the conditional ranking function of $X$ given $Y$ to satisfy $g_{X|Y}(x|1) = x$
for all $x \in \set{X}$ (since 1 is the most likely
value of $X$ given $Y=1$, and 2,3,4 are conditionally equiprobable
given $Y=1$); moreover, symmetry in \eqref{matrix 4X4} yields
\begin{align}
\label{eq: moment rho - SI}
\expectation[g_{X|Y}^{\rho}(X|Y)]
&= \tfrac1{13} \left(10+2^\rho + 3^\rho + 4^\rho\right).
\end{align}
The result in \eqref{eq: moment rho - SI} coincides with the upper bound in \eqref{NY2}
since, regardless of $y$, $P_{X|Y}(x| y) = \tfrac1{13}$ for the $|\set{X}|-1=3$ least
probable values of $X$ given $Y=y$. This can be verified directly since it follows from
\eqref{eq: epsilon_X|Y} and \eqref{matrix 4X4} that $\varepsilon_{X|Y} = \frac{3}{13}$,
and then the upper bound in \eqref{NY2} (with $M=4$) is equal to the right side
of \eqref{eq: moment rho - SI}.
\end{example}

\vspace*{0.2cm}
In the next example, we illustrate the locus of attainable values of
$(\varepsilon_{X|Y}, \log_{\mathrm{e}} \expectation[g_{X|Y}^{\rho}(X|Y)])$ for a fixed $M$.
The extreme cases identified in Remark~\ref{remark: extreme points} correspond to the
points $(0,0)$ and $\left(1-\frac1M, \log_{\mathrm{e}} \left(\frac1M \sum_{j=1}^M j^\rho\right) \right)$;
in these extreme cases, there is a one-to-one correspondence between the two quantities.

\begin{example} \label{example: rho=1}
Let $X$ be a random variable taking values on $\set{X}$ with $|\set{X}|=M \geq 2$.
Letting $\rho=1$ in \eqref{NY}--\eqref{eq: k-thm10} yields
\begin{align}  \label{eq: 20171014-c}
& 1+ \tfrac12 (1+\varepsilon_{X|Y}) \left\lfloor \frac1{1 - \varepsilon_{X|Y}} \right\rfloor
- \tfrac12 (1-\varepsilon_{X|Y}) \left\lfloor \frac1{1 - \varepsilon_{X|Y}} \right\rfloor^2
\leq \expectation[g_{X|Y}(X|Y)] \leq 1 + \tfrac12 M \varepsilon_{X|Y}
\end{align}
where both upper and lower bounds on the expected number of guesses in \eqref{eq: 20171014-c}
are attainable for any value of $\varepsilon_{X|Y} \in [0, 1-\frac1M]$.
The plots in Figure~\ref{fig: rho=1} illustrate the tight bounds in \eqref{eq: 20171014-c}
for a fixed $M$.
\begin{figure}[h]
\hspace*{0.5cm}
\includegraphics[width=8.0cm]{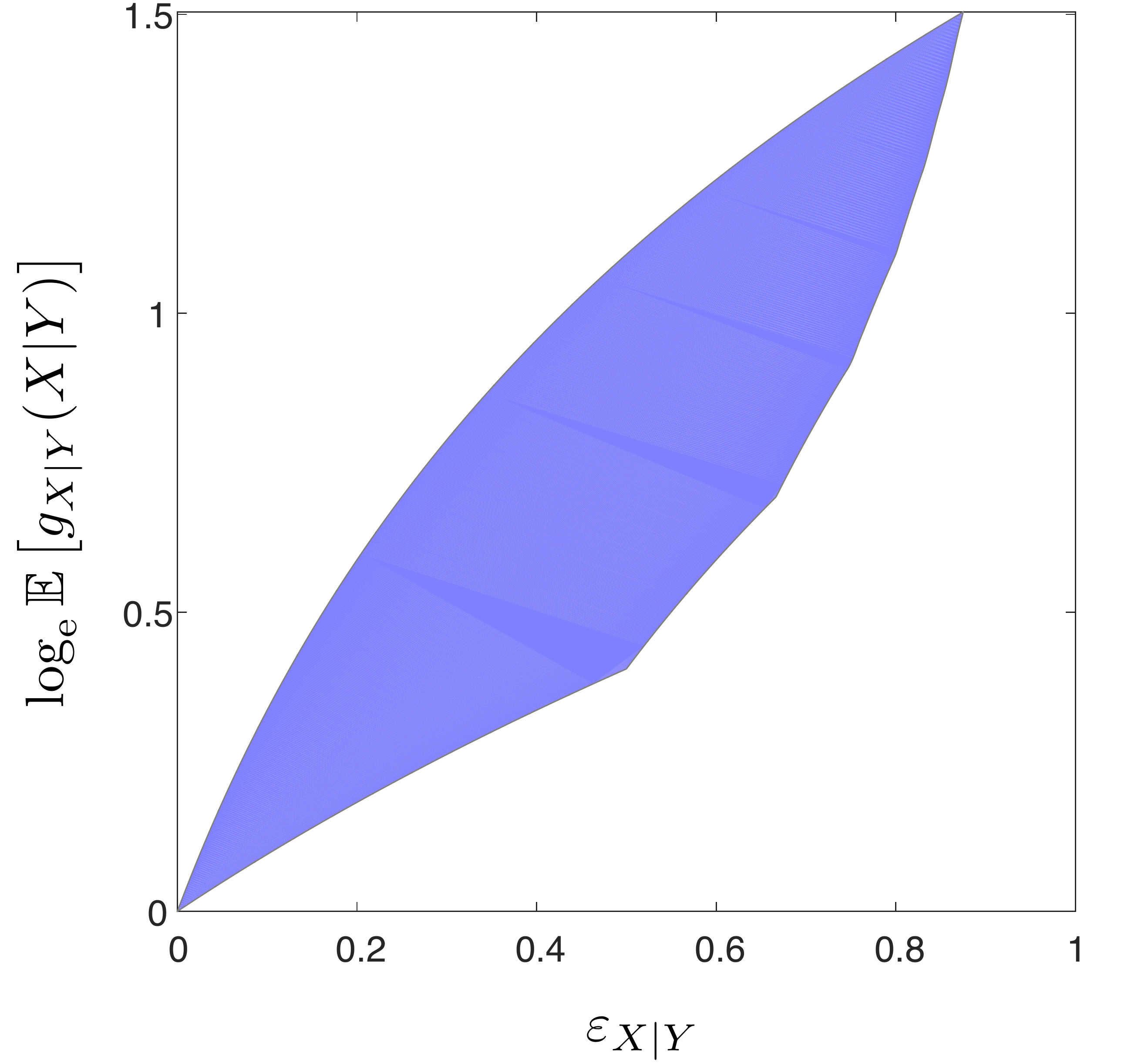}
\hspace*{0.5cm}
\includegraphics[width=8.2cm]{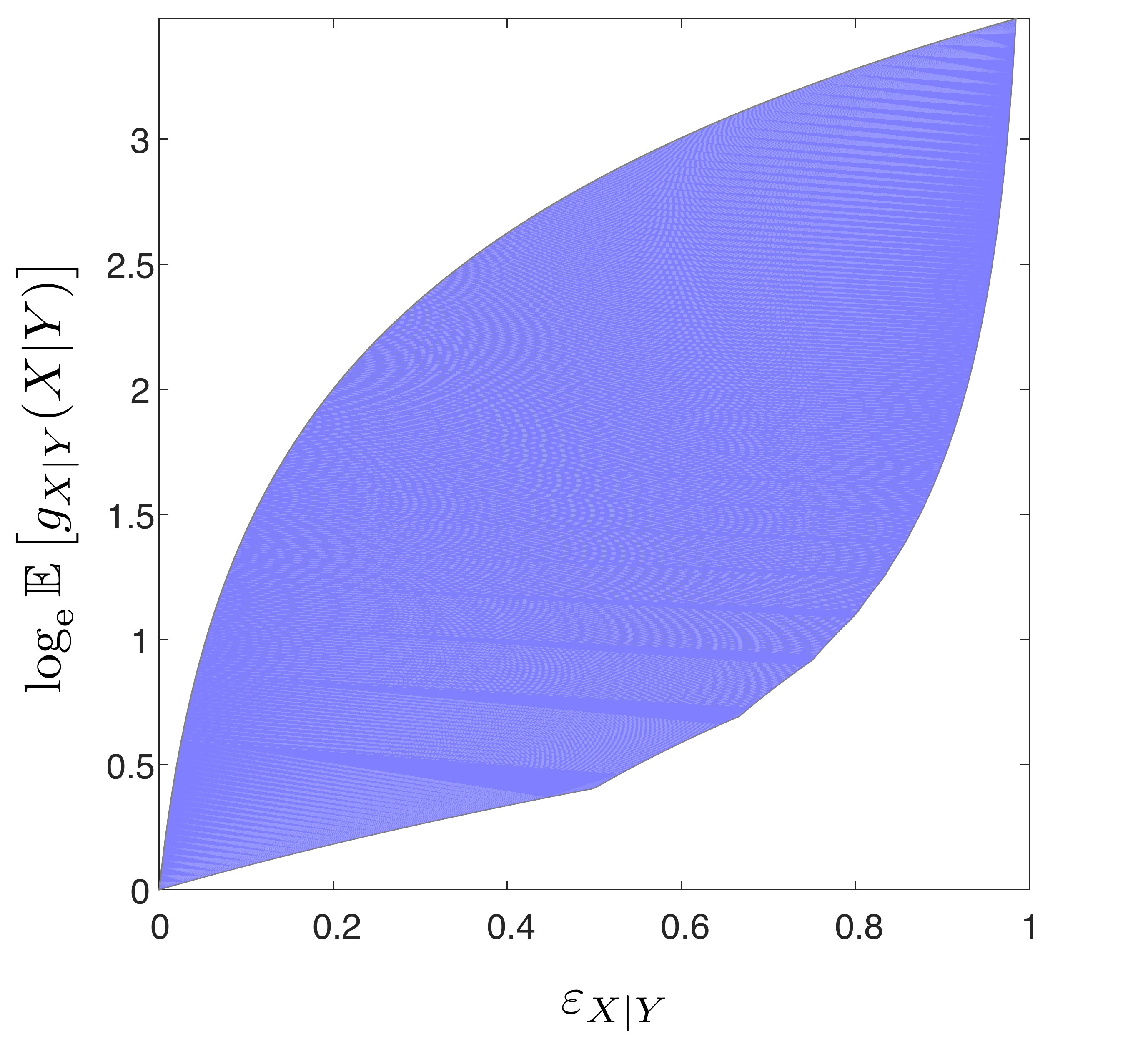}
\hspace*{-0.3cm}
\caption{\label{fig: rho=1} Example~\ref{example: rho=1}: locus of attainable values of
$(\varepsilon_{X|Y}, \log_{\mathrm{e}} \expectation[g_{X|Y}(X|Y)])$. The random variable
$X$ takes $M=8$ (left plot) or $M=64$ (right plot) possible values.}
\end{figure}
\end{example}

In view of Theorems~\ref{theorem: improving Arikan's bound} and~\ref{theorem: g_X|Y - epsilon_X|Y},
the next result provides an explicit lower bound on $\varepsilon_{X|Y}$ as a function of
$H_{\alpha}(X|Y)$ for any non-zero $\alpha < 1$.
\newpage
\begin{theorem} \label{thm: 20171020}
Let $X$ and $Y$ be discrete random variables taking values on sets
$\set{X} = \{1, \ldots, M\}$ and $\set{Y}$, respectively. Then,
for all $\alpha \in (-\infty, 0) \cup (0,1)$,
\begin{align}
\label{eq: 20171020-b}
\varepsilon_{X|Y} \geq \sup_{\rho > 0} \left \{ \frac{\exp\left(\bigl(\frac1{\alpha}-1\bigr) \,
\Bigl[ H_{\alpha}(X|Y) - \log u_M\Bigl(\frac{\alpha \rho}{1-\alpha}\Bigr) \Bigr]
\right)-1}{\tfrac1{M-1} \overset{M}{\underset{j=2}{\sum}} j^\rho - 1} \right\}
\end{align}
with $u_M(\cdot)$ as defined in \eqref{eq: u}.
\end{theorem}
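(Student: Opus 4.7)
The plan is to combine two bounds on the ranking-function guessing moment $\expectation[g_{X|Y}^{\rho}(X|Y)]$: a lower bound that involves $H_{\alpha}(X|Y)$, provided by Theorem~\ref{theorem: LB guessing - SI}, and an upper bound that involves $\varepsilon_{X|Y}$, namely inequality \eqref{NY2} of Theorem~\ref{theorem: g_X|Y - epsilon_X|Y}. Chaining the lower bound below $\expectation[g_{X|Y}^{\rho}(X|Y)]$ and then above by the upper bound, and finally solving for $\varepsilon_{X|Y}$, yields an explicit lower bound on the MAP error probability in terms of $H_\alpha(X|Y)$. Optimizing over the free parameter $\rho>0$ produces \eqref{eq: 20171020-b}.

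Concretely, the first step is to apply \eqref{eq: improving Arikan's bound - SI} to the ranking function $g_{X|Y}$, obtaining
\begin{align}
\expectation[g_{X|Y}^{\rho}(X|Y)] \geq \exp\!\left(\tfrac{\rho}{\beta}\left[H_{\frac{\beta}{\beta+\rho}}(X|Y) - \log u_M(\beta)\right]\right)
\end{align}
for every $\beta\in(-\rho,0)\cup(0,\infty)$. I then make the change of variable $\beta=\frac{\alpha\rho}{1-\alpha}$, so that $\frac{\beta}{\beta+\rho}=\alpha$ and $\frac{\rho}{\beta}=\frac{1}{\alpha}-1$. The second step is to combine this with \eqref{NY2}, obtaining
\begin{align}
\exp\!\left(\left(\tfrac{1}{\alpha}-1\right)\!\left[H_{\alpha}(X|Y) - \log u_M\!\left(\tfrac{\alpha\rho}{1-\alpha}\right)\right]\right) \leq 1 + \left(\tfrac{1}{M-1}\sum_{j=2}^{M}j^{\rho}-1\right)\varepsilon_{X|Y}.
\end{align}
Solving for $\varepsilon_{X|Y}$ (the coefficient $\frac{1}{M-1}\sum_{j=2}^{M}j^{\rho}-1$ is strictly positive for any $\rho>0$ since each $j^\rho\geq 2^\rho>1$) and then taking the supremum over $\rho>0$ delivers \eqref{eq: 20171020-b}.

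The only nontrivial point is to verify that the substitution $\beta=\frac{\alpha\rho}{1-\alpha}$ always lies in the admissible range $(-\rho,0)\cup(0,\infty)$ required by Theorem~\ref{theorem: LB guessing - SI}. For $\alpha\in(0,1)$, we have $\beta>0$, which is fine. For $\alpha<0$, since $1-\alpha>0$ and $\alpha\rho<0$ we get $\beta<0$; moreover, $\beta>-\rho$ is equivalent to $\frac{\alpha}{1-\alpha}>-1$, i.e.\ $\alpha>\alpha-1$, which holds trivially. Hence the choice of $\beta$ is admissible throughout $\alpha\in(-\infty,0)\cup(0,1)$, and no additional restriction on $\rho$ is introduced. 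The main (mild) obstacle is thus bookkeeping of signs: when $\alpha<0$ the factor $\frac{1}{\alpha}-1$ is negative, so one must keep in mind that the exponent in the numerator of \eqref{eq: 20171020-b} can be negative while the bound remains meaningful because the numerator lies in $[0,\infty)$ whenever the right-hand side is a lower bound on $\varepsilon_{X|Y}\in[0,1]$.
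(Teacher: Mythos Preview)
Your proposal is correct and mirrors the paper's own proof: both combine the lower bound on $\expectation[g_{X|Y}^{\rho}(X|Y)]$ from Theorem~\ref{theorem: LB guessing - SI} with the upper bound \eqref{NY2} from Theorem~\ref{theorem: g_X|Y - epsilon_X|Y}, then substitute $\beta=\frac{\alpha\rho}{1-\alpha}$ and supremize over $\rho>0$. Your verification that this $\beta$ lies in $(-\rho,0)\cup(0,\infty)$ for all $\alpha\in(-\infty,0)\cup(0,1)$ matches the paper's admissibility check.
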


\begin{proof}
Combining \eqref{eq: improving Arikan's bound - SI} and \eqref{NY2} yields, for every
$\rho > 0$ and $\beta \in (-\rho, 0) \cup (0, \infty)$,
\begin{align} \label{eq: 20171020-a}
\varepsilon_{X|Y} \geq \frac{\exp\left(\frac{\rho}{\beta} \,
\Bigl[ H_{\frac{\beta}{\beta+\rho}}(X|Y) - \log u_M(\beta) \Bigr]
\right)-1}{\tfrac1{M-1} \overset{M}{\underset{j=2}{\sum}} j^\rho - 1}.
\end{align}
Fix $\alpha \in (-\infty, 0) \cup (0,1)$,
and let the above free parameters $\beta$ and $\rho$ satisfy
$\rho = \bigl(\frac1{\alpha}-1\bigr) \beta$ (note that $\alpha$ cannot be zero or larger than 1,
as otherwise $\beta \notin (-\rho, 0) \cup (0, \infty)$). Supremizing the right side of
\eqref{eq: 20171020-a} over $\rho > 0$ yields \eqref{eq: 20171020-b}.
\end{proof}

In \cite{SV18}, we derived the following lower bounds on $\varepsilon_{X|Y}$ as a function of $H_{\alpha}(X|Y)$:
\begin{enumerate}[1)]
\item A generalization of Fano's inequality in \cite[Theorem~3]{SV18} holds for $\alpha > 0$, and it is given by
\begin{align} \label{eq1: generalized Fano}
H_{\alpha}(X | Y) \leq \log M -
d_{\alpha}\bigl( \varepsilon_{X|Y} \| 1-\tfrac1M \bigr)
\end{align}
with $d_{\alpha}(\cdot \| \cdot)$ as defined in \eqref{eq1: binary RD}.
\item
An explicit lower bound in \cite[Theorem~6]{SV18} holds for $\alpha < 0$, and it is given by
\begin{align} \label{eq: LB via HolderI}
\varepsilon_{X|Y} \geq \exp \left( \frac{1-\alpha}{\alpha} \,
\Bigl[ H_{\alpha}(X|Y) - \log(M-1) \Bigr] \right).
\end{align}
\end{enumerate}
Example~\ref{example: 20171022} includes numerical comparisons of the lower bounds on $\varepsilon_{X|Y}$ in
\eqref{eq: 20171020-b}, \eqref{eq1: generalized Fano} and \eqref{eq: LB via HolderI}.

\begin{remark} \label{remark: Shannon}
Shannon's inequality \cite{Shannon58}
(see also \cite{verduITA2011}) gives an explicit lower bound on
$\varepsilon_{X|Y}$ as a function of $H(X|Y)$ when $M$ is finite:
\begin{align} \label{eq: Shannon}
\varepsilon_{X|Y} \geq \frac16 \, \frac{H(X|Y)}{\log M + \log \log M - \log H(X|Y)},
\end{align}
and the right side of \eqref{eq: Shannon}
does not depend on the base of the logarithm. The bound in \eqref{eq: 20171020-b}
becomes trivial in the limit where $\alpha \uparrow 1$ since, for any fixed $\rho > 0$,
\eqref{eq: u} implies that $u_M\Bigl(\frac{\alpha \rho}{1-\alpha}\Bigr) \to 1$,
and therefore the lower bound on $\varepsilon_{X|Y}$ tends to zero in this case.
Nevertheless, numerical experimentation shows that the convergence of this bound in
\eqref{eq: 20171020-b} to zero is only affected by values of $\alpha$ very close to~1,
as it is illustrated in Example~\ref{example: 20171022} with a comparison to Shannon's
lower bound in \eqref{eq: Shannon}.
\end{remark}

\begin{example} \label{example: 20171022}
Let $X$ and $Y$ be random variables taking values on $\set{X} = \{1, 2, 3, 4\}$,
and let
\begin{align} \label{20171022-l}
\bigl[P_{XY}(x,y)\bigr]_{(x,y) \in \set{X}^2}
= \frac{1}{100}\left( \begin{array}{rrrr}
 9  &   3  &   4  &   9 \\
 9  &   9  &   3  &   4 \\
 4  &   9  &   9  &   3 \\
 3  &   4  &   9  &   9
\end{array}
\right).
\end{align}
It can be verified from \eqref{eq: epsilon_X|Y} that $\varepsilon_{X|Y} = \tfrac{16}{25} = 0.640$.
\begin{table}[h!]
\caption{\label{Table: 20171024} Example~\ref{example: 20171022}: Lower bounds on
$\varepsilon_{X|Y}$.}
\renewcommand{\arraystretch}{1.5}
\centering
\begin{tabular}{|c|c||c|c|} \hline
$\alpha$ & \eqref{eq: 20171020-b} & \eqref{eq1: generalized Fano}
& \eqref{eq: LB via HolderI} \\[0.1cm] \hline
$-1$ & 0.463 & -- & 0.447 \\
$-\tfrac12$ & 0.475 & -- & 0.355 \\
$-\tfrac14$ & 0.482 & -- & 0.206 \\
\; $\tfrac15$  & 0.494 & 0.523 & -- \\
\; $\tfrac12$  & 0.502 & 0.530 & -- \\
\; $\tfrac45$  & 0.510 & 0.536 & -- \\
\hline
\end{tabular}
\end{table}
Table~\ref{Table: 20171024} shows a slight advantage of the
lower bound in \eqref{eq1: generalized Fano} over \eqref{eq: 20171020-b} for $\alpha \in (0,1)$,
and a superiority of the lower bound in \eqref{eq: 20171020-b} over \eqref{eq: LB via HolderI}
for some negative values of $\alpha$.

In view of Remark~\ref{remark: Shannon}, the lower bound in \eqref{eq: 20171020-b} for $\alpha$
close to~1 is compared with Shannon's lower bound in \eqref{eq: Shannon}.
For $\alpha = 0.99$, the lower bound in \eqref{eq: 20171020-b} is equal to 0.515 (note that it
is slightly looser than \eqref{eq1: generalized Fano}, which is equal to 0.540); on the
other hand, the lower bound in \eqref{eq: Shannon} is equal to 0.146.
\end{example}

Theorem~\ref{theorem: g_X|Y - epsilon_X|Y} establishes relationships between the $\rho$-th
moment of the optimal guessing function, for fixed $\rho > 0$, and the MAP error probability.
This characterizes the exact locus of their attainable values, as it is shown in Figure~\ref{fig: rho=1}
for $\rho=1$. A natural question is whether the MAP error
probability can be uniquely determined on the basis of the knowledge of these $\rho$-th moments
for all $\rho > 0$. The following result answers this question in the affirmative, also suggesting
an easy way to determine the MAP error probability on the basis of the knowledge of these $\rho$-th
moments at an arbitrarily small right neighborhood of $\rho=0$.
\begin{theorem}  \label{thm: 20171022}
Let $X$ and $Y$ be discrete random variables taking values on sets
$\set{X} = \{1, \ldots, M\}$ and $\set{Y}$, respectively. For an integer
$k \geq 0$, denote
\begin{align}  \label{eq: 20171022-a}
z_k = \frac{\mathrm{d}^k}{\mathrm{d}\rho^k} \; \expectation[g_{X|Y}^{\rho}(X|Y)] \Bigl|_{\rho=0}.
\end{align}
Then,
\begin{align}  \label{eq: 20171025-b}
\varepsilon_{X|Y} = 1 - \frac1{c_M}  \, \left|\begin{array}{cccc}
 z_0  &   1  &   \cdots  &   1 \\
 z_1  &  \log_{\mathrm{e}} 2  &   \cdots  &   \log_{\mathrm{e}} M \\
 \vdots  & \vdots & \vdots & \vdots \\
 z_{M-1}  &  \log_{\mathrm{e}}^{M-1} 2   &  \cdots  &   \log_{\mathrm{e}}^{M-1} M
\end{array}
\right|
\end{align}
with
\begin{align} \label{eq: 20171031-b}
c_M = \begin{dcases}
\log_{\mathrm{e}} 2, & \quad M=2,  \\
\prod_{k=2}^M \log_{\mathrm{e}} k \, \prod_{2 \leq i < j \leq M} \log_{\mathrm{e}} \Bigl(\frac{j}{i}\Bigr),  & \quad M \geq 3.
\end{dcases}
\end{align}
\end{theorem}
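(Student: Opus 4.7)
The plan is to recognize this as Cramer's rule applied to a Vandermonde linear system, so the entire argument reduces to a clean calculation with determinants.

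First I would rewrite the moment generating expression in a form linear in $\{i^{\rho}\}$. Using the decomposition from the proof of Theorem~\ref{theorem: g_X|Y - epsilon_X|Y},
\begin{align*}
\expectation[g_{X|Y}^{\rho}(X|Y)] = \sum_{y \in \set{Y}} P_Y(y) \sum_{i=1}^M i^{\rho} \, P_{X|Y}\bigl(x_i(y) | y\bigr) = \sum_{i=1}^M p_i \, i^{\rho},
\end{align*}
where $p_i \triangleq \sum_{y \in \set{Y}} P_Y(y) \, P_{X|Y}\bigl(x_i(y) | y\bigr)$ is the marginal probability that the true value of $X$ has conditional rank $i$. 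By \eqref{170824-2}, \eqref{eq: epsilon_X|Y}, and \eqref{thm 10 - a}, the first coordinate is exactly $p_1 = \sum_y P_Y(y) \, p_{\max}(y) = 1 - \varepsilon_{X|Y}$.

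Next I would differentiate term-by-term. Since $\frac{\mathrm{d}^k}{\mathrm{d}\rho^k} i^{\rho} = i^{\rho} \log_{\mathrm{e}}^k i$, evaluating at $\rho = 0$ yields
\begin{align*}
z_k = \sum_{i=1}^M p_i \, \log_{\mathrm{e}}^k i, \qquad k = 0, 1, \ldots, M-1.
\end{align*}
This is a linear system $A \vec{p} = \vec{z}$, where $A$ is the Vandermonde matrix with entries $A_{ki} = \log_{\mathrm{e}}^k i$ evaluated at the $M$ pairwise distinct nodes $\log_{\mathrm{e}} 1, \log_{\mathrm{e}} 2, \ldots, \log_{\mathrm{e}} M$. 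In particular $A$ is invertible, so Cramer's rule applies and
\begin{align*}
1 - \varepsilon_{X|Y} = p_1 = \frac{\det A^{(1)}}{\det A},
\end{align*}
where $A^{(1)}$ is obtained from $A$ by replacing its first column (which is $(1, 0, \ldots, 0)^{\top}$ because $\log_{\mathrm{e}} 1 = 0$) by $(z_0, z_1, \ldots, z_{M-1})^{\top}$. The numerator is exactly the determinant displayed in \eqref{eq: 20171025-b}.

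It remains to compute $\det A$. By the standard Vandermonde formula,
\begin{align*}
\det A = \prod_{1 \leq i < j \leq M} \bigl(\log_{\mathrm{e}} j - \log_{\mathrm{e}} i\bigr).
\end{align*}
Separating out the factors with $i = 1$, which equal $\log_{\mathrm{e}} j$ for $j = 2, \ldots, M$, and rewriting the remaining factors as $\log_{\mathrm{e}}(j/i)$ gives precisely $c_M$ as in \eqref{eq: 20171031-b}; the case $M = 2$ is the degenerate sub-product $c_2 = \log_{\mathrm{e}} 2$. Substituting yields \eqref{eq: 20171025-b}.

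There is essentially no analytical obstacle: the only insight required is to view $\expectation[g_{X|Y}^{\rho}(X|Y)]$ as a linear combination of the exponentials $i^{\rho}$ and recognize that the system of its derivatives at $\rho = 0$ is Vandermonde in $\log_{\mathrm{e}} i$. Once that is seen, invertibility (distinct nodes) and the explicit determinant formula complete the proof. The only small bookkeeping point is confirming that the first column of $A^{(1)}$ in the statement matches the one produced by Cramer's rule, which holds because the missing column corresponds to the node $\log_{\mathrm{e}} 1 = 0$.
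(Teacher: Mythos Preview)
Your proposal is correct and follows essentially the same approach as the paper: express $\expectation[g_{X|Y}^{\rho}(X|Y)]$ as $\sum_i p_i\, i^{\rho}$ (the paper's $u_{i-1}$ is your $p_i$), differentiate at $\rho=0$ to obtain a Vandermonde system in the nodes $\log_{\mathrm{e}} i$, and recover $p_1 = 1-\varepsilon_{X|Y}$ via Cramer's rule together with the Vandermonde determinant formula. The only cosmetic difference is that the paper isolates the determinant computation as a separate lemma (expanding along the first column before invoking Vandermonde), whereas you apply the Vandermonde formula directly and then split off the $i=1$ factors; the content is identical.
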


\begin{proof}
Let $x_i(y) \in \set{X}$ be the element that satisfies \eqref{170824-2}
for $i \in \{1, \ldots, M\}$ and $y \in \set{Y}$, and consider the
ranking function $g_{X|Y}$ that satisfies
\begin{align}  \label{eq: 20171022-b}
g_{X|Y}(x_i(y) | y) = i
\end{align}
for all $i$ and $y$ as above. Then,
\begin{align} \label{eq: 20171022-c}
\expectation[g_{X|Y}^{\rho}(X|Y)] &= \sum_{y \in \set{Y}}
\left\{ P_Y(y) \sum_{x \in \set{X}} P_{X|Y}(x|y) \, g_{X|Y}^{\rho}(x|y) \right\} \\
\label{eq: 20171022-d}
&= \sum_{y \in \set{Y}} \left\{ P_Y(y) \sum_{i=1}^M P_{X|Y}(x_i(y) | y) \, i^\rho \right\}
\end{align}
where \eqref{eq: 20171022-c} and \eqref{eq: 20171022-d} follow from \eqref{yiannis}
and \eqref{eq: 20171022-b}, respectively. Swapping the order of summation in
\eqref{eq: 20171022-d} yields
\begin{align} \label{eq: 20171022-e}
\expectation[g_{X|Y}^{\rho}(X|Y)] &= \sum_{i=1}^M \left\{ i^{\rho} \sum_{y \in \set{Y}} P_{X,Y}(x_i(y), y) \right\} \\
\label{eq: 20171022-f}
&= \sum_{i=0}^{M-1} (i+1)^{\rho} \, u_i
\end{align}
where $u_i$ is the inner sum in the right side of \eqref{eq: 20171022-e} with $i$ replaced by $i+1$.
Taking the $k$-th derivative of \eqref{eq: 20171022-f} at $\rho=0$, and recalling \eqref{eq: 20171022-a} yields
\begin{align}
z_k = \begin{dcases}
\sum_{i=0}^{M-1} u_i, & \quad k = 0, \\
\sum_{i=1}^{M-1} u_i \log_{\mathrm{e}}^{k}(i+1), & \quad k \in \{1, \ldots, M-1\},
\end{dcases}
\end{align}
which gives the set of $M$ linear equations
\begin{align}  \label{eq: 20171022-g}
\left( \begin{array}{cccc}
 1  &   1  &   \cdots  &   1 \\
 0  &  \log_{\mathrm{e}} 2  &   \cdots  &   \log_{\mathrm{e}} M \\
 0  & \vdots & \vdots & \vdots \\
 0  &  \log_{\mathrm{e}}^{M-1} 2   &  \cdots  &   \log_{\mathrm{e}}^{M-1} M
\end{array}
\right)
\left( \begin{array}{c}
u_0 \\
u_1 \\
\vdots \\
u_{M-1}
\end{array}
\right) =
\left( \begin{array}{c}
z_0 \\
z_1 \\
\vdots \\
z_{M-1}
\end{array}
\right)
\end{align}
with the $M$ unknown variables $\mathbf{u}^\top = (u_0, \ldots, u_{M-1})$.
The equations in \eqref{eq: 20171022-g} are linearly independent since
by Lemma~\ref{lemma: 20171024} in Appendix~\ref{app: d}
\begin{align}  \label{eq: 20171022-h}
\left| \begin{array}{cccc}
 1  &   1  &   \cdots  &   1 \\
 0  &  \log_{\mathrm{e}} 2  &   \cdots  &   \log_{\mathrm{e}} M \\
 0  & \vdots & \vdots & \vdots \\
 0  &  \log_{\mathrm{e}}^{M-1} 2   &  \cdots  &   \log_{\mathrm{e}}^{M-1} M
\end{array}
\right| = c_M \neq 0
\end{align}
with $c_M$ as defined in \eqref{eq: 20171031-b}, so $u_0$ in \eqref{eq: 20171022-g}
is uniquely determined. We have
\begin{align}  \label{eq: 20171022-j}
\varepsilon_{X|Y} &= 1 - \sum_{y \in \set{Y}} P_{X,Y}(x_1(y),y) \\
\label{eq: 20171022-k}
&= 1-u_0
\end{align}
where \eqref{eq: 20171022-j} follows from \eqref{eq: epsilon_X|Y} and since, by definition,
$x_1(y)$ is a mode of $P_{X|Y}(\cdot | y)$ for all $y \in \set{Y}$;
\eqref{eq: 20171022-k} holds by the definition of $u_0$ as the inner sum in the right
side of \eqref{eq: 20171022-e} with $i=1$. Finally, \eqref{eq: 20171025-b} follows from
\eqref{eq: 20171022-g}--\eqref{eq: 20171022-k} and Cramer's rule.
\end{proof}

\begin{remark}
Theorem~\ref{thm: 20171022} can be, alternatively, first proved without side information.
Then, by replacing $P_X$ by $P_{X|Y}(\cdot|y)$, \eqref{eq: 20171022-j}--\eqref{eq: 20171022-k}
hold for $\varepsilon_{X|Y=y}$; finally, it holds by averaging in view of the linearity of
$\mathbf{u}$ in $\mathbf{z}$.
\end{remark}

\section{Non-Asymptotic Bounds for Optimal Fixed-to-Variable Lossless Compression}
\label{section: lossless source coding}

This section applies the improved bounds on guessing moments in
Section~\ref{section: bounds for guessing} to analyze
non-prefix one-to-one binary optimal codes, which do not satisfy Kraft's
inequality. These are one-shot codes that assign distinct codewords
to source strings; their average length per source symbol, which is smaller
than the Shannon entropy of the source, is analyzed
in \cite[Section~7]{DrmotaS} and \cite{WSSV11}.
Preliminary material is introduced in
Section~\ref{subsec - lossless preliminaries}, improved bounds
on the distribution of optimal codeword lengths are derived in
Section~\ref{subsec: distribution of lengths}, and improved
non-asymptotic bounds for fixed-to-variable
codes are derived in Section~\ref{subsec: FV lossless codes}.

\subsection{Basic setup, notation and preliminaries}
\label{subsec - lossless preliminaries}

\begin{definition}
A variable-length lossless compression binary code for a discrete
set $\set{X}$ is an injective mapping:
\begin{align}
f \colon \set{X} \to \{0, 1\}^{\ast}
= \{\varnothing, 0, 1, 00, 01, 10, 11, 000, \ldots \}
\end{align}
where $f(x)$ is the codeword which is assigned to $x \in \set{X}$; the
length of this codeword is denoted by $\ell(f(x))$ where $\ell \colon \{0,1\}^{\ast}
\to \{0, 1, 2, \ldots\}$ with the convention that $\ell(\varnothing)=0$.
\end{definition}

\begin{definition} \cite{verdubook}
\label{def: compact}
A variable-length lossless source code is {\em compact} whenever it contains
a codeword only if all shorter codewords also belong to the code.
\end{definition}

\begin{definition} \cite{verdubook}
\label{def: efficient}
Given a probability mass function $P_X$ on $\set{X}$, a variable-length
lossless source code is {\em $P_X$-efficient} if for all $(a,b) \in \set{X}^2$,
\begin{align}
\ell(f(a)) < \ell(f(b)) \Longrightarrow P_X(a) \geq P_X(b).
\end{align}
\end{definition}

\begin{definition} \cite{verdubook}
\label{def: opt}
Given a probability mass function $P_X$ on $\set{X}$, a variable-length
lossless source code is {\em $P_X$-optimal} if it is both compact and
$P_X$-efficient.
\end{definition}

The optimality in Definition~\ref{def: opt} is justified in
Proposition~\ref{prop: opt}.
Let $f_X^{\ast} \colon \set{X} \to \{0, 1\}^{\ast}$ be a $P_X$-optimal
variable-length lossless source code. If $|\set{X}| < \infty$, then
\begin{enumerate}[a)]
\item $\varnothing$ is assigned to the most likely element in $\set{X}$.
\item All the $2^\ell$ binary strings of length $\ell$ are assigned to
the $2^\ell$-th through $(2^{\ell+1}-1)$-th most likely elements with
$\ell \in \{1, \ldots, \lfloor \log_2(1+|\set{X}|) \rfloor - 1\}$. For
example, 0 and 1 (or 1 and 0) are assigned, respectively,
to the second and third most likely elements in $\set{X}$.
\item If $\log_2(1+|\set{X}|)$ is not an integer, then codewords of
length $\lfloor \log_2(1+|\set{X}|) \rfloor$ are assigned to each of
the remaining $1+|\set{X}|-2^{\lfloor \log_2(1+|\set{X}|) \rfloor}$
elements in $\set{X}$.
\end{enumerate}

As long as $|\set{X}| > 1$, there is more than one $P_X$-optimal code
since compactness and $P_X$-efficiency are preserved by swapping codewords of
the same length (and, if $|\set{X}|=2$, then the second most likely element
can be either assigned $0$ or $1$).
In the presence of ties among probabilities, the value of $\ell(f_X^{\ast}(x))$
for some $x \in \set{X}$ may depend on the choice of $f_X^{\ast}$. The following
result provides several relevant properties of optimal codes.

\begin{proposition} \label{prop: opt} (\cite{KYSV14, verdubook})
Fix a probability mass function $P_X$ on a finite set $\set{X}$. The following
results hold for $P_X$-optimal codes $f_X^{\ast} \colon \set{X} \to \{0, 1\}^{\ast}$:
\begin{enumerate}[a)]
\item \label{opt-1}
The distribution of $\ell(f_X^{\ast}(X))$ is invariant to the actual choice of
$f_X^{\ast}$, and it only depends on $P_X$.
\item \label{opt-2}
For every lossless data compression code $f$, and for all $r \geq 0$,
\begin{align} \label{opt. dist}
\prob\bigl[\ell\bigl(f(X)\bigr) \leq r \bigr] \leq \prob\bigl[\ell\bigl(f_X^{\ast}(X)\bigr) \leq r \bigr].
\end{align}
Furthermore, the inequality in \eqref{opt. dist} is strict for some
$r \geq 0$ if $f$ is not $P_X$-optimal.
\item  \label{opt-3}
\vspace*{-0.2cm}
\begin{align} \label{Lemma 1 - CV14}
\sum_{x \in \set{X}} 2^{-\ell(f_X^{\ast}(x))} \leq \log_2(1+|\set{X}|)
\end{align}
with equality if and only if $|\set{X}|+1$ is a positive integral power of~2. Furthermore, all
compact codes for $\set{X}$ achieve the same value of $\underset{x \in \set{X}}{\sum} 2^{-\ell(f(x))}$,
which is larger than that achieved by a non-compact code.
\end{enumerate}
\end{proposition}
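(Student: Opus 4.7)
The proof proceeds item by item, leveraging the canonical rank-to-length assignment forced by compactness together with $P_X$-efficiency. Let $M = |\set{X}|$, $L = \lfloor \log_2(1+M) \rfloor$, and sort the masses of $P_X$ as $p_{(1)} \geq \cdots \geq p_{(M)}$. Compactness mandates that the code use exactly the $2^\ell$ binary strings of each length $\ell \in \{0,\ldots,L-1\}$ together with $M+1-2^L$ strings of length $L$; $P_X$-efficiency then pins down the rank-to-length map uniquely, assigning length $\lfloor \log_2 k \rfloor$ to the element of probability rank $k$ (ties being irrelevant since the length only depends on the rank).

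For item (a), the above characterization yields $\prob[\ell(f_X^{\ast}(X)) = \ell] = \sum_{k \colon \lfloor \log_2 k \rfloor = \ell} p_{(k)}$, a function of $P_X$ alone. For item (b), injectivity of $f$ bounds by $2^{r+1}-1$ the number of source symbols mapped to codewords of length at most $r$, and hence
$$\prob[\ell(f(X)) \leq r] \leq \sum_{k=1}^{\min\{2^{r+1}-1,\,M\}} p_{(k)} = \prob[\ell(f_X^{\ast}(X)) \leq r]$$
by item (a). For the strict inequality when $f$ is not $P_X$-optimal, the plan is to transform $f$ into an optimal code by two types of local moves that weakly increase the CDF at every $r$: (i) if $f$ is not compact, exchange a used long codeword for an unused shorter one while keeping the same source symbol; (ii) if $f$ is compact but not $P_X$-efficient, swap the codewords of two symbols whose length ordering contradicts their probability ordering. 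At least one such move is available, and it produces a strict CDF increase at the shorter length involved.

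For item (c), the length multiplicities above give
$$\sum_{x \in \set{X}} 2^{-\ell(f_X^{\ast}(x))} = \sum_{\ell=0}^{L-1} 2^\ell \cdot 2^{-\ell} + (M+1-2^L)\,2^{-L} = L - 1 + \frac{M+1}{2^L}.$$
Setting $t = \log_2(M+1) - L \in [0,1)$ reduces the claim $\sum 2^{-\ell(f_X^{\ast}(x))} \leq \log_2(1+M)$ to the scalar inequality $2^t \leq 1 + t$, which follows from convexity of $2^t$ on $[0,1]$ with equality at the endpoints; hence on $[0,1)$ equality holds precisely at $t=0$, i.e., when $M+1$ is a power of $2$. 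For the final assertion, every compact code shares the same multiset of codeword lengths with $f_X^{\ast}$ by the counting above, so all compact codes attain the same value of $\sum_{x \in \set{X}} 2^{-\ell(f(x))}$; a non-compact code admits a move of type (i), which strictly enlarges this sum.

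The main obstacle is the strictness claim in item (b): with ties or small masses one must argue that the first triggered swap actually shifts a positive amount of probability. Under the standing assumption that $P_X$ is supported on all of $\set{X}$, picking the move at the smallest affected length settles this, since the symbol whose codeword is shortened carries positive mass. The remaining steps are elementary bookkeeping plus the one-line convexity argument for $2^t \leq 1+t$.
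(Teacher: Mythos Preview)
The paper does not supply its own proof of this proposition; it merely attributes the result to \cite{KYSV14, verdubook}. Your argument is correct and self-contained, so there is nothing to compare against in the paper proper.

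A few remarks confirming your steps. The rank-to-length assignment $\ell(f_X^{\ast}(x_k)) = \lfloor \log_2 k \rfloor$ that drives items~(a)--(b) is precisely what the paper later quotes from \cite[(15)]{KYSV14} in the proof of Lemma~\ref{lemma1: UB by guessing moments}. Your computation in item~(c), namely $\sum_x 2^{-\ell(f_X^{\ast}(x))} = L - 1 + (M+1)2^{-L}$, is the $\beta=1$ specialization of the paper's Lemma~\ref{lemma: l inequality}, and the convexity inequality $2^t \leq 1+t$ on $[0,1]$ (secant line of a convex function) is the standard way to finish; the paper notes the same thing just after \eqref{eq: normalized cumulant}--\eqref{eq: improved UB on normalized cumulant} when it writes ``$2^x - 1 \leq x$ for $x \in [0,1]$''.

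On the strictness clause of item~(b): your caveat about full support is not cosmetic but essential. If $P_X$ has zero-mass atoms, a code that is non-compact only in how it handles those atoms is not $P_X$-optimal yet induces the same length distribution as $f_X^{\ast}$, so the strict inequality can fail. With full support your single-move argument is sound: one move of type~(i) or~(ii) yields a still-injective code $f'$ whose CDF strictly dominates that of $f$ at the shorter of the two lengths involved, and then the already-proved weak inequality applied to $f'$ closes the chain.
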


\begin{definition}
The {\em cumulant generating function} of the codeword lengths of $P_X$-optimal binary codes
is given by
\begin{align} \label{eq: cumulant}
\Lambda^{\ast}(\rho) \triangleq
\log \expectation \bigl[ 2^{\rho \, \ell(f_X^{\ast}(X))} \bigr],
\quad \rho \in \Reals.
\end{align}
\end{definition}

\begin{remark}
\eqref{eq: cumulant} is actually a scaled cumulant generating function.
The cumulant generating function of a random variable $X$ is given by
\begin{align} \label{eq: 20171018-a}
\Lambda_X(\rho) = \log_{\mathrm{e}} \expectation \bigl[ e^{\rho X} \bigr], \quad \rho \in \Reals
\end{align}
whereas, following Campbell \cite{Campbell65}, it is more natural to study the function given by
\begin{align} \label{eq: 20171018-b}
\widetilde{\Lambda}_X(\rho) = \log \expectation \bigl[ 2^{\rho X} \bigr].
\end{align}
Note, however, that \eqref{eq: 20171018-a}
and \eqref{eq: 20171018-b} satisfy
\begin{align} \label{eq: 20171018-c}
\widetilde{\Lambda}_X(\rho) =  \Lambda_X(\rho \log_{\mathrm{e}} 2) \, \log \mathrm{e},
\end{align}
which implies that they can be obtained from each other by proper linear scalings of the axes.
\end{remark}

As mentioned in the introduction, the cumulant generating function of the
codeword lengths provides an important design criterion. In particular, it
yields the average length via the equality
\begin{align} \label{eq: 20171017-a}
\lim_{\rho \to 0} \frac{\Lambda^{\ast}(\rho)}{\rho} = \expectation[\ell(f_X^{\ast}(X))].
\end{align}
\begin{theorem} \cite[Theorem~1]{CV2014a}
\label{thm1 - CourtadeV14}
If $\rho \in (-\infty, -1]$, then
\begin{align} \label{eq1: Th. 1 CV-ISIT14}
H_{\infty}(X) - \log \log_2 (1+|\set{X}|) \leq
-\Lambda^{\ast}(\rho) \leq H_{\infty}(X),
\end{align}
and, if $\rho \in (-1,0) \cup (0,\infty)$, then
\begin{align}  \label{eq2: Th. 1 CV-ISIT14}
H_{\frac1{1+\rho}}(X) - \log \log_2 (1+|\set{X}|) \leq
\frac{\Lambda^{\ast}(\rho)}{\rho} \leq H_{\frac1{1+\rho}}(X).
\end{align}
\end{theorem}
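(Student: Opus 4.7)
The plan is to reduce each bound to an estimate of guessing moments of the ranking function $g_X$ of $P_X$, using the structural fact (Proposition~\ref{prop: opt}) that a compact $P_X$-efficient code assigns the codeword of length $\lfloor \log_2 k \rfloor$ to the $k$-th most likely element. Consequently, with probability one,
\begin{equation*}
2^{\ell(f_X^{\ast}(X))} \leq g_X(X),
\end{equation*}
which links $\Lambda^{\ast}(\rho)$ to the guessing moments studied in Section~\ref{section: bounds for guessing}. The three ranges $\rho > 0$, $\rho \in (-1, 0)$, and $\rho \leq -1$ will be handled separately.

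For the lower bounds when $\rho \in (-1, 0) \cup (0, \infty)$, I would apply Corollary~\ref{corollary: beta=1} with the substitution $g(x) \leftarrow 2^{\ell(f_X^{\ast}(x))}$, which is strictly positive since $\ell \geq 0$. This immediately yields
\begin{equation*}
\frac{\Lambda^{\ast}(\rho)}{\rho} \geq H_{\frac{1}{1+\rho}}(X) - \log \sum_{x \in \set{X}} 2^{-\ell(f_X^{\ast}(x))},
\end{equation*}
after which the estimate $\sum_{x} 2^{-\ell(f_X^{\ast}(x))} \leq \log_2(1+|\set{X}|)$ furnished by Proposition~\ref{prop: opt} produces the desired lower bound.

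For the upper bound when $\rho > 0$, the pointwise estimate $2^{\rho \ell(f_X^{\ast}(X))} \leq g_X^{\rho}(X)$ combined with Theorem~\ref{thm: Arikan's UB} gives $\expectation[2^{\rho \ell(f_X^{\ast}(X))}] \leq \exp\bigl(\rho H_{\frac{1}{1+\rho}}(X)\bigr)$, and division by $\rho>0$ after taking logarithms finishes this case. For $\rho \in (-1, 0)$ the pointwise inequality reverses, yielding $\expectation[2^{\rho \ell(f_X^{\ast}(X))}] \geq \expectation[g_X^{\rho}(X)]$, so closing the argument requires a reverse-direction Arikan-type estimate
\begin{equation*}
\expectation[g_X^{\rho}(X)] \geq \exp\bigl(\rho H_{\frac{1}{1+\rho}}(X)\bigr), \quad \rho \in (-1, 0),
\end{equation*}
that is not explicitly recorded in the paper. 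I would prove this by a sorting argument: setting $S \triangleq \sum_i P_X^{\frac{1}{1+\rho}}(x_i)$ and using $P_X(x_1) \geq \ldots \geq P_X(x_M)$, one gets $k \, P_X^{\frac{1}{1+\rho}}(x_k) \leq S$, so $k \leq S \, P_X^{-\frac{1}{1+\rho}}(x_k)$; raising to the negative power $\rho$ reverses to $k^{\rho} \geq S^{\rho} P_X^{-\frac{\rho}{1+\rho}}(x_k)$, and therefore $\expectation[g_X^{\rho}(X)] \geq S^{\rho} \sum_k P_X^{\frac{1}{1+\rho}}(x_k) = S^{1+\rho}$, which is the claim. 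Dividing by $\rho < 0$ after taking logarithms then gives the desired upper bound.

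For $\rho \leq -1$, the upper bound $-\Lambda^{\ast}(\rho) \leq H_{\infty}(X)$ follows by retaining only the contribution of the mode of $P_X$, which receives the empty codeword: $\expectation[2^{\rho \ell(f_X^{\ast}(X))}] \geq p_{\max} \cdot 2^0 = p_{\max}$. For the matching lower bound, I would exploit the elementary fact that $2^{\rho \ell} \leq 2^{-\ell}$ whenever $\rho \leq -1$ and $\ell \geq 0$, whence
\begin{equation*}
\expectation[2^{\rho \ell(f_X^{\ast}(X))}] \leq p_{\max} \sum_{x \in \set{X}} 2^{-\ell(f_X^{\ast}(x))} \leq p_{\max} \log_2(1+|\set{X}|)
\end{equation*}
again by Proposition~\ref{prop: opt}, and negating after taking logarithms yields the claim. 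The main obstacle is the upper bound for $\rho \in (-1, 0)$: the paper's Arikan-type upper bound is recorded only for $\rho > 0$, so the reverse-direction sorting estimate above is the genuinely new ingredient needed to cover negative $\rho$, and it is what allows the clean bound $H_{\frac{1}{1+\rho}}(X)$ to appear on the right without the $\log \log_2(1+|\set{X}|)$ correction.
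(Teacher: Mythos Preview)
The paper does not supply its own proof of this theorem: it is quoted from \cite[Theorem~1]{CV2014a} as background for the improved bounds derived in Section~\ref{subsec: distribution of lengths}. Consequently there is no in-paper proof to compare against.

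That said, your argument is correct and is essentially the proof of the cited reference, expressed through the tools this paper records. The lower bound for $\rho\in(-1,0)\cup(0,\infty)$ is Corollary~\ref{corollary: beta=1} with $g(x)=2^{\ell(f_X^{\ast}(x))}$ followed by \eqref{Lemma 1 - CV14}; the upper bound for $\rho>0$ is the right inequality in Lemma~\ref{lemma1: UB by guessing moments} together with Theorem~\ref{thm: Arikan's UB}; and the $\rho\le -1$ case is the elementary computation you give. The one ingredient not recorded verbatim in the paper is the reverse Arikan estimate $\expectation[g_X^{\rho}(X)]\ge\exp\bigl(\rho H_{\frac1{1+\rho}}(X)\bigr)$ for $\rho\in(-1,0)$; your sorting argument proves it correctly---it is the same rank inequality $k\,P_X^{1/(1+\rho)}(x_k)\le\sum_i P_X^{1/(1+\rho)}(x_i)$ underlying Arikan's original proof, with the direction of the final inequality flipping because the division is by $\rho<0$.
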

By invoking the Chernoff bound and using Theorem~\ref{thm1 - CourtadeV14},
the following result holds.
\begin{theorem} \cite[Theorem~2]{CV2014a}
\label{thm2 - CourtadeV14}
For all $H(X) < R < \log |\set{X}|$
\begin{align} \label{eq1: Th. 2 in CV2014a}
\log \frac1{\mathbb{P}[\ell(f_X^{\ast}(X)) \geq R]}
& \geq \sup_{\rho > 0} \Bigl\{ \rho R - \rho H_{\frac1{1+\rho}}(X) \Bigr\} \\
\label{eq2: Th. 2 in CV2014a}
& = D(X_\alpha \| X)
\end{align}
where $\alpha \in (0,1)$ is a function of $R$ chosen so that $R = H(X_\alpha)$,
and $X_\alpha$ has the scaled probability mass function
\begin{align} \label{eq: scaled distribution}
P_{X_\alpha}(x) = \frac{P_X^{\alpha}(x)}{\underset{a \in \set{X}}{\sum}
P_X^{\alpha}(a)}, \quad x \in \set{X}.
\end{align}
\end{theorem}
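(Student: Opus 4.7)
The proof of Theorem~\ref{thm2 - CourtadeV14} naturally decomposes into two steps: a Chernoff-bound argument that reduces the tail probability to the cumulant generating function $\Lambda^\ast(\cdot)$, and an evaluation of the resulting Legendre-Fenchel transform in closed form as a relative entropy.

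For the Chernoff step I would let $L \triangleq \ell(f_X^\ast(X))$ and write, for every $\rho > 0$, the Markov-style inequality $\mathbb{P}[L \geq R] \leq 2^{-\rho R} \, \mathbb{E}[2^{\rho L}] = 2^{-\rho R + \Lambda^\ast(\rho)}$, so that $\log \tfrac1{\mathbb{P}[L \geq R]} \geq \rho R - \Lambda^\ast(\rho)$. Invoking the upper bound of Theorem~\ref{thm1 - CourtadeV14}, namely $\Lambda^\ast(\rho) \leq \rho \, H_{1/(1+\rho)}(X)$ for $\rho > 0$, and then supremizing over $\rho > 0$, immediately establishes \eqref{eq1: Th. 2 in CV2014a}.

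The substantive step is to show that this supremum equals $D(X_\alpha \| X)$ when $\alpha \in (0,1)$ is chosen so that $H(X_\alpha) = R$. Setting $\alpha = 1/(1+\rho)$ and $Z_\alpha = \sum_{x \in \set{X}} P_X^\alpha(x)$, a direct calculation, by expanding both sides in terms of $Z_\alpha$ and $\sum_x P_X^\alpha(x) \log P_X(x)$, yields the algebraic identity
\begin{align*}
H(X_\alpha) - H_\alpha(X) = \frac{\alpha}{1-\alpha} \, D(X_\alpha \| X).
\end{align*}
Consequently, whenever $R = H(X_\alpha)$, the expression $\rho R - \rho H_{1/(1+\rho)}(X) = \tfrac{1-\alpha}{\alpha}[R - H_\alpha(X)]$ collapses exactly to $D(X_\alpha \| X)$. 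To confirm that this choice of $\rho$ is indeed the maximizer, I would differentiate $\psi(\rho) \triangleq \rho H_{1/(1+\rho)}(X) = (1+\rho) \log Z_{1/(1+\rho)}$ by the chain rule; a short computation shows $\psi'(\rho) = H(X_{1/(1+\rho)})$, so the stationarity condition $R = \psi'(\rho)$ is equivalent to $H(X_\alpha) = R$ with $\alpha = 1/(1+\rho)$.

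The main subtlety is to justify that this stationary point is a global maximum rather than a minimum or a boundary value. Here I would rely on the fact that $H(X_\alpha)$ is strictly monotone decreasing in $\alpha$ over $(0,1)$, ranging from $\log|\set{X}|$ (as $\alpha \downarrow 0$) down to $H(X)$ (as $\alpha \uparrow 1$); equivalently $\psi'(\rho)$ is monotone in $\rho$, so $\psi$ is convex and $\rho R - \psi(\rho)$ is concave. Under the hypothesis $H(X) < R < \log|\set{X}|$, this yields existence and uniqueness of an interior stationary point $\rho^\ast \in (0, \infty)$, at which the supremum in \eqref{eq1: Th. 2 in CV2014a} is attained and equals $D(X_{\alpha^\ast} \| X)$ as required.
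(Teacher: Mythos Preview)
Your proposal is correct and follows essentially the same approach as the paper: the paper states that the result ``follows by invoking the Chernoff bound and using Theorem~\ref{thm1 - CourtadeV14},'' and later (in the remark following Theorem~\ref{theorem: tightened LB on CDF}) invokes precisely the concavity of $\rho R - \rho H_{1/(1+\rho)}(X)$ together with the stationarity condition $H(X_\alpha)=R$ and the identity $\rho^\ast R - \rho^\ast H_{1/(1+\rho^\ast)}(X)=D(X_\alpha\|X)$, all of which you derive explicitly. The only minor caveat is that your endpoint $\lim_{\alpha\downarrow 0} H(X_\alpha)=\log|\set{X}|$ tacitly assumes $P_X$ has full support; without that, the limit is $H_0(X)$, which is also implicit in the theorem statement as quoted.
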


\subsection{Improved bounds on the distribution of the optimal code lengths}
\label{subsec: distribution of lengths}
We derive bounds on the cumulant generating function and the complementary
cumulative distribution of optimal lengths for lossless compression of a
random variable $X$ which takes values on a finite set $\set{X}$. These bounds
improve those in Theorems~\ref{thm1 - CourtadeV14} and~\ref{thm2 - CourtadeV14},
and in Section~\ref{subsec: FV lossless codes} we use them to derive non-asymptotic
bounds for optimal fixed-to-variable lossless codes.

We start by generalizing \cite[Lemma~1]{CV2014a} from $\beta=1$ to arbitrary
$\beta \in \Reals$.
\begin{lemma} \label{lemma: l inequality}
For an optimal binary code, and for all $\beta \in \Reals$
\begin{align}
\label{eq: l inequality}
\sum_{x \in \set{X}} 2^{-\beta \, \ell(f_X^{\ast}(x))} &=
\begin{dcases}
(2^{\Delta} - 1) s_{\beta}^m
+ \frac{1-s_{\beta}^m}{1-s_{\beta}},
& \quad \beta \neq 1 \\[0.1cm]
m + 2^{\Delta} - 1, & \quad \beta = 1
\end{dcases} \\
\label{eq: t}
& \triangleq t(\beta, |\set{X}|)
\end{align} \\
where
\begin{align}
\label{eq: s_beta}
& s_{\beta} = 2^{1-\beta}, \\
\label{eq: integer m}
& m = \bigl\lfloor \log_2(1+|\set{X}|) \bigr\rfloor, \\
\label{eq: Delta}
& \Delta = \log_2(1+|\set{X}|) - m \in [0,1).
\end{align}
\end{lemma}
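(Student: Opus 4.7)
The plan is to reduce the claim to a direct enumeration of codeword lengths in any $P_X$-optimal code, using Proposition~\ref{prop: opt}-\ref{opt-1}) to fix a particular representative.

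First, I would invoke Proposition~\ref{prop: opt}-\ref{opt-1}), which guarantees that the distribution of $\ell(f_X^{\ast}(X))$ depends only on $P_X$; hence the sum $\sum_{x \in \set{X}} 2^{-\beta \ell(f_X^{\ast}(x))}$ is an invariant of $P_X$, and for the purpose of computation I may choose any canonical $P_X$-optimal code. I would use the explicit description of $f_X^{\ast}$ given right after Definition~\ref{def: opt}: the empty string is assigned to the most likely element, and all $2^\ell$ binary strings of length $\ell$ are used as codewords for $\ell \in \{1,\ldots, m-1\}$ where $m = \lfloor \log_2(1+|\set{X}|)\rfloor$, while the remaining elements receive codewords of length $m$.

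Next, I would count codewords by length. With $\Delta = \log_2(1+|\set{X}|) - m \in [0,1)$ as in \eqref{eq: Delta}, one has $|\set{X}|+1 = 2^{m+\Delta}$. Summing the codewords of lengths $0,1,\ldots,m-1$ accounts for $1+2+\cdots+2^{m-1} = 2^m-1$ elements, leaving $|\set{X}| - (2^m-1) = 2^{m+\Delta} - 1 - (2^m-1) = 2^m(2^\Delta-1)$ elements that must be assigned codewords of length $m$. Consequently,
\begin{align}
\sum_{x \in \set{X}} 2^{-\beta \ell(f_X^{\ast}(x))}
= \sum_{\ell=0}^{m-1} 2^\ell \cdot 2^{-\beta \ell} + 2^m(2^\Delta-1) \cdot 2^{-\beta m}
= \sum_{\ell=0}^{m-1} s_\beta^\ell + (2^\Delta-1) s_\beta^m,
\end{align}
with $s_\beta = 2^{1-\beta}$ as in \eqref{eq: s_beta}.

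Finally, I would split into the two cases stated in \eqref{eq: l inequality}: for $\beta \neq 1$ one has $s_\beta \neq 1$, so the finite geometric sum evaluates to $(1-s_\beta^m)/(1-s_\beta)$, yielding the first branch; for $\beta = 1$ one has $s_\beta = 1$, so the sum reduces to $m$, yielding the second branch. There is no real obstacle here—the argument is purely bookkeeping; the only subtlety is checking the edge case $\Delta = 0$ (i.e.\ when $|\set{X}|+1$ is a power of $2$), in which case the length-$m$ term vanishes and the formula still matches the geometric sum over $\ell = 0, \ldots, m-1$, and verifying that my formula is consistent with the special case $\beta = 1$ already noted in Proposition~\ref{prop: opt}-\ref{opt-3}), namely $\sum 2^{-\ell(f_X^{\ast}(x))} \leq \log_2(1+|\set{X}|)$ with equality exactly when $\Delta = 0$.
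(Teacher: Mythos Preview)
Your proposal is correct and follows essentially the same computation as the paper: enumerate codeword lengths of a compact code (one length-$0$ word, $2^\ell$ words at each length $\ell\in\{1,\ldots,m-1\}$, and $|\set{X}|-2^m+1=2^m(2^\Delta-1)$ words at length $m$), then sum the resulting geometric series. One small imprecision: Proposition~\ref{prop: opt}-\ref{opt-1}) concerns the \emph{distribution} of $\ell(f_X^{\ast}(X))$ (i.e., the $P_X$-weighted law), whereas the sum $\sum_{x}2^{-\beta\ell(f_X^{\ast}(x))}$ is unweighted; its invariance across $P_X$-optimal codes is more directly a consequence of compactness alone (any compact code on $\set{X}$ uses exactly the same multiset of lengths), which is what the paper implicitly uses.
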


\begin{proof}
From Definition~\ref{def: opt}, there are $2^\ell$ elements
of $\set{X}$ which are assigned codewords of length $\ell$ with
$\ell \in \{0, 1, \ldots, m-1\}$, and $|\set{X}|-2^m+1$ elements
which are assigned codewords of length $m$. Hence, for
$\beta \in \Reals$,
\begin{align}
\sum_{x \in \set{X}} 2^{-\beta \, \ell(f_X^{\ast}(x))}
\label{eq: attained UB}
= (|\set{X}|-2^m+1) \, 2^{-\beta m} + \sum_{\ell=0}^{m-1} 2^\ell \, 2^{-\beta \ell}.
\end{align}
In view of \eqref{eq: integer m} and \eqref{eq: Delta}, for $\beta=1$,
\eqref{eq: l inequality} follows from \eqref{eq: attained UB}.
If $\beta \neq 1$, then from \eqref{eq: s_beta}--\eqref{eq: attained UB}
\begin{align}
\sum_{x \in \set{X}} 2^{-\beta \, \ell(f_X^{\ast}(x))}
&= \left[(1+|\set{X}|) 2^{-m}-1\right] \, 2^{(1-\beta)m} +
\frac{1-2^{(1-\beta)m}}{1-2^{1-\beta}} \\
& = (2^\Delta-1) s_{\beta}^m + \frac{1-s_{\beta}^m}{1-s_{\beta}}.
\end{align}
\end{proof}

\begin{lemma} \label{lemma: tightened bound on the cumulant}
Let $X$ be a random variable taking values on a finite set
$\set{X}$, and let $\rho \neq 0$.
Then, for an optimal binary code,
\begin{align}
\frac1{\rho} \, \log \expectation\bigl[2^{\rho \,
\ell(f_X^{\ast}(X))} \bigr]
\geq \sup_{\beta \in (-\rho, \infty) \setminus \{0\}}
\frac1{\beta} \left[H_{\frac{\beta}{\beta+\rho}}(X)
- \log \, t(\beta, |\set{X}|) \right],
\label{eq: tightened bound on the cumulant}
\end{align}
where $t(\cdot,\cdot)$ is defined in \eqref{eq: t}.
\end{lemma}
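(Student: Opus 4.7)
The plan is to obtain \eqref{eq: tightened bound on the cumulant} as an essentially immediate specialization of the lower bound \eqref{eq: key result} in Theorem~\ref{thm: key result}, Item~\ref{item: key1}), by choosing the function $g$ to encode the codeword lengths. Specifically, I would set
\begin{align*}
g(x) = 2^{\ell(f_X^{\ast}(x))}, \quad x \in \set{X},
\end{align*}
which is a well-defined function $g \colon \set{X} \to (0,\infty)$ (its value at each $x$ is a positive integer). The fact that there may be several choices of $f_X^{\ast}$ when $P_X$ has ties is harmless: by Proposition~\ref{prop: opt}-\ref{opt-1}), the distribution of $\ell(f_X^{\ast}(X))$ depends only on $P_X$, and Lemma~\ref{lemma: l inequality} shows that $\sum_{x\in\set{X}} 2^{-\beta \ell(f_X^{\ast}(x))}$ depends only on $|\set{X}|$ and $\beta$.

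With this choice of $g$, two identifications make \eqref{eq: key result} collapse onto the claimed bound. First, $g^{\rho}(X) = 2^{\rho\,\ell(f_X^{\ast}(X))}$, so
\begin{align*}
\tfrac{1}{\rho}\,\log \expectation[g^{\rho}(X)] = \tfrac{1}{\rho}\,\log \expectation\bigl[2^{\rho\,\ell(f_X^{\ast}(X))}\bigr].
\end{align*}
Second, for every $\beta \in \Reals$,
\begin{align*}
\sum_{x \in \set{X}} g^{-\beta}(x) = \sum_{x \in \set{X}} 2^{-\beta\,\ell(f_X^{\ast}(x))} = t(\beta,|\set{X}|)
\end{align*}
by Lemma~\ref{lemma: l inequality}. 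Substituting these two identities in \eqref{eq: key result} (valid for $\beta \in (-\rho,\infty)\setminus\{0\}$, which is exactly the range over which the supremum in \eqref{eq: tightened bound on the cumulant} is taken) yields the desired inequality.

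There is no real obstacle: Theorem~\ref{thm: key result} has already been proven as a general statement about positive-valued functions $g$ on $\set{X}$, so the only content here is the observation that the cumulant generating function of $\ell(f_X^{\ast}(X))$ under the base-$2$ exponent is exactly $\log \expectation[g^{\rho}(X)]$ for the particular $g$ above, and that the normalizing sum $\sum_x g^{-\beta}(x)$ admits the closed form $t(\beta,|\set{X}|)$ provided by Lemma~\ref{lemma: l inequality}. Notice that, in contrast to Theorem~\ref{theorem: improving Arikan's bound}, here we do not need to further upper-bound this sum, since the exact value $t(\beta,|\set{X}|)$ is known; this is what makes the bound tighter than the guessing-based analog.
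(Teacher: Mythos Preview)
Your proposal is correct and follows essentially the same approach as the paper: define $g(x)=2^{\ell(f_X^{\ast}(x))}$, then invoke the lower bound \eqref{eq: key result} of Theorem~\ref{thm: key result} together with the closed-form identity $\sum_{x\in\set{X}} g^{-\beta}(x)=t(\beta,|\set{X}|)$ from Lemma~\ref{lemma: l inequality}. The additional remarks you make about tie-breaking and the exactness of $t(\beta,|\set{X}|)$ are accurate but not needed for the argument.
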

\begin{proof}
Let $g \colon \set{X} \to \{1, 2, 4, 8, \ldots\}$ be defined by
$g(x) = 2^{\ell(f_X^{\ast}(x))}$ for all $x \in \set{X}$.
The result in \eqref{eq: tightened bound on the cumulant}
follows by combining \eqref{eq: key result} and
Lemma~\ref{lemma: l inequality}.
\end{proof}

\begin{lemma} \label{lemma1: UB by guessing moments}
Let $X$ be a random variable taking values on
a finite set $\set{X}$, and let $g_X$ be a ranking function of
$X$. Then, for every optimal binary code and for all
$\rho > 0$,
\begin{align} \label{eq: UB by guessing moments}
2^{-\rho} \, \expectation [g_X^{\rho}(X)]
< \expectation \bigl[2^{\rho \, \ell(f_X^{\ast}(X))}\bigr]
\leq \expectation [g_X^{\rho}(X)].
\end{align}
\end{lemma}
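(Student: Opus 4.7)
My plan is to exploit the combinatorial structure of the optimal binary code, which directly links the codeword length of each symbol to its rank under $P_X$.

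First, I would observe that Proposition~\ref{prop: opt} (together with the explicit description following Definition~\ref{def: opt}) fixes the multiset of codeword lengths: the $P_X$-optimal code assigns exactly $2^j$ codewords of length $j$ for $j \in \{0,1,\ldots,m-1\}$, and the remaining symbols (of rank $\geq 2^m$) get length $m$, where $m = \lfloor \log_2 (1+|\mathcal{X}|)\rfloor$. Furthermore, by $P_X$-efficiency, the assignment is by decreasing probability, so I can couple $f_X^{\ast}$ and the ranking function $g_X$ (choosing any consistent tie-breaking) so that for every $x \in \mathcal{X}$,
\begin{align}
2^{\ell(f_X^{\ast}(x))} \leq g_X(x) \leq 2^{\ell(f_X^{\ast}(x))+1} - 1 < 2 \cdot 2^{\ell(f_X^{\ast}(x))}. \label{eq:plan-key}
\end{align}
The left inequality is immediate because a rank of $g_X(x) = k$ with $2^j \leq k < 2^{j+1}$ is assigned length exactly $j$ (for $j < m$), and ranks $k \geq 2^m$ are assigned length $m$; the right inequality holds because the last range only extends up to rank $|\mathcal{X}| \leq 2^{m+1} - 2$. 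Since the distribution of $\ell(f_X^{\ast}(X))$ and the distribution of $g_X(X)$ are both invariant to the choice of optimal code and tie-breaking (Proposition~\ref{prop: opt}\ref{opt-1})), this coupling entails no loss of generality for the expectations.

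Next, I would raise \eqref{eq:plan-key} to the $\rho$-th power ($\rho>0$ preserves the inequalities) to obtain, pointwise for each $x$,
\begin{align}
2^{-\rho} \, g_X^{\rho}(x) < 2^{\rho \, \ell(f_X^{\ast}(x))} \leq g_X^{\rho}(x). \label{eq:plan-key2}
\end{align}
Taking expectation under $P_X$ yields the upper bound in \eqref{eq: UB by guessing moments} directly. The lower bound is strict because the left inequality in \eqref{eq:plan-key2} is strict for every $x \in \mathcal{X}$ (the ratio $g_X(x)/2^{\ell(f_X^{\ast}(x))}$ is at most $2 - 2^{-\ell(f_X^{\ast}(x))} < 2$), so the strict inequality survives the expectation.

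No step here is genuinely hard; the only point that requires some care is ensuring that the coupling between the ranking function and the optimal code is well-defined in the presence of ties among the masses of $P_X$, which is handled by appealing to the invariance part of Proposition~\ref{prop: opt}\ref{opt-1}) and choosing a common tie-breaking rule that simultaneously defines $g_X$ and $f_X^{\ast}$.
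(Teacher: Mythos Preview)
Your proof is correct and is essentially the same argument as the paper's: the pointwise sandwich \eqref{eq:plan-key} is just a restatement of the identity $\ell(f_X^{\ast}(x_k)) = \lfloor \log_2 k \rfloor$ (which the paper quotes from \cite[(15)]{KYSV14}), and the paper derives the two inequalities via $\lfloor \log_2 k \rfloor \leq \log_2 k$ and $\lfloor \log_2 k \rfloor > \log_2 k - 1$, which are equivalent to your bounds on $g_X(x)/2^{\ell(f_X^{\ast}(x))}$. Your added remark on choosing a common tie-breaking rule for $g_X$ and $f_X^{\ast}$ makes explicit what the paper leaves implicit.
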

\begin{proof}
Let $x_k$ be the $k$-th most probable outcome of $X$ under
a given tie-breaking rule.
For an optimal compression code, the length of the assigned
codeword of the element $x_k \in \set{X}$ satisfies
$\ell(f_X^{\ast}(x_k)) = \lfloor \log_2 k \rfloor$
(\cite[(15)]{KYSV14}). Hence, for $\rho > 0$,
\begin{align}
\expectation \bigl[2^{\rho \, \ell(f_X^{\ast}(X))} \bigr]
& = \sum_k P_X(x_k) \, 2^{\rho \, \lfloor \log_2 k \rfloor} \label{is} \\
& \leq \sum_k P_X(x_k) \, k^{\rho} \\
& = \expectation \bigl[g_X^{\rho}(X) \bigr].  \label{is2}
\end{align}
This proves the right side of \eqref{eq: UB by guessing moments}.
To show the left side of \eqref{eq: UB by guessing moments}, note that
\begin{align} \label{sv}
\lfloor \log_2 k \rfloor > \log_2 k - 1
\end{align}
which implies from \eqref{is}, \eqref{is2} and \eqref{sv} that for $\rho > 0$
\begin{align}
\expectation \bigl[2^{\rho \, \ell(f_X^{\ast}(X))} \bigr]
& > \sum_k P_X(x_k) \, 2^{\rho \, (\log_2 k - 1)} \\
&= 2^{-\rho} \, \expectation \bigl[g_X^{\rho}(X) \bigr].
\end{align}
\end{proof}

\begin{theorem} \label{theorem: tightened bounds on the cumulant}
Let $X$ be a random variable taking values on a finite set $\set{X}$.
Then, for every optimal binary code, the cumulant generating function
in \eqref{eq: cumulant} satisfies
\begin{align}
& \sup_{\beta \in (-\rho, \infty) \setminus \{0\}}
\frac1{\beta} \left[H_{\frac{\beta}{\beta+\rho}}(X)
- \log \, t(\beta, |\set{X}|) \right] \nonumber \\
\label{eq: normalized cumulant}
& \leq \frac{\Lambda^{\ast}(\rho)}{\rho}  \\
\label{eq: improved UB on normalized cumulant}
& \leq H_{\frac1{1+\rho}}(X) + \frac1{\rho} \, \log \biggl(
\frac1{1+\rho} \left[1 - \exp\left(-\rho H_{\frac1{1+\rho}}(X)
\right) \right] + \exp \left( (\rho-1)^+ \, H_{\frac1\rho}(X)
- \rho H_{\frac1{1+\rho}}(X) \right) \biggr),
\end{align}
for all $\rho > 0$,
where $t(\cdot)$ is given in \eqref{eq: t}. Moreover,
\eqref{eq: normalized cumulant} also holds for $\rho < 0$.
\end{theorem}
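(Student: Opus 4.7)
The lower bound \eqref{eq: normalized cumulant} requires essentially no new work: I would simply observe that $\Lambda^{\ast}(\rho)/\rho = \tfrac{1}{\rho}\log \expectation\bigl[2^{\rho \ell(f_X^{\ast}(X))}\bigr]$ by the definition in \eqref{eq: cumulant}, and then invoke Lemma~\ref{lemma: tightened bound on the cumulant} directly. Since the admissible range $\beta \in (-\rho,\infty)\setminus\{0\}$ in Theorem~\ref{thm: key result} is valid for every $\rho\neq 0$, the same bound covers both $\rho > 0$ and $\rho < 0$, which disposes of the ``moreover'' clause at once.

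For the upper bound \eqref{eq: improved UB on normalized cumulant}, restricted to $\rho > 0$, the plan is to route the argument through guessing moments. First, by the right-hand inequality of Lemma~\ref{lemma1: UB by guessing moments},
\[
\expectation\bigl[2^{\rho \ell(f_X^{\ast}(X))}\bigr] \;\leq\; \expectation[g_X^{\rho}(X)].
\]
Next, apply the ranking-function bound \eqref{eq1: UB-NSI} of Theorem~\ref{theorem: UB-NSI} to the right-hand side, obtaining
\[
\expectation\bigl[2^{\rho \ell(f_X^{\ast}(X))}\bigr]
\;\leq\; \frac{1}{1+\rho}\Bigl[\exp\bigl(\rho H_{\frac{1}{1+\rho}}(X)\bigr)-1\Bigr]
+ \exp\bigl((\rho-1)^{+}\,H_{\frac{1}{\rho}}(X)\bigr).
\]
Finally, I would factor $\exp\bigl(\rho H_{\frac{1}{1+\rho}}(X)\bigr)$ out of the right-hand side before taking $\tfrac{1}{\rho}\log$, which gives exactly the form of \eqref{eq: improved UB on normalized cumulant}.

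The main obstacle is simply the algebraic verification of that last rearrangement, and it is genuinely routine: once the factorization is performed, the first summand inside the logarithm becomes $\tfrac{1}{1+\rho}\bigl[1-\exp(-\rho H_{\frac{1}{1+\rho}}(X))\bigr]$ and the second becomes $\exp\bigl((\rho-1)^{+}H_{\frac{1}{\rho}}(X)-\rho H_{\frac{1}{1+\rho}}(X)\bigr)$, while the prefactor contributes the additive $H_{\frac{1}{1+\rho}}(X)$ term. All of the nontrivial ingredients, namely the cumulant-to-guessing reduction in Lemma~\ref{lemma1: UB by guessing moments} and the ranking-function bound of Theorem~\ref{theorem: UB-NSI}, are already established, so there is no conceptual hurdle beyond this bookkeeping.
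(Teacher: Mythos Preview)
Your proposal is correct and matches the paper's own proof essentially verbatim: the paper also obtains the lower bound directly from Lemma~\ref{lemma: tightened bound on the cumulant} (which already covers $\rho\neq 0$), and derives the upper bound by chaining Lemma~\ref{lemma1: UB by guessing moments} with Theorem~\ref{theorem: UB-NSI} followed by the same factorization you describe.
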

\begin{proof}
The lower bound in the left side of \eqref{eq: normalized cumulant}
is Lemma~\ref{lemma: tightened bound on the cumulant}, and the upper
bound in the right side of \eqref{eq: improved UB on normalized cumulant}
follows from Theorem~\ref{theorem: UB-NSI} and
Lemma~\ref{lemma1: UB by guessing moments}.
\end{proof}

\begin{remark}
For $\rho \in (-1, 0) \cup (0, \infty)$, loosening the
bound in the left side of \eqref{eq: normalized cumulant}
by the sub-optimal choice of $\beta=1$ and invoking
$t(1, |\set{X}|) \leq \log_2(1+|\set{X}|)$ (in view of Lemma~\ref{lemma: l inequality},
and since $2^x-1 \leq x$ for $x \in [0,1]$)
recovers the lower bound in \eqref{eq2: Th. 1 CV-ISIT14}.
\end{remark}

\begin{remark} \label{belgrad}
In view of \eqref{ramat-gan}, the second term in
the right side of \eqref{eq: improved UB on normalized cumulant}
is non-positive for all $\rho \geq 1$; due to the non-negativity
of the R\'{e}nyi entropy, this also holds for $\rho \in (0,1)$. Hence,
for $\rho > 0$, the upper bound in
\eqref{eq: improved UB on normalized cumulant} improves the
bound in the right side of \eqref{eq2: Th. 1 CV-ISIT14}.
\end{remark}

The Chernoff bound and \eqref{eq: improved UB on normalized cumulant} readily yield
the following lower bound.
\begin{theorem} \label{theorem: tightened LB on CDF}
Under the assumptions in Theorem~\ref{theorem: tightened bounds on the cumulant},
for $R < \log |\set{X}|$,
\begin{align}
\log \left( \frac1{\mathbb{P} \bigl[\ell(f_X^{\ast}(X)) > R \bigr]} \right)
& \geq \sup_{\rho > 0} \biggl\{ \rho R - \rho H_{\frac1{1+\rho}}(X)
- \log \biggl( \frac1{1+\rho} \left[1 - \exp\left(-\rho H_{\frac1{1+\rho}}(X) \right)
\right] \nonumber \\
& \hspace*{5cm} + \exp \left( (\rho-1)^+ \, H_{\frac1\rho}(X) - \rho H_{\frac1{1+\rho}}(X)
\right) \biggr) \biggr\}.
\label{eq: tightened LB on CDF}
\end{align}
\end{theorem}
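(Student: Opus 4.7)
The proof is a direct Chernoff-style argument built on top of the upper bound on $\Lambda^{\ast}(\rho)$ already established in Theorem~\ref{theorem: tightened bounds on the cumulant}. The plan has three short steps. First, fix an arbitrary $\rho > 0$. Since the map $t \mapsto 2^{\rho t}$ is strictly monotonically increasing, Markov's inequality yields
\begin{align*}
\mathbb{P}\bigl[\ell(f_X^{\ast}(X)) > R\bigr]
= \mathbb{P}\bigl[2^{\rho \, \ell(f_X^{\ast}(X))} > 2^{\rho R}\bigr]
\leq 2^{-\rho R}\, \mathbb{E}\bigl[2^{\rho\, \ell(f_X^{\ast}(X))}\bigr]
= 2^{-\rho R + \Lambda^{\ast}(\rho)},
\end{align*}
where the last equality invokes the definition \eqref{eq: cumulant} of the cumulant generating function.

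Second, take logarithms and rearrange to obtain the Cram\'er-type inequality
\begin{align*}
\log \frac{1}{\mathbb{P}[\ell(f_X^{\ast}(X)) > R]} \geq \rho R - \Lambda^{\ast}(\rho).
\end{align*}
Then I would substitute the upper bound on $\Lambda^{\ast}(\rho)$ from \eqref{eq: improved UB on normalized cumulant} (after multiplying that bound through by $\rho > 0$). The resulting lower bound reads exactly
\begin{align*}
\rho R - \rho H_{\frac1{1+\rho}}(X) - \log\!\left(\frac1{1+\rho}\left[1 - \exp\!\left(-\rho H_{\frac1{1+\rho}}(X)\right)\right] + \exp\!\left((\rho-1)^{+} H_{\frac1\rho}(X) - \rho H_{\frac1{1+\rho}}(X)\right)\right).
\end{align*}

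Third, since the inequality holds for every $\rho > 0$, take the supremum over $\rho > 0$ to obtain \eqref{eq: tightened LB on CDF}. The hypothesis $R < \log|\set{X}|$ is used only to guarantee that the bound is non-trivial; for $R \geq \log|\set{X}|$ the probability on the left is zero and \eqref{eq: tightened LB on CDF} holds trivially. There is no serious obstacle in this derivation: the whole content is packed into the upper bound on $\Lambda^{\ast}(\rho)$ established in Theorem~\ref{theorem: tightened bounds on the cumulant}, and the Chernoff bound is a one-line standard argument. The improvement over \eqref{eq1: Th. 2 in CV2014a} is inherited verbatim from the improvement noted in Remark~\ref{belgrad}.
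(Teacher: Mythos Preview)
Your proof is correct and follows exactly the approach the paper takes: the paper's entire proof is the one-line remark that the Chernoff bound combined with the upper bound \eqref{eq: improved UB on normalized cumulant} on $\Lambda^{\ast}(\rho)$ yields \eqref{eq: tightened LB on CDF}. You have simply spelled out that Chernoff step explicitly, and your observation about the role of the hypothesis $R < \log|\set{X}|$ is a fine addendum.
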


\begin{remark}
The bound in \eqref{eq: tightened LB on CDF} is strictly tighter than
the right side in \eqref{eq2: Th. 2 in CV2014a} unless $X$ is deterministic.
To show this, note that in view of
Remark~\ref{belgrad} and \eqref{eq1: Th. 2 in CV2014a}--\eqref{eq2: Th. 2 in CV2014a},
the bound in \eqref{eq: tightened LB on CDF} cannot be looser than the right side in
\eqref{eq2: Th. 2 in CV2014a}. Moreover,
since the function $\rho R - \rho H_{\frac1{1+\rho}}(X)$ is concave in
$\rho > 0$, its local (and hence global) maximum is achieved
at $\rho^{\ast} > 0$; from the proof of \cite[Theorem~2]{CV2014a}, this
value of $\rho^{\ast}$ satisfies
\begin{align}
\label{eq1: rho star}
& \alpha = \frac1{1+\rho^{\ast}}, \quad H(X_{\alpha}) = R, \\
\label{eq2: rho star}
& \rho^{\ast} R - \rho^{\ast}
H_{\frac1{1+\rho^\ast}}(X) = D(X_\alpha \| X)
\end{align}
where $X_{\alpha}$ has the scaled probability mass function in
\eqref{eq: scaled distribution}. Hence, by replacing the supremum over
$\rho > 0$ with the value at
$\rho = \rho^{\ast}$, \eqref{eq2: rho star} yields the following loosened bound:
\begin{align}
\log \left( \frac1{\mathbb{P}
\bigl[\ell(f_X^{\ast}(X)) > R \bigr]} \right)
& \geq D(X_\alpha \| X)
- \log \biggl( \frac1{1+\rho^{\ast}} \left[1 - \exp\left(-\rho^{\ast}
H_{\frac1{1+\rho^{\ast}}}(X) \right) \right] \nonumber \\
& \hspace*{3.5cm} + \exp \left( (\rho^{\ast}-1)^+
\, H_{\frac1{\rho^\ast}}(X) - \rho^{\ast} H_{\frac1{1+\rho^{\ast}}}(X)
\right) \biggr).
\label{eq: weaker LB on CDF}
\end{align}
Hence, it is enough to prove that
\eqref{eq: weaker LB on CDF} is strictly tighter than \eqref{eq2: Th. 2 in CV2014a}.
This follows from the strict inequality in \eqref{ramat-gan} whenever $X$
is non-deterministic, thus implying that, for all $\rho > 0$,
\begin{align}
\sum_{x \in \set{X}} P_X^{\frac1{1+\rho}}(x) = \exp\left(\tfrac{\rho}{1+\rho} \;
H_{\frac1{1+\rho}}(X) \right) > 1.
\end{align}
\end{remark}

\subsection{Non-asymptotic bounds for fixed-to-variable lossless source codes}
\label{subsec: FV lossless codes}

We consider the fixed-to-variable-length
lossless compression in Definition~\ref{def: opt} where the
object to be compressed $x^n = (x_1, \ldots, x_n) \in \set{A}^n$ is a
string of length $n$ ($n$ is known to both encoder and decoder),
whose letters are drawn from a finite alphabet
$\set{X}$ according to the probability mass function
$P_{X^n}(x^n) = \overset{n}{\underset{i=1}{\prod}} P_X(x_i)$ for all $x^n \in \set{A}^n$.
We consider the following non-asymptotic measures for optimal fixed-to-variable
lossless compression:
\begin{itemize}
\item The cumulant generating function of the codeword lengths is given by
\begin{align}
\Lambda_n(\rho) :=
\frac1n \, \log \expectation \left[2^{\rho \, \ell(f_{X^n}^{\ast}(X^n))} \right],
\quad \rho \in \Reals.
\end{align}
\item The non-asymptotic version of the source reliability function is given by
\begin{align}
E_n(R) = \frac1n \, \log \left( \frac1{\mathbb{P} \bigl[ \frac1n \,
\ell(f_{X^n}^{\ast}(X^n)) \geq R \bigr]} \right).
\end{align}
\end{itemize}

\begin{theorem} \label{theorem: non-asymp. FV}
Consider a memoryless and stationary source of finite alphabet
$\set{A}$, and let $f_{X^n}^{\ast} \colon \set{A}^n \to \{0,1\}^{\ast}$
be an optimal compression code. Then, the following bounds hold:
\begin{enumerate}[a)]
\item For all $\rho > 0$
\begin{align}
& \sup_{\beta \in (-\rho, \infty) \setminus \{0\}}
\frac{\rho}{\beta} \left[H_{\frac{\beta}{\beta+\rho}}(X)
- \frac1n \, \log t(\beta, |\set{A}|^n) \right] \nonumber \\[0.1cm]
\label{eq2: FV normalized cumulant}
& \leq \Lambda_n(\rho)  \\
\label{eq3: FV normalized cumulant}
& \leq \rho H_{\frac1{1+\rho}}(X) + \frac1n \, \log \biggl(
\frac1{1+\rho} \left[1 - \exp\left(-n\rho H_{\frac1{1+\rho}}(X)
\right) \right] \nonumber \\
& \hspace*{3.7cm} + \exp \left( n \left[ (\rho-1)^+ \, H_{\frac1\rho}(X)
- \rho H_{\frac1{1+\rho}}(X) \right] \right) \biggr)
\end{align}
where $t(\cdot)$ is as defined in \eqref{eq: t}.
\item For $R < \log |\set{A}|$
\begin{align}
& E_n(R) \geq \sup_{\rho > 0} \biggl\{ \rho R - \rho H_{\frac1{1+\rho}}(X)
- \frac1n \log \biggl( \frac1{1+\rho} \left[1 - \exp\left(-n\rho H_{\frac1{1+\rho}}(X) \right)
\right] \nonumber \\
\label{eq5: FV reliability}
& \hspace*{5cm} + \exp \left( n \left[ (\rho-1)^+ \, H_{\frac1\rho}(X)
- \rho H_{\frac1{1+\rho}}(X) \right] \right) \biggr) \biggr\}.
\end{align}
\end{enumerate}
\end{theorem}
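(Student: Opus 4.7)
The plan is to reduce both parts to the single-shot results in Theorems~\ref{theorem: tightened bounds on the cumulant} and~\ref{theorem: tightened LB on CDF} by viewing the $n$-block $X^n$ as a single random variable taking values in the finite set $\set{X} = \set{A}^n$ of cardinality $|\set{A}|^n$, and then exploiting the additivity of the R\'enyi entropy under independence (Lemma~\ref{lemma: RE - indep.}) for the i.i.d.\ source.

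For part~a), I would first observe that by definition $\Lambda_n(\rho) = \frac{1}{n}\,\Lambda^{\ast}(\rho)$, where the right-hand $\Lambda^{\ast}$ corresponds to the optimal binary code for $X^n$. Applying the left inequality in \eqref{eq: normalized cumulant} to $X^n$ gives
\[
\Lambda_n(\rho) \;\geq\; \frac{\rho}{n}\sup_{\beta \in (-\rho,\infty) \setminus\{0\}} \frac{1}{\beta}\left[H_{\frac{\beta}{\beta+\rho}}(X^n) - \log t(\beta, |\set{A}|^n)\right].
\]
By the memoryless hypothesis and Lemma~\ref{lemma: RE - indep.}, $H_{\beta/(\beta+\rho)}(X^n) = n\,H_{\beta/(\beta+\rho)}(X)$, so pulling the factor $n$ across cancels the $1/n$ in front of the R\'enyi term but leaves $\frac{1}{n}\log t(\beta,|\set{A}|^n)$ intact; this is exactly \eqref{eq2: FV normalized cumulant}. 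For the upper bound \eqref{eq3: FV normalized cumulant}, I apply \eqref{eq: improved UB on normalized cumulant} to $X^n$ and substitute $H_{1/(1+\rho)}(X^n) = n\,H_{1/(1+\rho)}(X)$ and $H_{1/\rho}(X^n) = n\,H_{1/\rho}(X)$ into the argument of the logarithm; the prefactor $\frac{1}{\rho}\log(\cdot)$ becomes $\frac{1}{n}\log(\cdot)$ after dividing the whole bound by $n$, and the leading term $H_{1/(1+\rho)}(X^n)/n$ collapses to $H_{1/(1+\rho)}(X)$.

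For part~b), I apply Theorem~\ref{theorem: tightened LB on CDF} directly to $X^n$ with the threshold $R$ there replaced by $nR$, which is admissible as long as $nR < \log|\set{A}^n| = n\log|\set{A}|$, i.e., $R < \log|\set{A}|$. Substituting $H_{1/(1+\rho)}(X^n) = n H_{1/(1+\rho)}(X)$ and $H_{1/\rho}(X^n) = n H_{1/\rho}(X)$ inside the supremum over $\rho>0$, and dividing the resulting inequality by $n$, turns $\rho\,nR - \rho n H_{1/(1+\rho)}(X)$ into $n[\rho R - \rho H_{1/(1+\rho)}(X)]$; the logarithmic correction term divided by $n$ yields precisely the subtracted $\frac{1}{n}\log(\cdot)$ in \eqref{eq5: FV reliability}. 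Finally, since $\{\ell(f_{X^n}^{\ast}(X^n)) > nR\} \subseteq \{\tfrac{1}{n}\ell(f_{X^n}^{\ast}(X^n)) \geq R\}$, the left-hand side matches $E_n(R)$ after the customary replacement of a strict with a non-strict inequality (which only relaxes the bound).

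The steps are essentially bookkeeping: there is no serious obstacle since everything follows from additivity of R\'enyi entropies under independence plus the single-shot bounds. The only place where one must be mildly careful is the $t(\beta,|\set{A}|^n)$ term in \eqref{eq2: FV normalized cumulant}, which does not factor across $n$ and must therefore be kept in its raw form with argument $|\set{A}|^n$ and divided by $n$; this is consistent with the statement and does not spoil asymptotic tightness, since $\frac{1}{n}\log t(\beta,|\set{A}|^n)$ grows at most linearly in $\log|\set{A}|$ for $\beta\geq 1$ (and is bounded for $\beta<0$), from the explicit forms in Lemmas~\ref{lemma: l inequality}--\ref{lemma: tightened bound on the cumulant}.
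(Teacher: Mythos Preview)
Your approach is essentially the same as the paper's: apply Theorems~\ref{theorem: tightened bounds on the cumulant} and~\ref{theorem: tightened LB on CDF} to $X^n$ on $\set{A}^n$ and invoke additivity $H_{\alpha}(X^n)=nH_{\alpha}(X)$ from Lemma~\ref{lemma: RE - indep.}. The paper's own proof is the one-liner doing exactly this.

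One small slip to fix in part~b): the inclusion $\{\ell(f_{X^n}^{\ast}(X^n)) > nR\} \subseteq \{\tfrac{1}{n}\ell(f_{X^n}^{\ast}(X^n)) \geq R\}$ yields $\prob[\ell > nR] \leq \prob[\ell \geq nR]$, hence $nE_n(R)=\log\frac{1}{\prob[\ell \geq nR]} \leq \log\frac{1}{\prob[\ell > nR]}$, which is the \emph{wrong} direction to chain with the lower bound from Theorem~\ref{theorem: tightened LB on CDF}. The clean fix is not an inclusion argument but the observation that the Chernoff bound underlying Theorem~\ref{theorem: tightened LB on CDF} (Markov applied to $2^{\rho\,\ell}$) gives $\prob[\ell \geq nR]\leq 2^{-\rho nR}\,\expectation[2^{\rho \ell}]$ with a non-strict threshold as well, so \eqref{eq: tightened LB on CDF} holds verbatim with $\geq$ in place of $>$; then the substitution $R\leftarrow nR$ and division by $n$ give \eqref{eq5: FV reliability} directly. (Incidentally, for the lower bound in \eqref{eq2: FV normalized cumulant} with $\beta\in(-\rho,0)$ you are using $H_{\alpha}(X^n)=nH_{\alpha}(X)$ at a negative order $\alpha=\beta/(\beta+\rho)$; this additivity still holds by the same product argument as in Lemma~\ref{lemma: RE - indep.}, though the lemma as stated covers only $\alpha\in[0,\infty]$.)
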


\begin{proof}
Items a) and b) follow, respectively, from Theorems~\ref{theorem: tightened bounds on the cumulant}
and~\ref{theorem: tightened LB on CDF} with $|\set{X}|$ replaced by $|\set{A}|^n$, and since
$H_{\alpha}(X^n) = n H_{\alpha}(X)$ holds for i.i.d. random variables and for all $\alpha > 0$
(see Lemma~\ref{lemma: RE - indep.}).
\end{proof}

\begin{remark} \label{remark: asym}
The non-asymptotic bounds on the cumulant generating function
in \eqref{eq2: FV normalized cumulant}--\eqref{eq3: FV normalized cumulant}
recover the known asymptotic result in \cite[(29)]{CV2014a} where for all $\rho > 0$
\begin{align} \label{eq4: FV normalized cumulant}
\Lambda(\rho) := \lim_{n \to \infty} \Lambda_n(\rho) = \rho H_{\frac1{1+\rho}}(X),
\end{align}
which, incidentally, coincides with Arikan's asymptotic fundamental limit for
$\underset{n \to \infty}{\lim} \frac1{n} \log \expectation[g_{X^n}^{\rho}(X^n)]$
when $X^n$ is i.i.d. \cite{Arikan96}.
To this end, note that $\underset{n \to \infty}{\lim} \tfrac1n \, \log t(1, |\set{A}|^n) = 0$,
and selecting $\beta=1$ in the left side of \eqref{eq2: FV normalized cumulant}
yields
\begin{align} \label{eq: liminf cumulant}
\varliminf_{n \to \infty} \Lambda_n(\rho) \geq \rho H_{\frac1{1+\rho}}(X).
\end{align}
Moreover, since $H_{\alpha}(X)$ is monotonically non-increasing in $\alpha$,
\begin{align} \label{lll}
\rho H_{\frac1{1+\rho}}(X) - (\rho-1)^+ \, H_{\frac1\rho}(X) \geq \min\{\rho,1\} \, H_{\frac1\rho}(X)
\end{align}
and, if $X$ is non-deterministic, then \eqref{eq3: FV normalized cumulant}
and \eqref{lll} yield
\begin{align} \label{eq: limsup cumulant}
\varlimsup_{n \to \infty} \Lambda_n(\rho) \leq \rho H_{\frac1{1+\rho}}(X),
\end{align}
recovering \eqref{eq4: FV normalized cumulant} from \eqref{eq: liminf cumulant} and
\eqref{eq: limsup cumulant}.
Furthermore, \eqref{eq5: FV reliability} and \eqref{lll} imply that
\begin{align} \label{eq6: FV reliability}
E(R) &:= \varliminf_{n \to \infty} E_n(R) \\
&\geq \sup_{\rho > 0} \Bigl\{ \rho R - \rho H_{\frac1{1+\rho}}(X) \Bigr\}.
\end{align}
\end{remark}

Although, as noted in Remark~\ref{remark: asym}, the improvement in the bounds
afforded by Theorem~\ref{theorem: non-asymp. FV} washes out asymptotically, the
following example illustrates the improvement in the non-asymptotic regime.

\begin{example} \label{example: DMS}
Consider a discrete memoryless source which emits a string of $n$ letters
from the alphabet $\set{A} = \{a, b, c\}$ with $P_X(a) = \tfrac47$,
$P_X(b) = \tfrac27$ and $P_X(c) = \tfrac17$.

The bounds on the cumulant generating function in \cite[Theorem~1]{CV2014a}
(see \eqref{eq2: Th. 1 CV-ISIT14}) are given by
\begin{align} \label{eq: looser cumulant bounds}
\rho \, H_{\frac1{1+\rho}}(X) - \tfrac{\rho}{n} \, \log \log_2 (1+|\set{A}|^n)
\leq \Lambda_n(\rho) \leq \rho H_{\frac1{1+\rho}}(X)
\end{align}
for $\rho > 0$. Figure~\ref{figure: cumulant bounds} compares \eqref{eq: looser cumulant bounds}
with the improved bounds in \eqref{eq2: FV normalized cumulant}--\eqref{eq3: FV normalized cumulant}.
For $n=10$, the upper and lower bounds are compared to the exact normalized
cumulant (see the left plot in Figure~\ref{figure: cumulant bounds}); this
indicates that the lower bound in \eqref{eq2: FV normalized cumulant} can be tight
even for small values of $n$. The match between the upper and lower bounds in
\eqref{eq2: FV normalized cumulant}--\eqref{eq3: FV normalized cumulant} improves
by increasing $n$, and the tightening obtained by
the lower bound in \eqref{eq2: FV normalized cumulant} can be significant
for small values of $n$.

\begin{figure}[h]
\vspace*{-4.4cm}
\hspace*{-0.5cm}
\includegraphics[width=10cm]{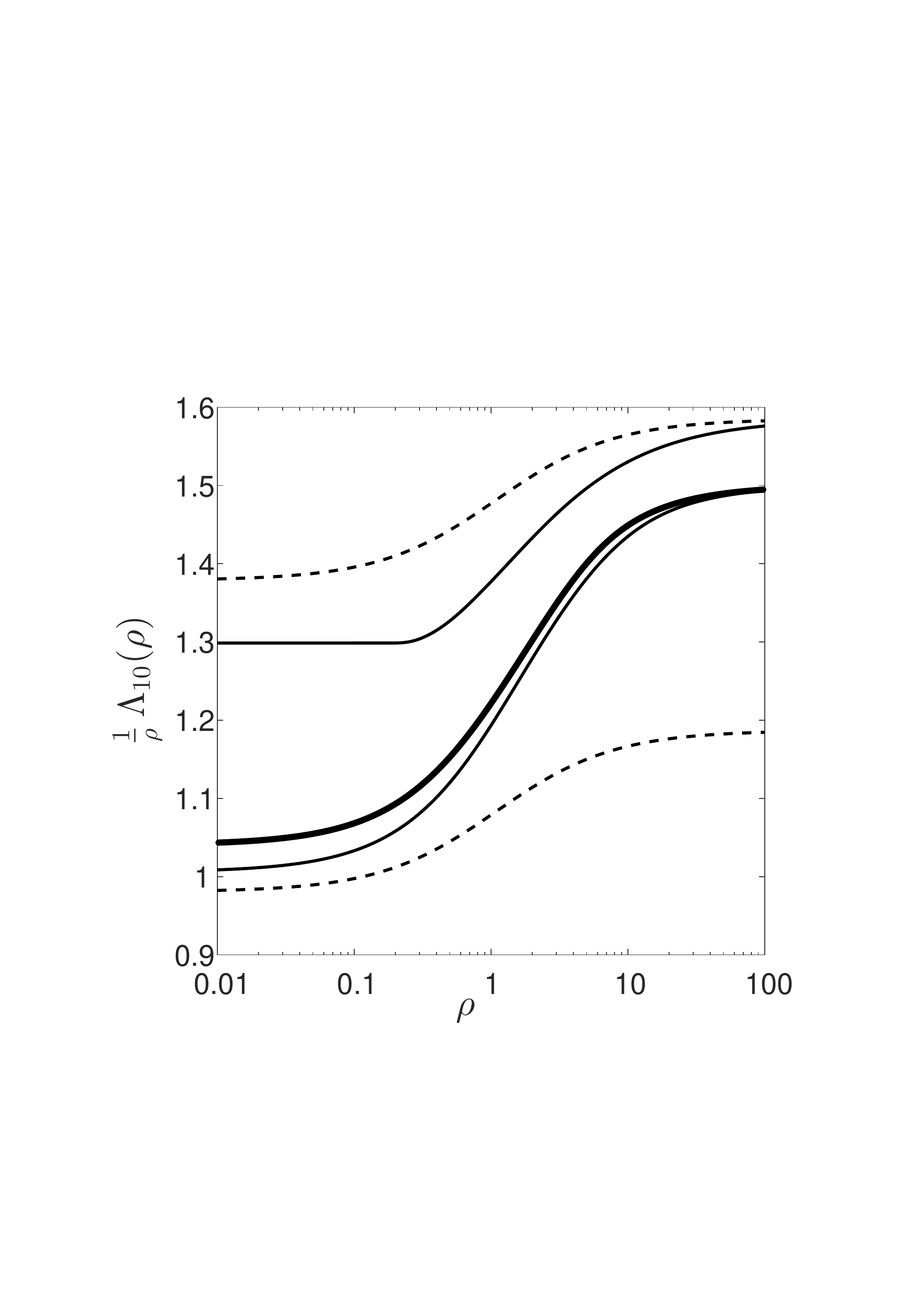}
\hspace*{-1.5cm}
\includegraphics[width=10cm]{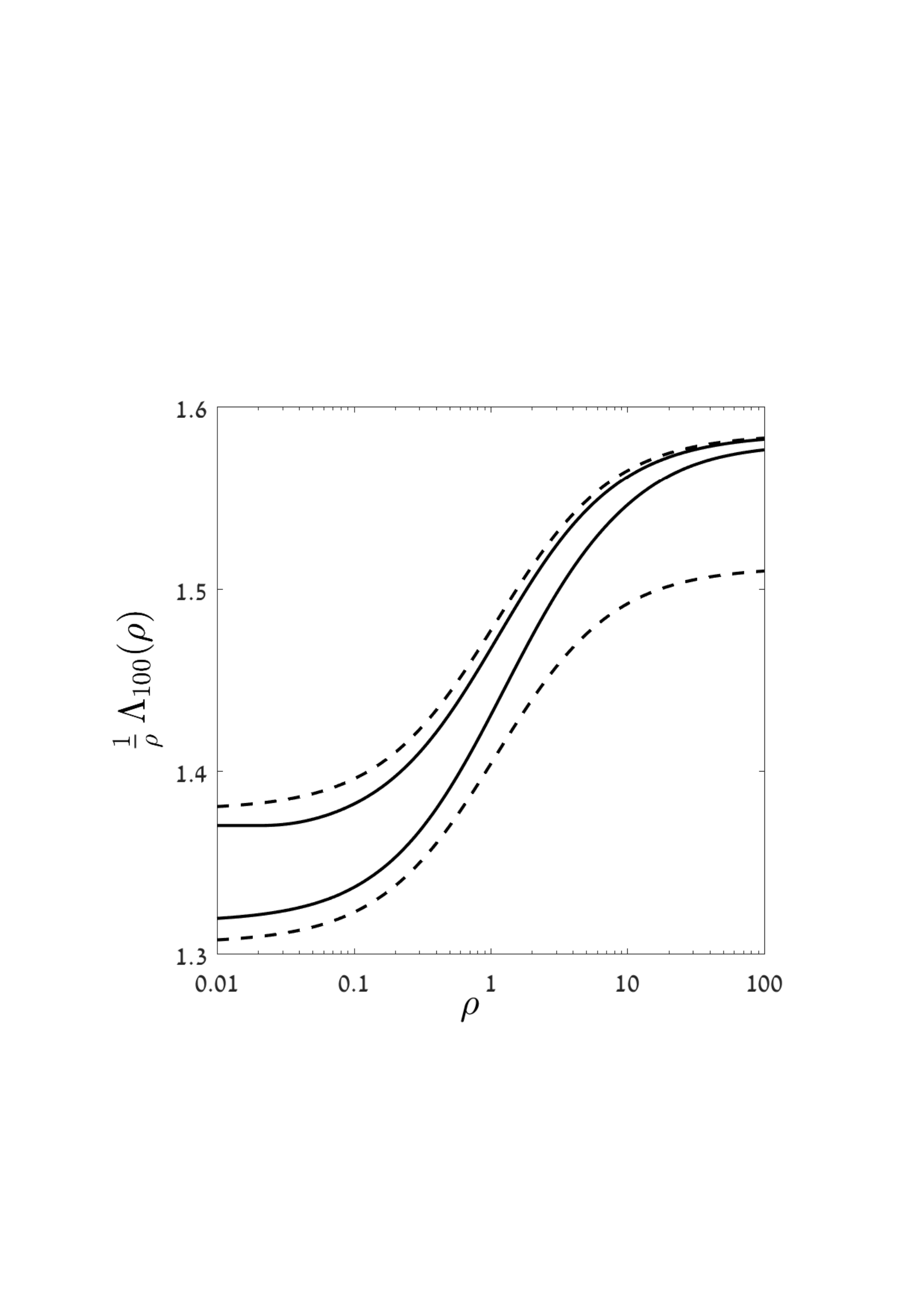}
\vspace*{-4.5cm}
\caption{\label{figure: cumulant bounds}
Bounds on the normalized cumulant generating function,
$\frac{\Lambda_n(\rho)}{\rho}$ (in bits), of the codeword
lengths of optimal
lossless compression of strings of length $n$ emitted
from the discrete memoryless source in
Example~\ref{example: DMS}. The dashed lines are the
bounds in \eqref{eq: looser cumulant bounds}, and the
thin solid lines refer to the improved bounds in
\eqref{eq2: FV normalized cumulant}--\eqref{eq3: FV normalized cumulant}.
The left plot corresponds to $n=10$, in which case we
can compute the exact normalized cumulant (thick
solid curve). The right plot corresponds to $n=100$.}
\end{figure}
\begin{figure}[h]
\hspace*{0.3cm}
\vspace*{-0.1cm}
\includegraphics[width=8.2cm]{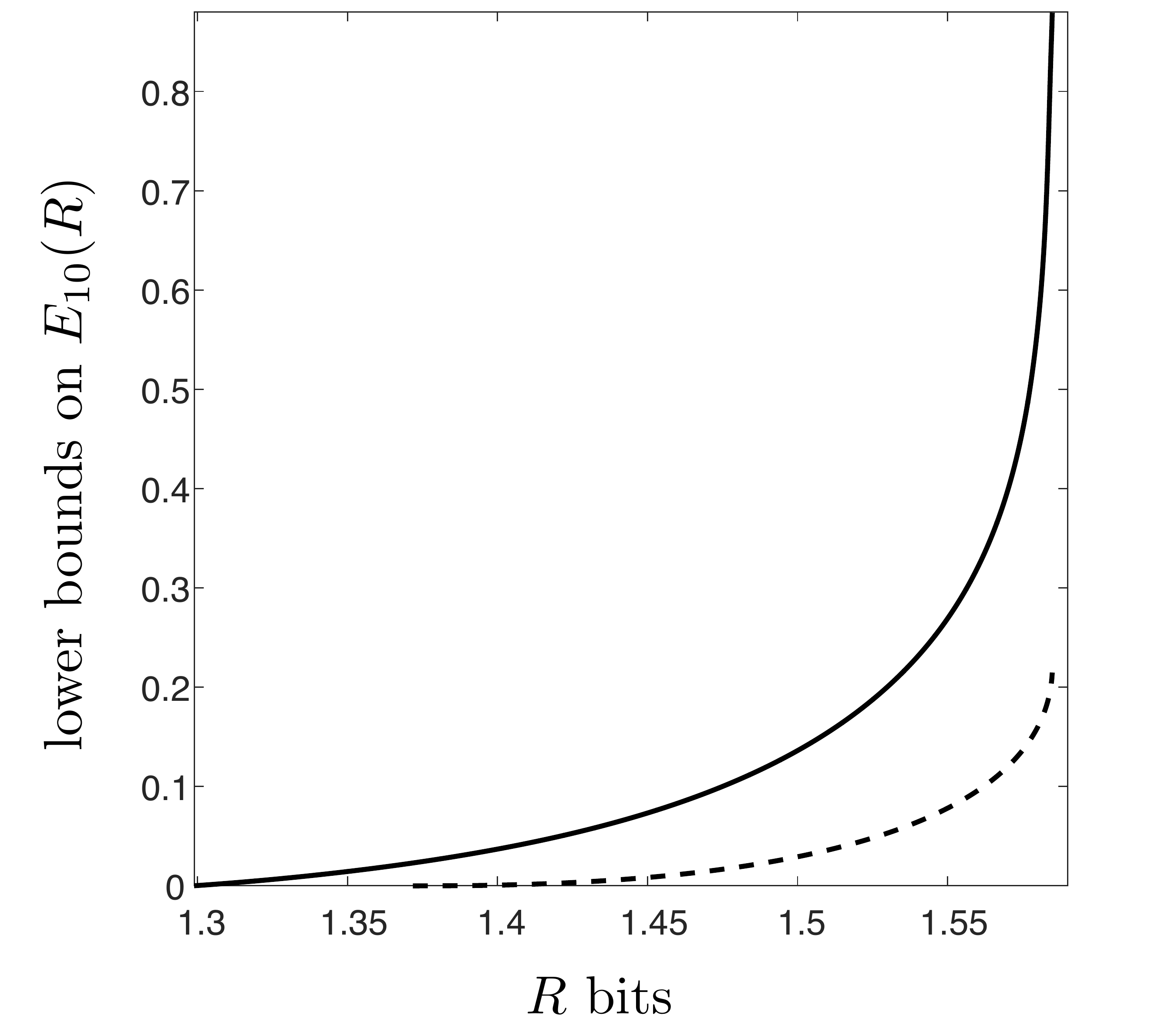}
\hspace*{0.5cm}
\includegraphics[width=7.9cm]{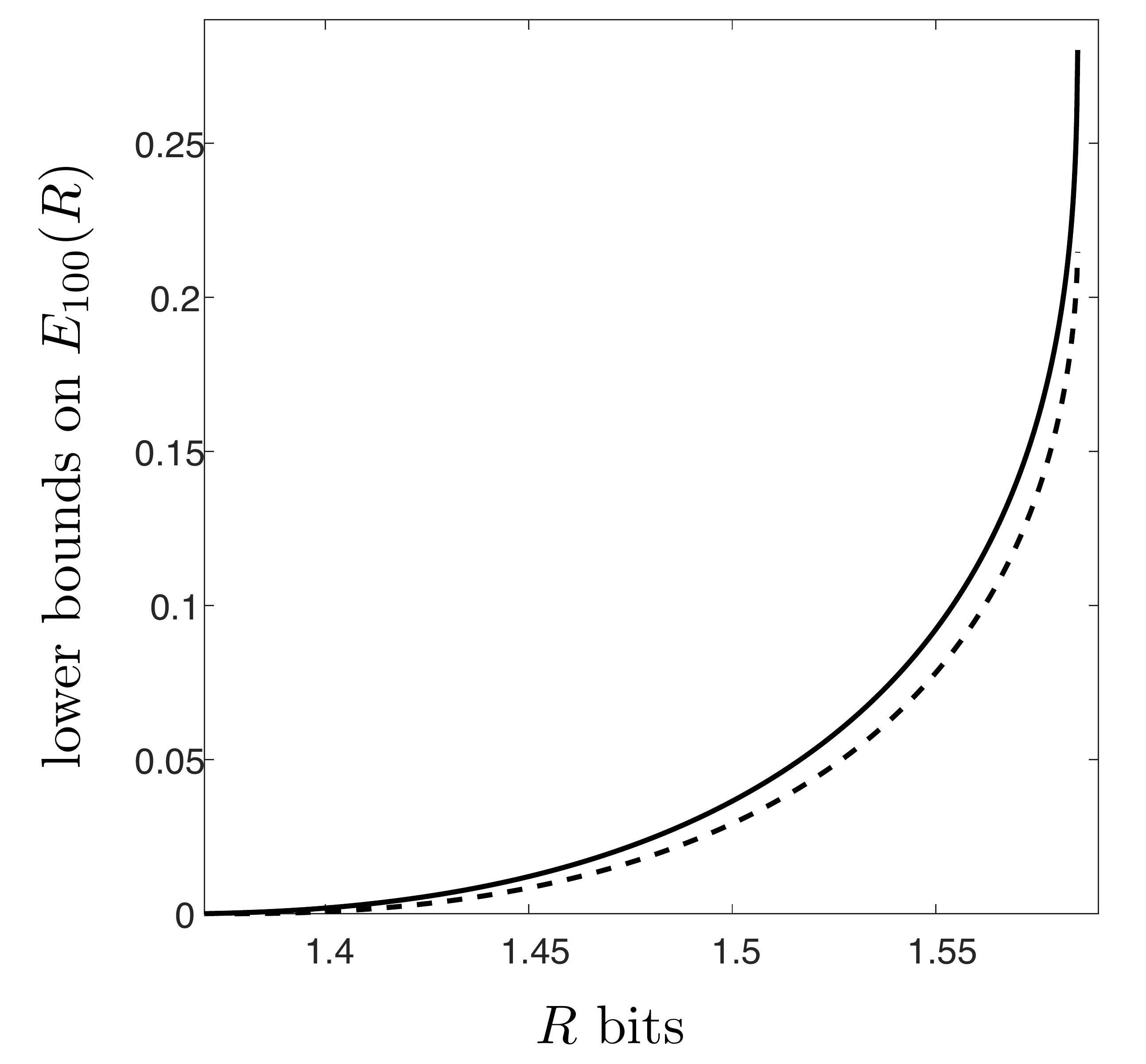}
\vspace*{-0.9cm}
\caption{\label{figure: source reliability bounds}
Lower bounds on the non-asymptotic reliability function, $E_n(R)$ (base~2),
for the discrete memoryless source in Example~\ref{example: DMS}. In each plot,
the dashed curve refers to the lower bound on $E_n(R)$ in Theorem~\ref{thm2 - CourtadeV14},
and the solid curve refers to the tighter lower bound in \eqref{eq5: FV reliability}.
The left and right plots correspond to $n=10$ and $n=100$, respectively.}
\end{figure}

Lower bounds on the non-asymptotic source reliability function $E_n(R)$ are plotted in
Figure~\ref{figure: source reliability bounds}. In this figure, the lower bound in
Theorem~\ref{thm2 - CourtadeV14} is compared to the tighter bound in \eqref{eq5: FV reliability},
illustrating the improvement for small to moderate values of $n$.
\end{example}

\section{Lower Bounds for Variable-Length Source Coding Allowing Errors}
\label{sec: almost-lossless source coding}

Following a recent study by Kuzuoka \cite{Kuzuoka16}, this section applies
our bounding techniques to derive
improved lower bounds on the cumulant generating function of the codeword lengths
for variable-length source coding allowing errors (which, in contrast to the
conventional fixed-to-fixed paradigm, are not necessarily detectable by the decoder)
by means of the smooth R\'{e}nyi entropy in Definition~\ref{def: smooth Renyi entropy}.

In contrast to \cite{Kuzuoka16}, the bounds in this subsection are
derived for source codes without the prefix condition when either the maximal
or average decoding error probabilities are limited not to exceed a given
value $\varepsilon \in [0,1)$.

\begin{theorem} \label{th1: almost lossless}
Let $X$ take values on a finite set $\set{X}$, and let
$f \colon \set{X} \to \set{C}$ be an encoder (possibly stochastic) with a finite codebook
$\set{C} \subseteq \{0,1\}^{\ast}$, and let $\ell \colon \set{C} \to \{0, 1, \ldots, \}$
be the length function of the codewords in $\set{C}$. Fix $\varepsilon \in [0,1)$ and
$\rho > 0$.
\begin{enumerate}[1)]
\item If the {\em average} decoding error probability cannot be larger than $\varepsilon$,
then
\begin{align} \label{170608_1}
\frac1\rho \, \log \expectation \left[ 2^{\rho \, \ell(f(X))} \right] \geq \sup_{\beta>0}
\frac1\beta \left[H_{\frac{\beta}{\beta+\rho}}^{(\varepsilon)}(X) - \log t(\beta, |\set{X}|) \right]
\end{align}
where $H_{\alpha}^{(\varepsilon)}(X)$ is the
$\varepsilon$-smooth R\'{e}nyi entropy of order $\alpha$,
and $t(\cdot)$ is given in \eqref{eq: t}.
\item If the {\em maximal} decoding error probability cannot be larger than $\varepsilon$,
then also
\begin{align} \label{170608_2}
\frac1\rho \, \log \expectation \left[ 2^{\rho \, \ell(f(X))} \right] \geq
\sup_{\beta \in (-\rho, 0)} \frac1\beta \left[H_{\frac{\beta}{\beta+\rho}}(X)
- \log t(\beta, |\set{X}|) \right] - \frac1\rho \, \log \frac1{1-\varepsilon}.
\end{align}
\end{enumerate}
\end{theorem}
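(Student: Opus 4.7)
The strategy for both parts is to reduce the problem to a \emph{deterministic injective surrogate} encoder, to which the H\"older / R\'enyi-divergence machinery of Theorem~\ref{thm: key result} applies directly. Without loss of generality I take the decoder $\phi$ to be deterministic (a derandomized MAP version), so the decoding regions $S_x := \phi^{-1}(x) \subseteq \set{C}$ form a disjoint family. Writing $W(c|x) := \prob[f(X)=c \mid X=x]$, for each $x$ with $W(S_x|x)>0$ I select a representative $\tilde{c}(x) \in S_x$ of minimum length among $\{c \in S_x : W(c|x)>0\}$; disjointness of the $S_x$'s makes $x\mapsto \tilde c(x)$ an injective deterministic map from its domain into $\set{C}$.

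For Part~1, let $\mu(x):=P_X(x)\,W(S_x|x)$. Then $\mu(x)\leq P_X(x)$ and $\sum_x \mu(x)=\prob[\phi(f(X))=X]\geq 1-\varepsilon$, so $\mu\in B^{(\varepsilon)}(P_X)$ and $\tilde c$ is injective on $\mathrm{supp}(\mu)$. Splitting $\expectation[2^{\rho \ell(f(X))}]$ across $\{S_x\}$ and using $\ell(c)\geq \ell(\tilde c(x))$ for $c\in S_x$ with $W(c|x)>0$ (together with $\rho>0$) yields
\begin{align*}
\expectation[2^{\rho \ell(f(X))}] \;\geq\; \sum_{x} \mu(x)\, 2^{\rho \ell(\tilde{c}(x))}.
\end{align*}
I then repeat the H\"older step of Theorem~\ref{thm: key result} with $\mu$ in place of $P_X$ and $g(x)=2^{\ell(\tilde{c}(x))}$: for $s=\beta/(\beta+\rho)\in(0,1)$ (requiring $\beta>0$), H\"older with conjugate exponents $1/s$ and $1/(1-s)$ on $\mathrm{supp}(\mu)$ gives
\begin{align*}
\sum_{x} \mu^{s}(x) \;\leq\; \Bigl(\sum_x \mu(x)\, g^{\rho}(x)\Bigr)^{s} \Bigl(\sum_{x\in\mathrm{supp}(\mu)} g^{-\beta}(x)\Bigr)^{1-s},
\end{align*}
which after taking logarithms and dividing by $\rho$ rearranges to $\tfrac1{\rho}\log \expectation[2^{\rho\ell(f(X))}] \geq \tfrac1{\beta}\bigl[H_s(\mu) - \log \sum_{x\in\mathrm{supp}(\mu)} 2^{-\beta\ell(\tilde c(x))}\bigr]$, where $H_s(\mu):=(1-s)^{-1}\log\sum_x\mu^s(x)$. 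The definition \eqref{eq: smooth RE 1} yields $H_s(\mu)\geq H_s^{(\varepsilon)}(X)$ since $\mu\in B^{(\varepsilon)}(P_X)$ and $s\in(0,1)$; injectivity of $\tilde c$ on $\mathrm{supp}(\mu)$ together with Lemma~\ref{lemma: l inequality} and the monotonicity of $t(\beta,\cdot)$ for $\beta>0$ gives $\sum_{x\in\mathrm{supp}(\mu)} 2^{-\beta\ell(\tilde c(x))}\leq t(\beta,|\set{X}|)$. Supremizing over $\beta>0$ produces \eqref{170608_1}.

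For Part~2 the hypothesis $W(S_x|x)\geq 1-\varepsilon$ for every $x$ makes $\tilde c$ defined and globally injective on $\set{X}$, and the analogous splitting yields $\expectation[2^{\rho\ell(f(X))}] \geq (1-\varepsilon)\,\expectation[2^{\rho\ell(\tilde c(X))}]$. Applying Theorem~\ref{thm: key result} (in the generality including negative orders, via the proof sketched in Remark~\ref{remark: alternative proof - Th 1}) to $\tilde c$ with $g(x)=2^{\ell(\tilde c(x))}$ yields, for any $\beta\in(-\rho,0)$, $\tfrac1{\rho}\log\expectation[2^{\rho\ell(\tilde c(X))}] \geq \tfrac1{\beta}\bigl[H_{\beta/(\beta+\rho)}(X) - \log \sum_x 2^{-\beta\ell(\tilde c(x))}\bigr]$. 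For $\beta<0$, the map $\ell\mapsto 2^{-\beta\ell}$ is increasing, so injectivity of $\tilde c$ forces the reversed inequality $\sum_x 2^{-\beta\ell(\tilde c(x))}\geq t(\beta,|\set{X}|)$; dividing by the \emph{negative} quantity $1/\beta$ flips it into $\tfrac1{\beta}[H_{\beta/(\beta+\rho)}(X) - \log\sum]\geq \tfrac1{\beta}[H_{\beta/(\beta+\rho)}(X) - \log t(\beta,|\set{X}|)]$. Combining with the $(1-\varepsilon)$ prefactor and taking $\sup_{\beta\in(-\rho,0)}$ yields \eqref{170608_2}. The most delicate points are the derandomization of $\phi$ and the construction of the injective surrogate in the stochastic-encoder case (which hinges on the disjointness of $\{S_x\}_{x\in\set{X}}$ in $\set{C}$), and the sign tracking in Part~2 where the ``wrong-direction'' codeword-sum bound becomes the right direction only after division by the negative number $1/\beta$; everything else is a mechanical reuse of H\"older's inequality and Lemma~\ref{lemma: l inequality}.
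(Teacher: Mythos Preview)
Your proof is correct and for Part~1 is essentially identical to the paper's: the paper phrases the key step as the nonnegativity of a R\'enyi divergence $D_{\alpha}(R\|S)\geq 0$ with $\alpha=(\beta+\rho)/\beta>0$, which unwinds to exactly the H\"older inequality you wrote; the construction of the surrogate $\mu(x)=P_X(x)W(S_x\mid x)\in B^{(\varepsilon)}(P_X)$ and the injective map $x\mapsto \tilde c(x)$ (the paper's $c^{\ast}(x)$) is the same, and your restriction to $\mathrm{supp}(\mu)$ is in fact slightly tidier than the paper, which tacitly assumes $\psi^{-1}(x)\neq\varnothing$.

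For Part~2 you take a genuinely cleaner path. The paper keeps the sub-probability $\mu$ throughout, invokes $D_{\alpha}(R\|S)\leq 0$ for $\alpha<0$, and then sandwiches $\tfrac{\beta+\rho}{\rho}\log\sum_x\mu^{\beta/(\beta+\rho)}(x)$ between $H_{\beta/(\beta+\rho)}(X)$ and $H_{\beta/(\beta+\rho)}(X)-\tfrac{\beta}{\rho}\log\tfrac{1}{1-\varepsilon}$ using $(1-\varepsilon)P_X\leq\mu\leq P_X$. You instead pull the factor $(1-\varepsilon)$ out at the very first step, $\expectation[2^{\rho\ell(f(X))}]\geq(1-\varepsilon)\,\expectation[2^{\rho\ell(\tilde c(X))}]$, and then apply Theorem~\ref{thm: key result} verbatim to the deterministic injective surrogate with the true $P_X$. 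This avoids re-deriving the negative-order R\'enyi divergence machinery and makes transparent that Part~2 is literally ``lossless bound minus $\tfrac1\rho\log\tfrac1{1-\varepsilon}$''. Both routes yield the same inequality, but yours is a more direct reduction to already-established results; the paper's route has the (minor) advantage of making the role of the $\mu$-measure uniform across both parts.
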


\begin{proof} We first derive \eqref{170608_1}, and then rely on its proof for the derivation of \eqref{170608_2}.
\begin{enumerate}[1)]
\item \label{part 1 - 170806}
Let $Q \colon \set{X} \to \set{C}$ denote a transition probability matrix such that
$Q(c|x)$ is the probability that a codeword $c \in \set{C}$ is assigned
to $x \in \set{X}$ by the stochastic encoder.
Let $\psi \colon  \set{C} \to  \set{X}$ be the deterministic decoding function.
The average decoding error probability is given by
\begin{align} \label{170608_3}
P_{\mathrm{e}} &= \prob[ X \neq \psi(f(X))] \\
\label{170608_4}
&= \sum_{x \in \set{X}} P_X(x) \sum_{c: \, \psi(c) \neq x} Q(c|x).
\end{align}
In order to minimize $P_{\mathrm{e}}$ for a given (stochastic) encoder,
the decision relies on a MAP decoder:
\begin{align} \label{170608_5}
\psi(c) \in \arg\max_{x \in \set{X}} Q(c|x) P_X(x), \quad c \in \set{C}
\end{align}
where ties are arbitrarily resolved. Let
\begin{align} \label{170608_7}
\gamma(x) \triangleq Q\bigl(\psi^{-1}(x) | x \bigr) \in (0,1]
\end{align}
denote the probability that $x \in \set{X}$ is assigned to a codeword that is decoded into $x$.
Since the average decoding error probability satisfies $P_{\mathrm{e}} \leq \varepsilon$,
it follows from \eqref{170608_4} and \eqref{170608_7} that
\begin{align} \label{170608_8}
\sum_{x \in \set{X}} P_X(x) \gamma(x) \geq 1-\varepsilon.
\end{align}
For all $x \in \set{X}$, let
\begin{align} \label{170608_11}
\ell_{\psi}(x) \triangleq \min_{c \in \psi^{-1}(x)} \ell(c)
\end{align}
be the minimal length of the codewords for which the decoder chooses $x$, and let
\begin{align}  \label{170608_16}
\mu(x) \triangleq P_X(x) \gamma(x), \quad x \in \set{X}.
\end{align}
From \eqref{170608_7}, \eqref{170608_8} and \eqref{170608_16}
\begin{align}
& 0 \leq \mu(x) \leq P_X(x), \quad x \in \set{X}, \label{170608_17} \\
& \sum_{x \in \set{X}} \mu(x) \geq 1-\varepsilon, \label{170608_18}
\end{align}
which, by \eqref{eq: smooth RE 2}, yields
\begin{align}  \label{170608_19}
& \hspace*{-2.2cm} \mu \in \set{B}^{(\varepsilon)}(P_X).
\end{align}
For all $\rho > 0$
\begin{align}
\expectation\Bigl[ 2^{\rho \, \ell(f(X))} \Bigr]
&= \sum_{x \in \set{X}} P_X(x) \sum_{c \in \set{C}} Q(c|x) \, 2^{\rho \, \ell(c)} \label{170608_9} \\
&\geq \sum_{x \in \set{X}} P_X(x) \sum_{c \in \psi^{-1}(x)} Q(c|x) \, 2^{\rho \, \ell(c)}, \label{170608_10}
\end{align}
and, for every $x \in \set{X}$,
\begin{align}
\sum_{c \in \psi^{-1}(x)} Q(c|x) \, 2^{\rho \, \ell(c)} % \label{170608_12} \\
& \geq \sum_{c \in \psi^{-1}(x)} Q(c|x) \, 2^{\rho \, \ell_{\psi}(x)} \label{170608_13} \\
& = \gamma(x) \, 2^{\rho \, \ell_{\psi}(x)} \label{170608_14}
\end{align}
where \eqref{170608_13} holds due to \eqref{170608_11}, and
\eqref{170608_14} follows from \eqref{170608_7}.
Hence, for all $\rho > 0$,
\begin{align}  \label{170608_15}
\expectation\Bigl[ 2^{\rho \, \ell(f(X))} \Bigr] & \geq \sum_{x \in \set{X}} P_X(x) \gamma(x) \, 2^{\rho \, \ell_{\psi}(x)} \\
\label{170608_20}
&= \sum_{x \in \set{X}} \mu(x) \, 2^{\rho \, \ell_{\psi}(x)}
\end{align}
where \eqref{170608_15} follows from \eqref{170608_9}--\eqref{170608_14},
and \eqref{170608_20} is due to \eqref{170608_16}.
The finite sets $\{\psi^{-1}(x)\}_{x \in \set{X}}$ are
disjoint. For $x \in \set{X}$, let $c^{\ast}(x) \in \psi^{-1}(x)$ be a codeword which achieves the minimum
in the right side of \eqref{170608_11}, i.e.,
\begin{align}  \label{170608_21}
\ell_{\psi}(x) = \ell(c^{\ast}(x)).
\end{align}
Since the codewords $\{c^{\ast}(x)\}_{x \in \set{X}}$ are distinct,
it follows from \eqref{170608_21} and the proof of Lemma~\ref{lemma: l inequality}
that
\begin{align}  \label{170608_22}
\sum_{x \in \set{X}} 2^{-\beta \, \ell_{\psi}(x)} \leq t(\beta, |\set{X}|), \quad \beta \geq 0,
\end{align}
and
\begin{align}  \label{170608_23}
\sum_{x \in \set{X}} 2^{-\beta \, \ell_{\psi}(x)} \geq t(\beta, |\set{X}|), \quad \beta \leq 0
\end{align}
since any perturbation of the set of codeword lengths of a $P_X$-optimal code necessarily shifts
shorter to longer codewords.
Let
\begin{align}  \label{170608_24}
\alpha \triangleq \frac{\beta+\rho}{\beta}
\end{align}
with $\rho >0$ and $\beta \neq 0$, and define
the following probability mass functions on $\set{X}$:
\begin{align}
& R(x) = \frac{\mu^{\frac1\alpha}(x)}{\underset{a \in \set{X}}{\sum} \mu^{\frac1\alpha}(a)},
\label{170608_25} \\[0.1cm]
& S(x) = \frac{2^{-\beta \, \ell_\psi(x)}}{\underset{a \in \set{X}}{\sum} 2^{-\beta \, \ell_\psi(a)}}.
\label{170608_26}
\end{align}
Straightforward calculation shows that
\begin{align}
D_{\alpha}(R \| S)
&= \frac1{\alpha-1} \, \log \sum_{x \in \set{X}} R^{\alpha}(x) S^{1-\alpha}(x) \label{170608_27} \\
&= \frac{\beta}{\rho} \, \log \left( \sum_{x \in \set{X}} \mu(x) \, 2^{\rho \, \ell_\psi(x)} \right)
+ \log \left( \sum_{x \in \set{X}} 2^{-\beta \, \ell_\psi(x)} \right)
- \frac{\beta+\rho}{\rho} \, \log \left( \sum_{x \in \set{X}} \mu^{\frac{\beta}{\beta+\rho}}(x) \right).
\label{170608_28}
\end{align}
We now fix $\beta > 0$, and we proceed to upper bound each of the three terms in \eqref{170608_28}:
the first term is upper bounded using \eqref{170608_20}, the second term is upper bounded by
$\log t(\beta, |\set{X}|)$ in view of \eqref{170608_22}, and the third term satisfies
\begin{align} \label{170608_31}
& H_{\frac{\beta}{\beta+\rho}}^{(\varepsilon)}(X) \leq \frac{\beta+\rho}{\rho}
\; \log \sum_{x \in \set{X}} \mu^{\frac{\beta}{\beta+\rho}}(x),
\end{align}
which holds in view of Definition~\ref{def: smooth Renyi entropy} and \eqref{170608_19}.
Plugging those bounds into the right side of \eqref{170608_28}, and recalling that
$D_{\alpha}(R \| S) \geq 0$ (this follows from Lemma~\ref{lemma: RD}, and since
$\alpha>0$ in \eqref{170608_24} for $\beta, \rho>0$), we obtain
\begin{align}
\label{170608_33}
\frac1\rho \, \log \expectation\left[2^{\rho \, l(f(X))} \right]
& \geq \frac1\beta \left[H_{\frac{\beta}{\beta+\rho}}^{(\varepsilon)}(X) - \log t(\beta, |\set{X}|) \right].
\end{align}
Finally, \eqref{170608_1} follows by supremizing \eqref{170608_33} over the free parameter $\beta>0$.

\item \label{part 2 - 170806}
If the maximal decoding error probability does not exceed $\varepsilon$, then so is the average decoding error
probability, and we can rely on the results in Item~\ref{part 1 - 170806}) of this proof.
Fix $\beta \in (-\rho, 0)$. In view of \eqref{170608_24}, $\alpha < 0$ and
$D_{\alpha}(R\|S) \leq 0$ (Lemma~\ref{lemma: RD}). From \eqref{170608_28}, we get
\begin{align}
\label{170608_34}
\frac{\beta}{\rho} \, \log \left( \sum_{x \in \set{X}} \mu(x) \, 2^{\rho \, \ell_\psi(x)} \right)
+ \log \left( \sum_{x \in \set{X}} 2^{-\beta \, \ell_\psi(x)} \right)
- \frac{\beta+\rho}{\rho} \, \log \left( \sum_{x \in \set{X}} \mu^{\frac{\beta}{\beta+\rho}}(x) \right) \leq 0.
\end{align}
Due to the above assumption on the maximal decoding error probability, \eqref{170608_7} implies that for every $x \in \set{X}$
\begin{align}
\label{170608_35}
\gamma(x) \in [1-\varepsilon, 1],
\end{align}
and, consequently, \eqref{170608_16} implies that
\begin{align}
\label{170608_36}
(1-\varepsilon) P_X(x) \leq \mu(x) \leq P_X(x), \quad x \in \set{X}.
\end{align}
Since $\frac{\beta}{\beta+\rho}<0$ and $\frac{\rho}{\beta+\rho}>0$, \eqref{eq: Renyi entropy}
(with negative orders of the R\'{e}nyi entropy) and \eqref{170608_36} yield
\begin{align}
\label{170608_37}
H_{\frac{\beta}{\beta+\rho}}(X) & \leq \frac{\beta+\rho}{\rho} \,
\log \sum_{x \in \set{X}} \mu^{\frac{\beta}{\beta+\rho}}(x) \\
\label{eq: 20171017-b}
& \leq H_{\frac{\beta}{\beta+\rho}}(X) - \frac{\beta}{\rho} \, \log \frac1{1-\varepsilon}.
\end{align}
Consequently, we have
\begin{align}
\frac1\rho \, \log \expectation\left[2^{\rho \, l(f(X))} \right]
& \geq \frac1\rho \, \log \sum_{x \in \set{X}} \mu(x) \, 2^{\rho \, \ell_\psi(x)} \label{170608_38} \\[0.1cm]
& \geq \frac1\beta \left[ \frac{\beta+\rho}{\rho} \, \log \left( \sum_{x \in \set{X}} \mu^{\frac{\beta}{\beta+\rho}}(x) \right) -
\log \left( \sum_{x \in \set{X}} 2^{-\beta \, \ell_\psi(x)} \right) \right]  \label{170608_39} \\[0.1cm]
& \geq \frac1\beta \left[ \frac{\beta+\rho}{\rho} \, \log \left( \sum_{x \in \set{X}} \mu^{\frac{\beta}{\beta+\rho}}(x) \right) -
\log t(\beta, |\set{X}|) \right] \label{170608_40} \\[0.1cm]
& \geq \frac1\beta \left[ H_{\frac{\beta}{\beta+\rho}}(X) - \frac{\beta}{\rho} \, \log \frac1{1-\varepsilon} -
\log t(\beta, |\set{X}|) \right] \label{170608_41}
\end{align}
where \eqref{170608_38} holds due to \eqref{170608_20}; \eqref{170608_39} follows from \eqref{170608_34}
and since $\beta<0$; \eqref{170608_40} holds due to \eqref{170608_23}; finally, \eqref{170608_41} follows
from \eqref{eq: 20171017-b}. Finally, \eqref{170608_2} follows by supremizing the right side of \eqref{170608_41}
over the free parameter $\beta \in (-\rho, 0)$.
\end{enumerate}
\end{proof}

\begin{remark}
Note that the lower bound in \eqref{170608_1} is expressed in terms of the smooth R\'{e}nyi entropy of orders $\alpha \in (0,1)$,
and the lower bound in \eqref{170608_2} is expressed in terms of the conventional R\'{e}nyi entropy of negative orders. The calculation
of the bound in \eqref{170608_1} relies on Lemma~\ref{lemma: Koga13}\ref{lemma: Koga13-a}).
\end{remark}

\begin{remark} \label{remark: 20171018}
The combination of letting $\varepsilon=0$ in \eqref{170608_1} and \eqref{170608_2} recovers the result in Lemma~\ref{lemma: tightened bound on the cumulant}.
\end{remark}

\begin{remark}
If the maximal decoding error probability cannot be larger than $\varepsilon \in (0,1)$, then neither of the bounds in \eqref{170608_1} and \eqref{170608_2}
is superseded by the other, as can be verified by numerical experimentation.
\end{remark}

\begin{remark} \label{remark: Th. 2 by Kuzuoka}
For prefix codes, Kraft's inequality gives
$\underset{x \in \set{X}}{\sum} 2^{-\ell_\psi(x)} \leq 1$. Replacing $t(1, |\set{X}|)$ in the right side
of \eqref{170608_22} by 1
recovers the result in \cite[Theorem~2]{Kuzuoka16} from \eqref{170608_1}:
\begin{align} \label{eq: prefix, Kuzuoka}
\frac1\rho \, \log \expectation\left[2^{\rho \, \ell(f(X))} \right] \geq H_{\frac{1}{1+\rho}}^{(\varepsilon)}(X), \quad \rho > 0
\end{align}
where the average error probability cannot be larger than $\varepsilon$.
An analogous result of \eqref{eq: prefix, Kuzuoka}, for a lower bound on the normalized cumulant generating function
of the codeword lengths without imposing the prefix condition, is given by
\begin{align} \label{eq: 20171029-e}
\frac1\rho \, \log \expectation\left[2^{\rho \, \ell(f(X))} \right] \geq H_{\frac{1}{1+\rho}}^{(\varepsilon)}(X) - \log t(1, |\set{X}|), \quad \rho > 0
\end{align}
where $t(1, |\set{X}|)$ is given in \eqref{eq: t}. Since \eqref{eq: 20171029-e} is obtained by replacing the
supremization over $\beta>0$ in the right side of \eqref{170608_1} with its value at $\beta=1$,
Theorem~\ref{th1: almost lossless} gives a better bound than \eqref{eq: 20171029-e}.
\end{remark}

\begin{figure}[h] \label{figure: 20171029}
\vspace*{-3.7cm}
\begin{center}
\hspace*{-0.8cm}
\includegraphics[width=14cm, angle=90]{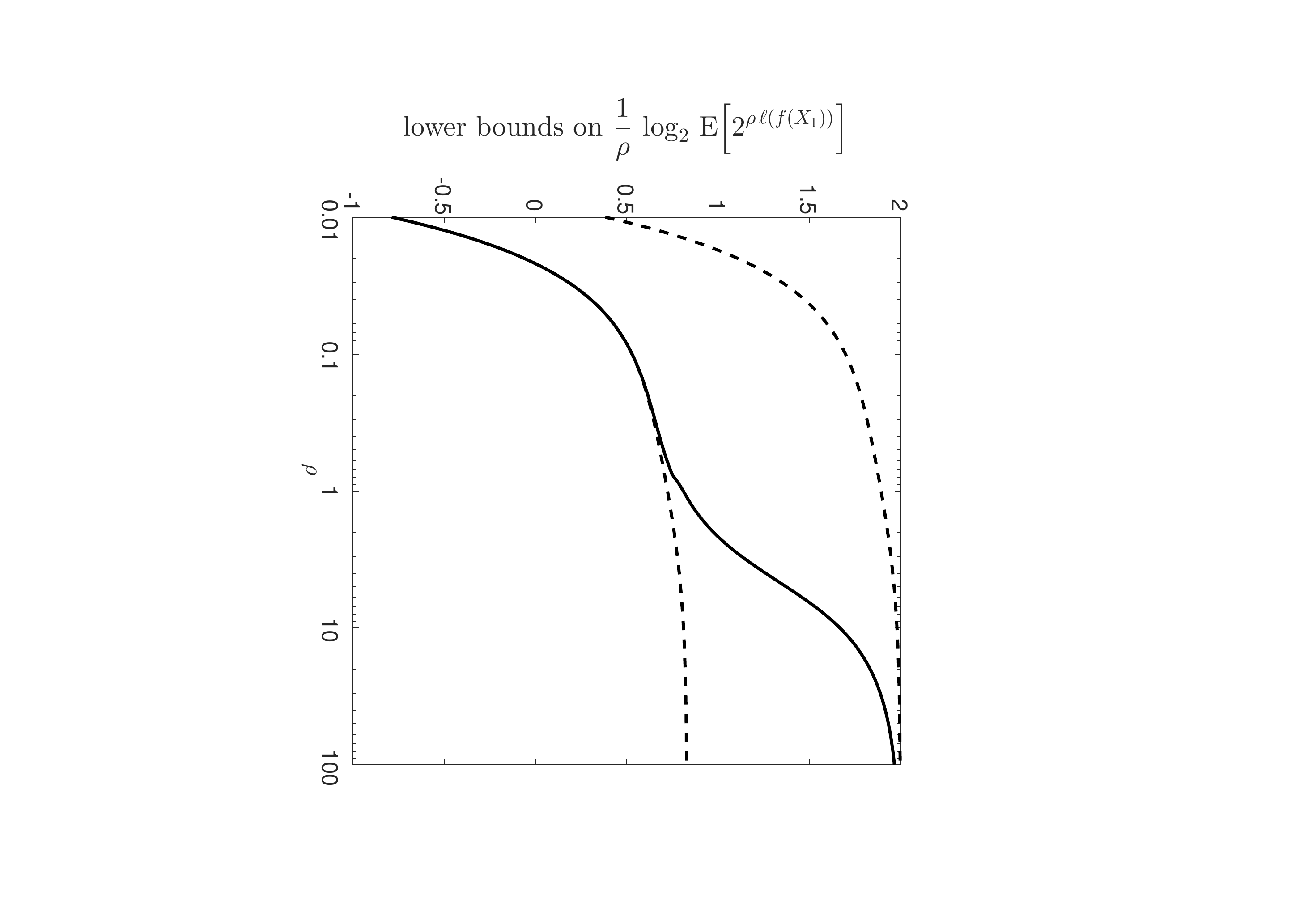}
\hspace*{-1.8cm}
\includegraphics[width=14cm, angle=90]{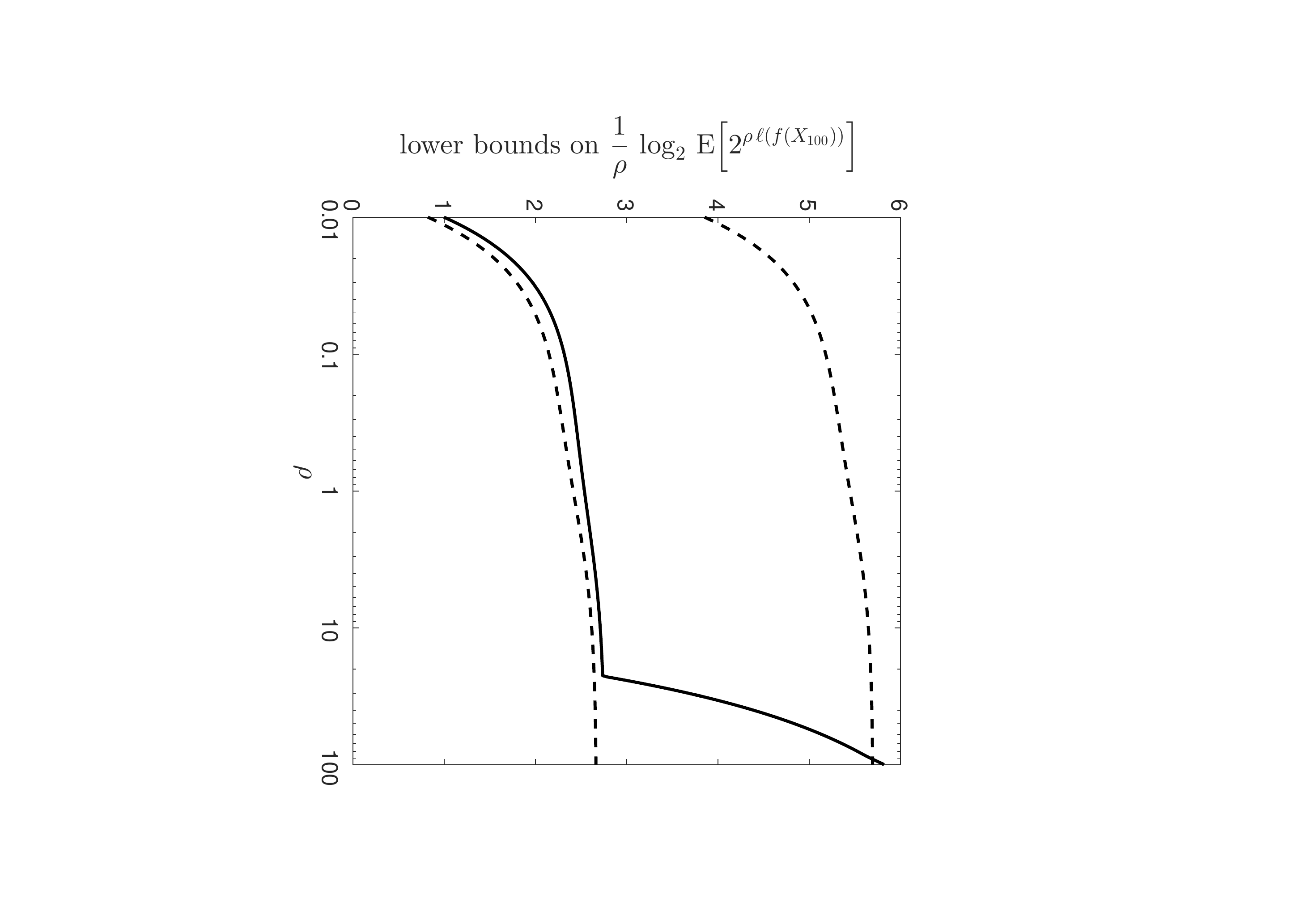} \\[-7cm]
\hspace*{-1cm}
\includegraphics[width=10cm]{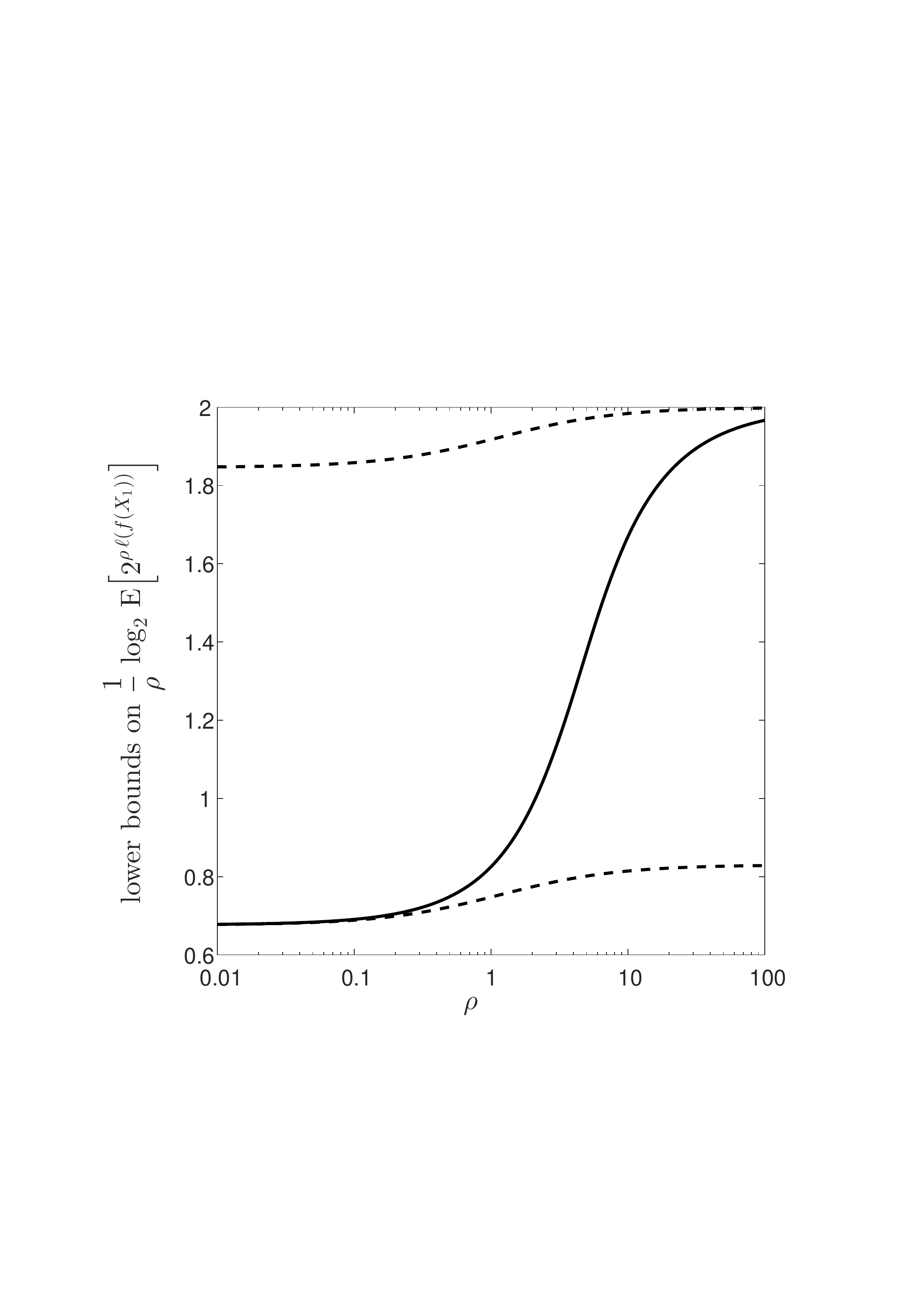}
\hspace*{-1.8cm}
\includegraphics[width=14cm, angle=90]{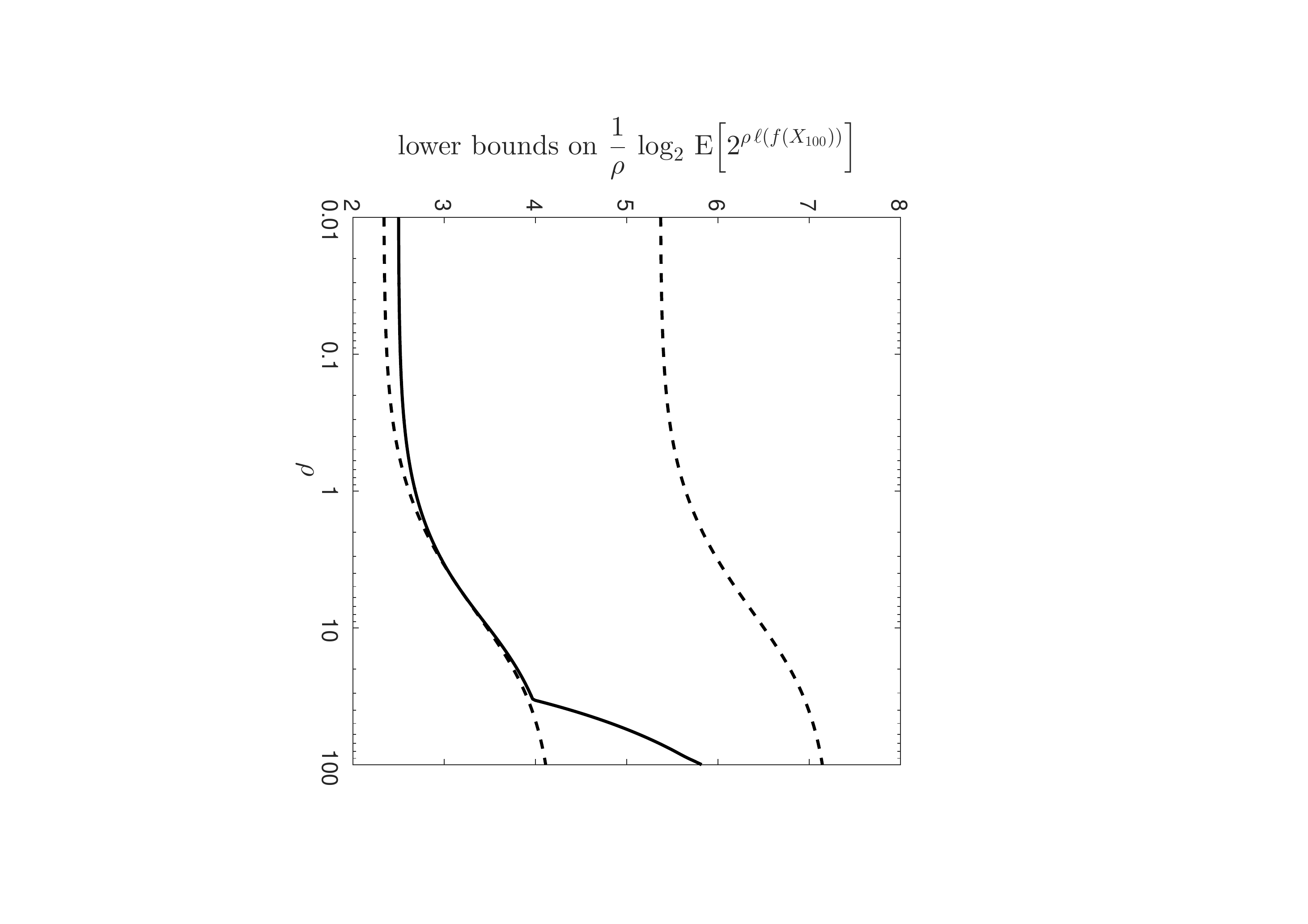}
\end{center}
\vspace*{-3.5cm}
\caption{\label{figure: 20171029} Example~\ref{example: 20171029}: Lower bounds on
$\frac1\rho \log_2 \expectation\Bigl[2^{\rho \, \ell(f(X_n))} \Bigr]$
with a maximal error probability of $\varepsilon = 0.01$ (upper plots),
and the lossless case: $\varepsilon = 0$ (lower plots).
The left and right plots correspond to $n=1$ and $n=100$, respectively.
The upper and lower dashed lines in each plot refer, respectively, to \eqref{eq: prefix, Kuzuoka}
and \eqref{eq: 20171029-e}; the solid lines refer to the combined
lower bounds in \eqref{170608_1} and \eqref{170608_2} (Theorem~\ref{th1: almost lossless}).}
\end{figure}
\begin{example} \label{example: 20171029}
Let $U_1, \ldots, U_n$ be i.i.d. random variables taking values in a set of cardinality~4,
and having the probability mass function
\begin{align}
P_{U_1} = \begin{array}{cccc}
[\tfrac{4}{10} & \tfrac{3}{10} & \tfrac{2}{10} & \tfrac{1}{10}].
\end{array}
\end{align}
Let
\begin{align}
X_n = \overset{n}{\underset{i=1}{\sum}} U_i,   \quad n \in \naturals.
\end{align}
The probability mass function $P_{X_n}$ is equal to $P_{U_1}$ convolved with itself
$n-1$ times, and $X_n$ takes $M_n = 3n+1$ values.
Assume that a maximal error probability of $\varepsilon \in [0,1)$ is
allowed in decoding $X_n$.
Figure~\ref{figure: 20171029} compares the lower bound in \eqref{eq: prefix, Kuzuoka},
for binary prefix codes, with the lower bounds in Theorem~\ref{th1: almost lossless}
and \eqref{eq: 20171029-e} for binary codes without the prefix condition.
The upper and lower plots in Figure~\ref{figure: 20171029} correspond to $\varepsilon = 0.01$
and $\varepsilon=0$ (i.e., the lossless case), respectively; the left and right plots correspond,
respectively, to $n=1$ and $n=100$ (note that, by the central limit theorem, $X_n$ is
close to Gaussian for large $n$).
The gain of the combined lower bounds in \eqref{170608_1} and \eqref{170608_2} (the solid
line in each plot) over the bound in \eqref{eq: 20171029-e} (the dashed lower line in
the corresponding plot) is illustrated in Figure~\ref{figure: 20171029} by comparing
the left and right plots; by increasing the value of $n$, the solid line becomes more
steep at sufficiently large values of $\rho$.
\end{example}

\appendices

\section{Completion of the Proof of Theorem~\ref{theorem: improving Arikan's bound}}
\label{appendix: UB on harmonic sum}

We first prove \eqref{eq1: UB on harmonic sum}.
\begin{itemize}
\item For $\beta=1$, \eqref{eq1: UB on harmonic sum} holds due to the
upper bound on the harmonic sum in \cite[Theorem~1]{GouQi11};
\item For $\beta > 1$
\begin{align}
\sum_{j=1}^M \frac1{j^\beta} &= \zeta(\beta) - \sum_{j=M+1}^{\infty} \frac1{j^\beta} \nonumber \\
&= \zeta(\beta) - \tfrac12 (M+1)^{-\beta} - \sum_{j=M+1}^{\infty} \tfrac12 \left(j^{-\beta} + (j+1)^{-\beta} \right) \label{eq1: sum}
\end{align}
where $\zeta(\beta) = \sum_{n=1}^{\infty} \frac1{n^\beta}$ for $\beta>1$ denotes
the Riemann zeta function; due to the convexity of $f_{\beta}(t) = t^{-\beta}$ in $(0, \infty)$,
for all $j \in \naturals$,
\begin{align}  \label{eq2: integral}
\tfrac12 \left(j^{-\beta} + (j+1)^{-\beta} \right) \geq \int_j^{j+1} t^{-\beta} \, \mathrm{d}t.
\end{align}
Hence, from \eqref{eq1: sum} and \eqref{eq2: integral}, for all $\beta > 1$,
\begin{align}
\sum_{j=1}^M \frac1{j^\beta} &\leq
\zeta(\beta) - \tfrac12 (M+1)^{-\beta} - \int_{M+1}^{\infty} t^{-\beta} \, \mathrm{d}t \label{shavuot0} \\
&= \zeta(\beta) - \tfrac12 (M+1)^{-\beta} - \frac{(M+1)^{1-\beta}}{\beta-1} \label{shavuot1}
\end{align}
and
\begin{align}
\sum_{j=1}^M \frac1{j^\beta} \leq \sum_{j=1}^M \frac1{j} \leq u_M(1). \label{20171011a}
\end{align}
Combining \eqref{shavuot0}--\eqref{20171011a} proves \eqref{eq1: UB on harmonic sum} for $\beta > 1$.
\item For $\beta > 0$,
from the convexity of $f_{\beta}(t) = t^{-\beta}$ in $(0, \infty)$, Jensen's inequality yields
\begin{align}
\int_{j-\frac12}^{j+\frac12} t^{-\beta} \, \mathrm{d}t \geq \frac1{j^\beta} \label{20171011b}
\end{align}
for all $j \in \naturals$, which implies that
\begin{align} \label{estimate integral}
\sum_{j=1}^M \frac1{j^\beta} & \leq 1
+ \int_{\frac32}^{M+\frac12} t^{-\beta} \, \mathrm{d}t \\
\label{shavuot2}
&= 1 + \tfrac1{1-\beta} \left[\bigl(M + \tfrac12 \bigr)^{1-\beta}
- \bigl(\tfrac32\bigr)^{1-\beta}\right].
\end{align}
This proves \eqref{eq1: UB on harmonic sum} for $\beta \in (0, 1)$. It can be
verified numerically that the upper bound in \eqref{shavuot1} supersedes the bound
\eqref{shavuot2} for $\beta > 1$; for this reason, we ignore the
bound in \eqref{shavuot2} for the derivation of $u_M(\beta)$ for $\beta > 1$.
\end{itemize}

We next prove \eqref{eq2: UB on harmonic sum}.
\begin{itemize}
\item For $\beta \in [-1,0)$, \eqref{eq2: UB on harmonic sum} follows
from the concavity of $f(t) = t^{-\beta}$ for $t > 0$; by Jensen's
inequality, we obtain the opposite inequalities in \eqref{20171011b}
and \eqref{estimate integral} for $\beta \in [-1,0)$.
\item For $\beta=-1$, $\sum_{j=1}^M \frac1{j^\beta} = \tfrac12 \, M(M+1).$
\item For $\beta \in (-\infty, -1)$, due to the convexity of
$f(t) = t^{-\beta}$ for $t > 0$,
\begin{align}
\sum_{j=1}^M \frac1{j^\beta} &= \tfrac12 +
\sum_{j=1}^{M-1} \tfrac12 \bigl(j^{-\beta} + (j+1)^{-\beta}\bigr)
+ \tfrac12 \, M^{-\beta} \\
& \geq \tfrac12 + \sum_{j=1}^{M-1} \int_j^{j+1} t^{-\beta} \, \mathrm{d}t
+ \tfrac12 \, M^{-\beta} \\
& = \int_1^M t^{-\beta} \, \mathrm{d}t + \tfrac12 \bigl(1+M^{-\beta}\bigr) \\[0.1cm]
& = \frac{M^{1-\beta}-1}{1-\beta} + \tfrac12 \bigl(1+M^{-\beta}\bigr).
\end{align}
\end{itemize}
Finally, Theorem~\ref{theorem: improving Arikan's bound} follows from
Theorem~\ref{thm: key result}, \eqref{eq1: UB on harmonic sum}
and \eqref{eq2: UB on harmonic sum}.

\section{Proof of Theorem~\ref{thm: 0-2}}
\label{appendix: UB-NSI}

We first prove \eqref{eq1.5: UB-NSI} by relying on the following result:
\begin{lemma} \label{lemma: 0<rho<1}
For $\rho \in (0,1)$ and $u \geq 1$
\begin{align}  \label{eq: 0<rho<1}
u^{\rho} \leq \frac{u^{1+\rho} - (u-1)^{1+\rho}}{1+\rho} +
\frac{\rho}{1+\rho} \; 1\{1 \leq u < 2\} +
\left( 2^{\rho} - \frac{2^{1+\rho}-1}{1+\rho} \right) \, 1\{u \geq 2\}.
\end{align}
\end{lemma}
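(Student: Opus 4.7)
The plan is to reduce the claimed inequality to a monotonicity statement for a single auxiliary function. Define
\begin{equation}
h(u) \triangleq u^{\rho} - \frac{u^{1+\rho} - (u-1)^{1+\rho}}{1+\rho}, \quad u \geq 1,
\end{equation}
and compute the two boundary values $h(1) = 1 - \tfrac{1}{1+\rho} = \tfrac{\rho}{1+\rho}$ and $h(2) = 2^{\rho} - \tfrac{2^{1+\rho}-1}{1+\rho}$. Rearranging \eqref{eq: 0<rho<1}, the lemma is equivalent to showing that $h(u) \leq h(1)$ for $u \in [1,2)$ and $h(u) \leq h(2)$ for $u \geq 2$. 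Both of these follow at once if I can prove that $h$ is non-increasing on $[1, \infty)$.

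Next I would differentiate: for $u > 1$,
\begin{equation}
h'(u) = \rho u^{\rho-1} - \bigl[ u^{\rho} - (u-1)^{\rho} \bigr].
\end{equation}
The main step is a mean-value-theorem comparison applied to $t \mapsto t^{\rho}$ on $[u-1, u]$: there exists $\xi \in (u-1,u)$ such that $u^{\rho} - (u-1)^{\rho} = \rho \, \xi^{\rho-1}$. Since $\rho - 1 < 0$, the map $t \mapsto t^{\rho-1}$ is strictly decreasing on $(0, \infty)$, so $\xi < u$ yields $\xi^{\rho-1} > u^{\rho-1}$. Hence $u^{\rho} - (u-1)^{\rho} > \rho u^{\rho-1}$, which shows that $h'(u) < 0$ for every $u > 1$ and so $h$ is strictly decreasing on $[1, \infty)$.

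Combining these observations, $h(u) \leq h(1) = \tfrac{\rho}{1+\rho}$ for $u \in [1,2)$ and $h(u) \leq h(2) = 2^{\rho} - \tfrac{2^{1+\rho}-1}{1+\rho}$ for $u \geq 2$; moving $\tfrac{u^{1+\rho}-(u-1)^{1+\rho}}{1+\rho}$ to the right-hand side recovers \eqref{eq: 0<rho<1}. There is no serious obstacle beyond identifying the correct auxiliary function; the one substantive input is the strict inequality $u^{\rho} - (u-1)^{\rho} > \rho u^{\rho-1}$, which is precisely the concavity of $t \mapsto t^{\rho}$ for $\rho \in (0,1)$ expressed through the mean value theorem.
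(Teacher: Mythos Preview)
Your argument is correct and essentially identical to the paper's: the paper defines $f_{1,\rho}(u)=\tfrac{\rho}{1+\rho}-h(u)$ and $f_{2,\rho}(u)=h(2)-h(u)$, computes the common derivative $u^{\rho}-(u-1)^{\rho}-\rho u^{\rho-1}$, and uses the mean value theorem exactly as you do to show both functions are increasing with $f_{1,\rho}(1)=f_{2,\rho}(2)=0$. Your packaging with a single decreasing $h$ is slightly cleaner, but the substance is the same.
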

\begin{proof}
For $\rho \in (0,1)$, let $f_{1, \rho} \colon [1,\infty) \to \Reals$ and
$f_{2, \rho} \colon [1,\infty) \to \Reals$ be given by
\begin{align}
f_{1,\rho}(u) &= \frac{u^{1+\rho} - (u-1)^{1+\rho}}{1+\rho} + \frac{\rho}{1+\rho} - u^\rho, \label{eq: f1} \\
f_{2,\rho}(u) &= \frac{u^{1+\rho} - (u-1)^{1+\rho}}{1+\rho} + 2^{\rho} - \frac{2^{1+\rho}-1}{1+\rho} - u^\rho. \label{eq: f2}
\end{align}
For $\rho \in (0,1)$ and $u \in [1, \infty)$
\begin{align}
f_{1,\rho}'(u) = f_{2,\rho}'(u) &= u^\rho - (u-1)^\rho - \rho u^{\rho-1} \\[-0.2cm]
&= \rho c^{\rho-1} - \rho u^{\rho-1}, \quad c \in (u-1, u) \label{mvt}   \\[-0.2cm]
&> 0,
\end{align}
where \eqref{mvt} holds by the mean value theorem of calculus;
moreover, since $f_{1,\rho}(1)=f_{2,\rho}(2)=0$,
\begin{align}
\label{eq1: f}
f_{1,\rho}(u) \geq 0, \quad u \in [1, \infty), \\[-0.2cm]
\label{eq2: f}
f_{2,\rho}(u) \geq 0, \quad u \in [2, \infty).
\end{align}
This gives
\begin{align}
\label{eq3: f}
0 &\leq f_{1,\rho}(u) \, 1\{1 \leq u < 2\} +
\min\{f_{1,\rho}(u), f_{2,\rho}(u)\} \, 1\{u \geq 2\} \\[-0.1cm]
\label{eq4: f}
&= f_{1,\rho}(u) \, 1\{1 \leq u < 2\} +
f_{2,\rho}(u) \, 1\{u \geq 2\}
\end{align}
where \eqref{eq3: f} follows from \eqref{eq1: f} and \eqref{eq2: f},
and \eqref{eq4: f} follows from \eqref{eq: f1}, \eqref{eq: f2}, and
since
\begin{align}
2^\rho - \frac{2^{1+\rho}-1}{1+\rho} - \frac{\rho}{1+\rho}
= \frac{(\rho-1) (2^{\rho}-1)}{1+\rho} < 0
\end{align}
for $\rho \in (0,1)$. Finally, \eqref{eq: 0<rho<1} is equivalent
to the non-negativity of the right side of \eqref{eq4: f}.
\end{proof}

From Lemma~\ref{lemma: 0<rho<1}, for $\rho \in (0,1)$,
\begin{align}
\expectation\left[g_X^{\rho}(X)\right]
& \leq \frac1{1+\rho} \; \expectation\left[g_X^{1+\rho}(X)
- \bigl(g_X(X)-1\bigr)^{1+\rho}\right] \nonumber \\[0.1cm]
& \hspace*{0.4cm} + \frac{\rho \, \prob[g_X(X)=1]}{1+\rho}
+ \left( 2^{\rho} - \frac{2^{1+\rho}-1}{1+\rho}
\right) \; \prob[g_X(X) \geq 2] \label{eq3: for 0<rho<1} \\[0.1cm]
& = \frac1{1+\rho} \; \expectation\left[g_X^{1+\rho}(X)
- \bigl(g_X(X)-1\bigr)^{1+\rho}\right]
+ \frac{\rho \, p_{\max}}{1+\rho} + \left( 2^{\rho} -
\frac{2^{1+\rho}-1}{1+\rho} \right) (1-p_{\max})
\label{eq4: for 0<rho<1}
\\[0.1cm]
\label{eq5: for 0<rho<1}
& \leq \frac1{1+\rho} \, \exp\left(\rho H_{\frac1{1+\rho}}(X) \right)
+ \frac{\rho \, p_{\max}}{1+\rho} + \left( 2^{\rho} -
\frac{2^{1+\rho}-1}{1+\rho} \right) (1-p_{\max})
\end{align}
where \eqref{eq3: for 0<rho<1} follows from
\eqref{eq: 0<rho<1} by substituting $u = g_X(X)$, and taking
expectations on both sides of the inequality;
\eqref{eq4: for 0<rho<1} holds since $\prob[g_X(X)=1] = p_{\max}$
(the first guess of $X$ is a mode of $P_X$); \eqref{eq5: for 0<rho<1}
follows from \eqref{eq: from Boztas' paper}, which is then simplified to
\eqref{eq1.5: UB-NSI}.

We next prove \eqref{eq2: UB-NSI}.
\begin{lemma} \label{lemma: 1<rho<2}
If $\rho \in [1,2]$ and $u \geq 1$, then
\begin{align} \label{for 1<rho<2}
u^{\rho} \leq \frac{u^{1+\rho}-(u-1)^{1+\rho}}{1+\rho} +
\frac{u^\rho-(u-1)^\rho}\rho + \frac{\rho^2-\rho-1}{\rho(1+\rho)}.
\end{align}
\end{lemma}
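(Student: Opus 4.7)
The plan is to reduce the inequality to checking a base case plus monotonicity. Define
$F(u) \triangleq \frac{u^{1+\rho}-(u-1)^{1+\rho}}{1+\rho} + \frac{u^\rho-(u-1)^\rho}{\rho} + \frac{\rho^2-\rho-1}{\rho(1+\rho)} - u^\rho$
on $u \in [1, \infty)$ for fixed $\rho \in [1,2]$; the claim is $F(u) \geq 0$. First I would verify that $F(1)=0$: the three rational terms telescope since $\frac{1}{1+\rho} + \frac{1}{\rho} + \frac{\rho^2-\rho-1}{\rho(1+\rho)} = \frac{\rho + (1+\rho) + \rho^2-\rho-1}{\rho(1+\rho)} = 1$. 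Consequently it suffices to establish that $F$ is non-decreasing on $[1,\infty)$.

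Differentiating, I would observe that the constant vanishes and the denominators $1+\rho$ and $\rho$ cancel, leaving $F'(u) = u^\rho-(u-1)^\rho - (u-1)^{\rho-1} - (\rho-1) u^{\rho-1}$. Next I would rewrite the first difference as $u^\rho - (u-1)^\rho = \rho \int_{u-1}^u t^{\rho-1}\, \mathrm{d}t$. For $\rho \in [1,2]$ the exponent $\rho-1$ lies in $[0,1]$, so $t \mapsto t^{\rho-1}$ is concave on $(0,\infty)$; the trapezoidal rule therefore \emph{underestimates} the integral, giving $\int_{u-1}^u t^{\rho-1}\, \mathrm{d}t \geq \tfrac12\bigl[(u-1)^{\rho-1} + u^{\rho-1}\bigr]$. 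Substituting into the expression for $F'(u)$ and collecting terms yields $F'(u) \geq \bigl(\tfrac\rho2-1\bigr)(u-1)^{\rho-1} + \bigl(1-\tfrac\rho2\bigr)u^{\rho-1} = \bigl(1-\tfrac\rho2\bigr)\bigl[u^{\rho-1}-(u-1)^{\rho-1}\bigr] \geq 0$, since $1-\rho/2 \geq 0$ for $\rho\leq 2$ and $t \mapsto t^{\rho-1}$ is non-decreasing on $[0,\infty)$ when $\rho\geq 1$. Combined with $F(1)=0$, this finishes the proof.

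The hard part will be identifying a lower bound on $u^\rho-(u-1)^\rho$ that is strong enough to dominate both residual terms $(u-1)^{\rho-1}$ and $(\rho-1)u^{\rho-1}$ in $F'$ yet simple enough to match their structure. The concavity-based trapezoidal estimate works in precisely the direction needed only because $\rho-1 \in [0,1]$, i.e., only under the lemma's hypothesis $\rho \in [1,2]$; this also explains why the companion Lemma~\ref{lemma: 0<rho<1} for $\rho\in(0,1)$ requires a different, piecewise argument, since outside of $[1,2]$ neither the concavity of $t^{\rho-1}$ nor the sign of $1-\rho/2$ cooperates, and the same template breaks down.
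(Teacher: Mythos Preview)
Your proof is correct and follows the same overall scheme as the paper: define the deficit $F$, verify $F(1)=0$, and show $F'\geq 0$ on $[1,\infty)$. The difference lies only in how $F'$ is bounded below. The paper replaces $u^\rho-(u-1)^\rho$ by $1+\rho(u-1)^{\rho-1}$, citing convexity of $t\mapsto t^\rho$, and concludes $F'(u)\geq 2-\rho$. You instead write $u^\rho-(u-1)^\rho=\rho\int_{u-1}^{u}t^{\rho-1}\,\mathrm{d}t$ and use the concavity of $t\mapsto t^{\rho-1}$ on $(0,\infty)$ for $\rho\in[1,2]$ to apply the trapezoidal lower bound, obtaining $F'(u)\geq\bigl(1-\tfrac{\rho}{2}\bigr)\bigl[u^{\rho-1}-(u-1)^{\rho-1}\bigr]\geq 0$. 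Your argument is cleaner and makes transparent exactly where the hypothesis $\rho\in[1,2]$ enters (both the concavity of $t^{\rho-1}$ and the sign of $1-\rho/2$). It also sidesteps the paper's intermediate step $u^\rho-(u-1)^\rho\geq 1+\rho(u-1)^{\rho-1}$, which in fact fails as stated (e.g., $\rho=1.5$, $u=2$: left side $2\sqrt{2}-1\approx 1.83$, right side $2.5$), even though the final conclusion $F'\geq 0$ is correct.
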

\begin{proof}
For $\rho \in [1,2]$, let $f \colon [1,\infty) \to \Reals$ be given by
\begin{align}
f_{\rho}(u) = \frac{u^{1+\rho}-(u-1)^{1+\rho}}{1+\rho} +
\frac{u^\rho-(u-1)^\rho}\rho - u^{\rho} + \frac{\rho^2-\rho-1}{\rho(1+\rho)}, \quad u \in [1, \infty).
\end{align}
For all $u \in [1, \infty)$,
\begin{align}
f_{\rho}'(u) &= u^\rho - (u-1)^\rho + u^{\rho-1}-(u-1)^{\rho-1} - \rho u^{\rho-1} \\
&\geq 1 + \rho (u-1)^{\rho-1} + u^{\rho-1}-(u-1)^{\rho-1} - \rho u^{\rho-1} \label{eq1: diff f} \\
&= 1 + (\rho-1) \left( (u-1)^{\rho-1} - u^{\rho-1} \right) \\
&\geq 2-\rho \geq 0 \label{eq2: diff f}
\end{align}
where \eqref{eq1: diff f} follows from the convexity of $u \mapsto u^{\rho}$
in $(0, \infty)$ for $\rho \geq 1$, and \eqref{eq2: diff f} holds since
\begin{align}
0 \leq u^{\rho-1} - (u-1)^{\rho-1} \leq 1
\end{align}
for $\rho \in [1,2]$ and $u \in [1, \infty)$. It can be verified that $f_{\rho}(1)=0$,
which implies that $f_{\rho}(u) \geq 0$ for $\rho \in [1,2]$ and $u \geq 1$.
\end{proof}

Replacing $x$ in \eqref{for 1<rho<2} with $g_X(X) \geq 1$, and taking expectations
on both sides of \eqref{for 1<rho<2} yield, via \eqref{eq: from Boztas' paper} and
\eqref{tel-aviv}, the result in \eqref{eq2: UB-NSI}.

If $X$ is deterministic, then $H_{\alpha}(X)=0$ for all $\alpha > 0$,
$g_X(X)=1$, and $p_{\max}=1$, which imply that \eqref{eq1.5: UB-NSI}
and \eqref{eq2: UB-NSI} hold with equality (note that
$\frac1{1+\rho} + \frac1{\rho} + \frac{\rho^2-\rho-1}{\rho(\rho+1)} = 1$
holds for $\rho \neq -1, 0$).

\section{Proof of Theorem~\ref{thm: rho>=2}}
\label{appendix: thm for rho>=2}

\begin{lemma} \label{lemma: rho>=2}
Under the assumptions in Theorem~\ref{thm: Arikan's UB}, if $\rho \geq 2$, then
\begin{align}  \label{eq3: UB-NSI}
\expectation[g_{X}^{\rho}(X)] \leq \frac1{1+\rho} \,
\exp\Bigl(\rho H_{\frac1{1+\rho}}(X) \Bigr) +
\frac{\rho}{2} \; \expectation[g_{X}^{\rho-1}(X)]
- \frac{\rho(\rho-1)}{2(1+\rho)}
\end{align}
with equality if $X$ is deterministic.
\end{lemma}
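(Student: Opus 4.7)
The plan is to follow the template of Lemmas~\ref{lemma: 0<rho<1} and~\ref{lemma: 1<rho<2} in Appendix~\ref{appendix: UB-NSI}: establish a pointwise real-variable inequality on $[1,\infty)$, apply it to $u = g_X(X)$, take expectations, and then invoke Boztas's bound \eqref{eq: from Boztas' paper}. Specifically, I will show that for every $\rho \geq 2$ and every $u \geq 1$,
\begin{align} \label{eq: proposal-key}
u^{\rho} \leq \frac{u^{1+\rho}-(u-1)^{1+\rho}}{1+\rho} + \frac{\rho}{2}\, u^{\rho-1} - \frac{\rho(\rho-1)}{2(1+\rho)}.
\end{align}
Once \eqref{eq: proposal-key} is established, substituting $u = g_X(X) \in \{1,2,\ldots\}$ and taking expectations yields
\begin{align}
\expectation[g_X^{\rho}(X)] \leq \frac{1}{1+\rho}\,\Bigl(\expectation[g_X^{1+\rho}(X)] - \expectation[(g_X(X)-1)^{1+\rho}]\Bigr) + \frac{\rho}{2}\, \expectation[g_X^{\rho-1}(X)] - \frac{\rho(\rho-1)}{2(1+\rho)},
\end{align}
and \eqref{eq3: UB-NSI} follows by bounding the first term via \eqref{eq: from Boztas' paper}.

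To prove \eqref{eq: proposal-key}, let $f_{\rho}(u)$ denote the difference of its two sides (RHS minus LHS), so that the claim is $f_{\rho}(u) \geq 0$. A direct computation gives $f_{\rho}(1) = \tfrac{1}{1+\rho} + \tfrac{\rho}{2} - \tfrac{\rho(\rho-1)}{2(1+\rho)} - 1 = 0$ (which also verifies the equality case for deterministic $X$, since then $g_X(X)\equiv 1$ and $H_{\alpha}(X)=0$). Differentiating,
\begin{align}
f_{\rho}'(u) = u^{\rho} - (u-1)^{\rho} + \frac{\rho(\rho-1)}{2}\, u^{\rho-2} - \rho u^{\rho-1},
\end{align}
so it suffices to prove the auxiliary inequality
\begin{align} \label{eq: proposal-aux}
u^{\rho} - (u-1)^{\rho} \geq \rho u^{\rho-1} - \frac{\rho(\rho-1)}{2}\, u^{\rho-2}, \qquad u \geq 1,\ \rho \geq 2.
\end{align}

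The main (though mild) obstacle is \eqref{eq: proposal-aux}. I will establish it via Taylor's theorem applied to $h(t) = t^{\rho}$ expanded about $t=u$ and evaluated at $t = u-1$: for some $c \in (u-1, u)$,
\begin{align}
(u-1)^{\rho} = u^{\rho} - \rho u^{\rho-1} + \frac{\rho(\rho-1)}{2}\, u^{\rho-2} - \frac{\rho(\rho-1)(\rho-2)}{6}\, c^{\rho-3}.
\end{align}
Rearranging,
\begin{align}
u^{\rho} - (u-1)^{\rho} - \rho u^{\rho-1} + \frac{\rho(\rho-1)}{2}\, u^{\rho-2} = \frac{\rho(\rho-1)(\rho-2)}{6}\, c^{\rho-3},
\end{align}
and since $\rho \geq 2$ makes the coefficient non-negative while $c > 0$ makes $c^{\rho-3}>0$, the right side is non-negative, proving \eqref{eq: proposal-aux}. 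Hence $f_{\rho}'(u) \geq 0$ on $[1,\infty)$, and combined with $f_{\rho}(1)=0$ this gives $f_{\rho}(u)\geq 0$, completing the proof.
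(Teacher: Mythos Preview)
Your proof is correct and follows essentially the same route as the paper: establish the pointwise inequality \eqref{eq: proposal-key} by showing the difference vanishes at $u=1$ and has non-negative derivative via a third-order Taylor expansion of $t\mapsto t^{\rho}$ about $u$ (using $\rho\geq 2$ to ensure the Lagrange remainder term is non-negative), then substitute $u=g_X(X)$, take expectations, and apply \eqref{eq: from Boztas' paper}. The paper's argument is identical up to notation ($\xi$ and $v$ in place of your $f_\rho$ and $h$).
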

\begin{proof}
If $\rho \geq 2$ and $u \geq 1$, then
\begin{align} \label{eq1: rho>=2}
u^\rho \leq \frac{u^{1+\rho}-(u-1)^{1+\rho}}{1+\rho}
+ \frac{\rho}{2} \; u^{\rho-1} - \frac{\rho(\rho-1)}{2(1+\rho)}.
\end{align}
To prove \eqref{eq1: rho>=2}, let $\xi \colon [1, \infty) \to \Reals$
be given by
\begin{align} \label{eq: xi fun}
\xi(u) = \frac{u^{1+\rho}-(u-1)^{1+\rho}}{1+\rho} - u^\rho
+ \frac{\rho u^{\rho-1}}{2} - \frac{\rho(\rho-1)}{2(1+\rho)}
\end{align}
for $u \geq 1$, and similarly to \eqref{eq: v fun}, we denote $v(u) = u^\rho$ for $u \geq 0$. Then,
for $u \geq 1$,
\begin{align}
\xi'(u) &= u^\rho - (u-1)^\rho - \rho u^{\rho-1} + \tfrac12 \, \rho (\rho-1) u^{\rho-2}  \\
&= v(u) - v(u-1) - v'(u) + \tfrac12 v''(u). \label{t'-v}
\end{align}
By a Taylor series expansion of $v(\cdot)$ around $u$,
\begin{align}
v(u-1) = v(u) - v'(u) + \tfrac12 v''(u) - \tfrac16 v^{(3)}(c)
\end{align}
for some $c \in (u-1, u) \subseteq (0, \infty)$. For $\rho \geq 2$, we have
$v^{(3)}(c) = \rho (\rho-1) (\rho-2) c^{\rho-3} \geq 0$, which implies from \eqref{t'-v} that
$\xi'(u) \geq 0$ for all $u \geq 1$.
It can be verified from \eqref{eq: xi fun} that $\xi(1)=0$, which implies that $\xi(\cdot) \geq 0$ in $[1, \infty)$;
this gives \eqref{eq1: rho>=2} from \eqref{eq: xi fun}.
By substituting $u = g_X(X)$ in \eqref{eq1: rho>=2} and taking expectations on both
sides of the inequality, it follows that for $\rho \geq 2$
\begin{align}
\label{eq2: rho>=2}
\expectation[g_{X}^{\rho}(X)] \leq & \, \frac{\expectation \bigl[g_{X}^{1+\rho}(X)\bigr] -
\expectation\bigl[ \bigl(g_{X}(X)-1 \bigr)^{1+\rho}\bigr]}{1+\rho}
+ \frac{\rho}{2} \; \expectation[g_{X}^{\rho-1}(X)] - \frac{\rho(\rho-1)}{2(1+\rho)}
\end{align}
which, from \eqref{eq: from Boztas' paper}, yields \eqref{eq3: UB-NSI}. If $X$ is deterministic,
then $g_X(X)=1$ and $H_{\alpha}(X)=0$ for $\alpha>0$, which implies equality in
\eqref{eq3: UB-NSI} since $\frac1{1+\rho} + \frac{\rho}{2} - \frac{\rho(\rho-1)}{2(1+\rho)} = 1$ holds for $\rho \neq -1$.
\end{proof}

We proceed to prove Theorem~\ref{thm: rho>=2}.
Combining the recursive upper bound in Lemma~\ref{lemma: rho>=2} with
Theorem~\ref{thm: 0-2}-\ref{label2: UB-NSI}) gives, after some straightforward
algebra, an upper bound on $\expectation[g_{X}^{\rho}(X)]$
for $\rho \geq 2$ which is of the form
\begin{align} \label{what is d?}
\expectation[g_{X}^{\rho}(X)] \leq \sum_{j=0}^{\lfloor \rho \rfloor}
c_j(\rho) \exp \left( (\rho-j) \, H_{\frac1{1+\rho-j}}(X)\right) + d(\rho),
\end{align}
where the sequence $\{c_j(\rho)\}$ is given in \eqref{eq: c_j}, and
$d(\rho)$ is an additive term which only depends on $\rho$ (but it does not on the distribution of $X$).
Since the results in Theorem~\ref{thm: 0-2}-\ref{label2: UB-NSI}) and Lemma~\ref{lemma: rho>=2}
are satisfied with equalities if $X$ is deterministic, then it follows that also
\eqref{what is d?} holds with equality in this special case. For such $X$, we have $g_X(X)=1$ and
$H_{\alpha}(X)=0$ for all $\alpha > 0$, which therefore implies that
\begin{align} \label{20171011c}
d(\rho) = 1 - \sum_{j=0}^{\lfloor \rho \rfloor} c_j(\rho).
\end{align}
The bound in \eqref{eq: rho>=2} is obtained by combining \eqref{what is d?} and \eqref{20171011c}.

\section{Auxiliary Results for Section~\ref{subsection: guessing moments versus epsilon} }
\label{app: d}

Lemmas~\ref{lemma: f}--\ref{lemma: f is convex} are used to derive the bounds on the optimal generalized
guessing moment in \eqref{NY} and \eqref{NY2}, and Lemma~\ref{lemma: 20171024} refers to the proof of
Theorem~\ref{thm: 20171022}.
\begin{lemma} \label{lemma: f}
Let $\ell \in \naturals$, $p_1 \geq p_2 \geq \ldots \geq p_\ell \geq 0$ with
$\overset{\ell}{\underset{i=1}{\sum}} p_i = 1$,
and let $f \colon \{1, \ldots, \ell\} \to \Reals$ satisfy
\begin{align}  \label{eq: 20171012b}
\frac1j \sum_{k=l-j+1}^{\ell} f(k) \geq f(\ell-j),
\end{align}
for all $j \in \{1, \ldots, \ell\}$. Then,
\begin{align}  \label{eq: 20171012c}
\sum_{i=1}^{\ell} p_i f(i) \leq \frac1{\ell} \sum_{i=1}^{\ell} f(i).
\end{align}
Furthermore, if the inequality in \eqref{eq: 20171012b} is strict for all $j$, then \eqref{eq: 20171012c}
holds with equality if and only if $p_i = \frac1{\ell}$ for all $i \in \{1, \ldots, \ell\}$.
\end{lemma}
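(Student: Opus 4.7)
The plan is to let $T_k := \sum_{i=1}^k f(i)$ (with $T_0 := 0$) and carry out three steps: (i) rewrite the hypothesis as a clean one-step recurrence on the $T_k$, (ii) use this recurrence to prove by induction the key bound $T_k/k \leq T_\ell/\ell$ for every $k \in \{1,\ldots,\ell\}$, and (iii) decompose the non-increasing probability vector $p$ into a convex combination of uniform distributions on prefix sets, so that $\sum p_i f(i)$ becomes a convex combination of the quantities $T_k/k$.

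First I would reindex the hypothesis. Writing $i := \ell-j$, the assumption $\frac{1}{j}\sum_{k=\ell-j+1}^\ell f(k) \geq f(\ell-j)$ is equivalent to $(\ell-i)\,f(i) \leq T_\ell - T_i$ for $i \in \{1,\ldots,\ell-1\}$, and substituting $f(i) = T_i - T_{i-1}$ rearranges it to
\begin{align*}
(\ell-i+1)\, T_i \;\leq\; (\ell-i)\, T_{i-1} + T_\ell, \qquad i \in \{1,\ldots,\ell-1\}.
\end{align*}
From this I would prove $T_k \leq k\, T_\ell/\ell$ by induction on $k$. The case $k=0$ is trivial; for the inductive step, combining the displayed recurrence with $T_{k-1} \leq (k-1)T_\ell/\ell$ gives
\begin{align*}
(\ell-k+1)\, T_k \;\leq\; (\ell-k)(k-1)\,\tfrac{T_\ell}{\ell} + T_\ell \;=\; \tfrac{(\ell-k)(k-1)+\ell}{\ell}\, T_\ell \;=\; \tfrac{k(\ell-k+1)}{\ell}\, T_\ell,
\end{align*}
using the identity $(\ell-k)(k-1)+\ell = k(\ell-k+1)$, and dividing by $\ell-k+1$ closes the induction.

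Next I would represent $p$ as a convex combination of extreme points of the simplex of non-increasing probability vectors. Setting $p_{\ell+1} := 0$ and $\alpha_k := k\,(p_k - p_{k+1}) \geq 0$, two short telescoping identities give $\sum_{k=1}^\ell \alpha_k = \sum_{k=1}^\ell p_k = 1$ and $p_i = \sum_{k=i}^\ell \alpha_k/k$. Substituting and exchanging the order of summation yields
\begin{align*}
\sum_{i=1}^\ell p_i\, f(i) \;=\; \sum_{k=1}^\ell \alpha_k\, \frac{T_k}{k} \;\leq\; \frac{T_\ell}{\ell}\sum_{k=1}^\ell \alpha_k \;=\; \frac{T_\ell}{\ell},
\end{align*}
which is \eqref{eq: 20171012c}.

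The main subtlety is the uniqueness addendum. Strictness of the hypothesis propagates through the inductive step, so $T_k/k < T_\ell/\ell$ strictly for every $k < \ell$; equality in the convex-combination estimate then forces $\alpha_k = 0$ for all $k < \ell$, that is $p_1 = \cdots = p_\ell$, and combined with $\sum p_i = 1$ this gives $p_i = 1/\ell$. The converse is immediate. Nothing in the argument is genuinely hard, but one must carefully verify the algebraic identity $(\ell-k)(k-1)+\ell = k(\ell-k+1)$ that closes the induction, and keep the bookkeeping of the two telescoping identities for $\alpha_k$ straight.
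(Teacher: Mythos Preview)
Your proof is correct and takes a genuinely different route from the paper's. The paper defines an interpolating sequence
\[
u_j = \sum_{i=1}^{\ell-j} p_i f(i) + \frac{1}{j}\Bigl(\sum_{i=\ell-j+1}^{\ell} p_i\Bigr)\Bigl(\sum_{i=\ell-j+1}^{\ell} f(i)\Bigr),
\]
observes that $u_1$ and $u_\ell$ are the two sides of \eqref{eq: 20171012c}, and shows $u_{j+1}-u_j \geq 0$ by factoring this difference as a product of two non-negative terms, one controlled by the monotonicity of $p$ and the other by the hypothesis \eqref{eq: 20171012b}. Your argument instead decouples the roles of $f$ and $p$: from the hypothesis you first extract the purely $f$-dependent fact $T_k/k \leq T_\ell/\ell$, and then exploit the representation of a non-increasing probability vector as a convex combination of the uniform distributions on the prefixes $\{1,\ldots,k\}$. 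This separation is conceptually clean and makes the equality case especially transparent (strict $T_k/k < T_\ell/\ell$ for $k<\ell$ forces all the weight onto $k=\ell$); the paper's interpolation, on the other hand, bypasses the auxiliary induction and reaches the conclusion through a single product factorization. Both proofs are short and neither dominates the other.
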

\begin{proof}
Denote
\begin{align} \label{eq: seq. u}
u_j = \sum_{i=1}^{\ell-j} p_i f(i) + \frac1j \left( \sum_{i=\ell-j+1}^{\ell} p_i \right) \left( \sum_{i=\ell-j+1}^{\ell} f(i) \right), \quad j \in \{1, \ldots, \ell\}.
\end{align}
Its first and last terms are equal to the side of \eqref{eq: 20171012c}:
\begin{align}
& u_1 = \sum_{i=1}^{\ell} p_i f(i), \\
& u_{\ell} = \frac1{\ell} \sum_{i=1}^{\ell} f(i)
\end{align}
so, proving that $\{u_j\}$ is monotonically increasing is sufficient to show \eqref{eq: 20171012c}.
By its definition in \eqref{eq: seq. u}, straightforward calculation shows that for $j \in \{1, \ldots, \ell-1\}$
\begin{align}
u_{j+1} - u_j &= \frac1{j(j+1)} \left( j \, p_{\ell-j} - \sum_{i=\ell-j+1}^{\ell} p_i \right)
\left( \sum_{i=\ell-j+1}^{\ell} f(i) - j f(\ell-j) \right) \label{eq: product} \\
& \geq 0 \label{non-negative}
\end{align}
where \eqref{non-negative} holds in view of $p_1 \geq \ldots \geq p_{\ell}$ and
\eqref{eq: 20171012b}. Note that \eqref{eq: product} and \eqref{non-negative} imply that
$u_1 = u_l$ if and only if $u_{j+1}=u_j$ for all $j \in \{1, \ldots, l-1\}$. Hence, if the inequality
in \eqref{eq: 20171012b} is strict, then \eqref{eq: product} implies that $u_{j+1}=u_j$ for all $j$
if and only if the monotonic sequence $\{p_i\}$ is fixed, i.e., $p_1 = p_2= \ldots = p_{\ell} = \frac1{\ell}$
(since, by assumption, this sequence sums to~1).
\end{proof}

\begin{lemma} \label{lemma-2: f}
Let $l \in \naturals$, $q_1 \geq \ldots \geq q_{\ell} \geq 0$, and let
$f \colon \{1, \ldots, \ell\} \to \Reals$ be a strictly monotonically increasing function.
Then,
\begin{align} \label{eq: 20171012d}
\sum_{i=1}^{\ell} q_i \, f(1) \leq \sum_{i=1}^{\ell} q_i f(i) \leq \frac1{\ell} \left( \sum_{i=1}^{\ell} q_i \right) \left( \sum_{i=1}^{\ell} f(i) \right)
\end{align}
with equality in the left inequality if and only if $q_i = 0$ for all $i \in \{2, \ldots, \ell\}$,
and equality in the right inequality if and only if $q_1 = \ldots q_{\ell}$.
\end{lemma}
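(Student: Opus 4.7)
The plan is to address the two inequalities separately. The left bound is an immediate consequence of strict monotonicity of $f$, while the right bound is a Chebyshev-type rearrangement inequality that I would derive via a symmetric double-sum argument.

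For the left inequality, I would invoke strict monotonicity of $f$ to obtain $f(i) \geq f(1)$ for every $i \in \{1, \ldots, \ell\}$, with equality only at $i = 1$. Since $q_i \geq 0$, multiplying by $q_i$ and summing across $i$ yields the claimed bound. For the equality characterization, $\sum_{i=1}^{\ell} q_i \bigl(f(i) - f(1)\bigr) = 0$ forces $q_i \bigl(f(i)-f(1)\bigr) = 0$ for each $i$ individually, and since $f(i) > f(1)$ whenever $i \geq 2$, this is equivalent to $q_i = 0$ for all $i \in \{2, \ldots, \ell\}$.

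For the right inequality, my plan is to exploit the fact that $\{q_i\}$ is non-increasing while $\{f(i)\}$ is strictly increasing, so the two sequences are oppositely ordered. I would consider the symmetric double sum
\[
T \triangleq \sum_{i=1}^{\ell} \sum_{j=1}^{\ell} \bigl(q_i - q_j\bigr) \bigl(f(i) - f(j)\bigr),
\]
and observe that every summand is non-positive: if $i > j$ then $q_i \leq q_j$ and $f(i) > f(j)$, so the product is $\leq 0$; the case $i < j$ follows by symmetry, and the diagonal vanishes. Direct expansion gives
\[
T = 2 \ell \sum_{i=1}^{\ell} q_i f(i) - 2 \biggl(\sum_{i=1}^{\ell} q_i\biggr) \biggl(\sum_{j=1}^{\ell} f(j)\biggr),
\]
so $T \leq 0$ is precisely the desired bound after dividing by $2\ell$. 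Equality in $T \leq 0$ forces $(q_i - q_j)(f(i) - f(j)) = 0$ for every pair $(i,j)$; since $f$ is strictly increasing, $f(i) \neq f(j)$ whenever $i \neq j$, hence $q_1 = q_2 = \cdots = q_{\ell}$, as required.

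An alternative route, should one wish to reuse earlier machinery, is to reduce the right inequality to Lemma~\ref{lemma: f} by normalizing $p_i = q_i / \sum_k q_k$, with the case $\sum_k q_k = 0$ being trivial. The hypothesis $\frac{1}{j} \sum_{k=\ell-j+1}^{\ell} f(k) \geq f(\ell-j)$ of Lemma~\ref{lemma: f} is then satisfied with strict inequality, since strict monotonicity of $f$ implies that each term of the average on the left strictly exceeds $f(\ell-j)$; this simultaneously delivers the bound and the equality characterization. I do not anticipate any technical obstacle: the argument is short and self-contained in either formulation.
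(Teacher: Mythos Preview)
Your proof is correct. The left inequality and its equality case match the paper's argument exactly.

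For the right inequality, your primary route differs from the paper's: you give a direct Chebyshev-sum argument via the symmetric double sum $T$, whereas the paper proceeds exactly along your ``alternative route,'' normalizing $p_i = q_i/\sum_k q_k$ and invoking Lemma~\ref{lemma: f} (whose hypothesis \eqref{eq: 20171012b} holds strictly by the strict monotonicity of $f$). Your Chebyshev argument is more elementary and self-contained; in fact it renders Lemma~\ref{lemma: f} unnecessary, since in the paper that lemma is used only here. The paper's route, on the other hand, isolates a slightly more general monotone-averaging statement (Lemma~\ref{lemma: f} requires only the averaged condition \eqref{eq: 20171012b} rather than full monotonicity of $f$), at the cost of the auxiliary telescoping construction $\{u_j\}$. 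Either way the bound and the equality characterization follow cleanly.
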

\begin{proof}
Since by assumption $f \colon \{1, \ldots, \ell\} \to \Reals$ is a strictly monotonically increasing function,
the inequality in \eqref{eq: 20171012b} is strict for all $j \in \{1, \ldots, \ell-1\}$. Let
$\{p_i\}$ be the normalized version of the non-negative, monotonically decreasing sequence $\{q_i\}$ such that
$\overset{\ell}{\underset{i=1}{\sum}} p_i = 1$. Hence,
\begin{align} \label{p}
p_i = \frac{q_i}{\overset{\ell}{\underset{j=1}{\sum}} q_j}, \quad i \in \{1, \ldots, \ell\},
\end{align}
and $p_1 \geq p_2 \geq \ldots \geq p_{\ell} \geq 0$. Lemma~\ref{lemma: f} and \eqref{p} give
\begin{align}
\label{eq0: cor 2}
\sum_{i=1}^{\ell} q_i f(i) &= \left( \sum_{i=1}^{\ell} p_i f(i) \right) \left( \sum_{i=1}^{\ell} q_i \right) \\
\label{eq1: cor 2}
&\leq \frac1{\ell} \left( \sum_{i=1}^{\ell} f(i) \right) \left( \sum_{i=1}^{\ell} q_i \right),
\end{align}
and, due to Lemma~\ref{lemma: f}, \eqref{eq1: cor 2} holds with equality if and only if $p_1 = \ldots = p_{\ell} = \frac1{\ell}$;
due to \eqref{p}, this holds if and only if $q_1 = \ldots = q_{\ell}$.
Moreover, the assumptions on $f \colon \{1, \ldots, \ell\} \to \Reals$ and $\{q_i\}$ imply that
\begin{align}
\label{eq2: cor 2}
\sum_{i=1}^{\ell} q_i f(i) \geq f(1) \sum_{i=1}^{\ell} q_i,
\end{align}
where \eqref{eq2: cor 2} holds with equality if and only if $q_i=0$ for all $i \in \{2, \ldots, \ell\}$.
\end{proof}

\begin{lemma} \label{lemma: f is convex}
The function $f_{\rho} \colon [0,1) \to [0, \infty)$ defined in \eqref{eq: f-thm10}--\eqref{eq: k-thm10}
is convex for all $\rho>0$.
\end{lemma}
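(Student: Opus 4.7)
The plan is to exploit the piecewise structure of $f_{\rho}$ induced by the step function $k_u = \lfloor 1/(1-u) \rfloor$, show that on each piece $f_{\rho}$ is affine (in fact linear in $u$), and then verify convexity by checking continuity at the breakpoints together with monotonicity of the slopes. Concretely, for each integer $k \geq 1$, let $I_k = [1 - 1/k, 1 - 1/(k+1))$, so that $\{I_k\}_{k\geq 1}$ partitions $[0,1)$ and $k_u = k$ for all $u \in I_k$.

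First, I would substitute $k_u = k$ into \eqref{eq: f-thm10} to obtain, for $u \in I_k$,
\begin{align*}
f_{\rho}(u) = (1-u) \, \Biggl[ \sum_{j=1}^{k} j^{\rho} - k(k+1)^{\rho} \Biggr] + (k+1)^{\rho},
\end{align*}
which is an affine function of $u$ with slope
\begin{align*}
s_k = k(k+1)^{\rho} - \sum_{j=1}^{k} j^{\rho}.
\end{align*}
Thus $f_{\rho}$ restricted to each $I_k$ is affine, so the convexity of $f_{\rho}$ on all of $[0,1)$ reduces to two verifications: continuity at each breakpoint $u_k = 1 - 1/(k+1)$, and the monotonicity $s_k \leq s_{k+1}$ for all $k \geq 1$.

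For continuity, I would evaluate both one-sided limits at $u_k$: from $I_k$, a short calculation gives $f_{\rho}(u_k^-) = \frac{1}{k+1}\sum_{j=1}^{k+1} j^{\rho}$, while from $I_{k+1}$, the coefficient $1 - (1-u_k)(k+1)$ vanishes and the formula yields the same value. For the slope monotonicity,
\begin{align*}
s_{k+1} - s_k = (k+1)(k+2)^{\rho} - k(k+1)^{\rho} - (k+1)^{\rho}
= (k+1) \bigl[ (k+2)^{\rho} - (k+1)^{\rho} \bigr] \geq 0,
\end{align*}
where the non-negativity is immediate for $\rho > 0$. A continuous piecewise-affine function on an interval with non-decreasing slopes is convex, which proves the lemma.

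There is no real obstacle here; the only mild step is the algebraic verification that the two affine pieces match up at the breakpoints $u_k = 1 - 1/(k+1)$, but this follows directly from $(1 - u_k)(k+1) = 1$, which cancels the contribution of the top term on both sides.
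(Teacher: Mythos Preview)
Your proof is correct and follows essentially the same approach as the paper's: partition $[0,1)$ into the intervals $I_k = [1-1/k,\,1-1/(k+1))$ on which $k_u$ is constant, observe that $f_{\rho}$ is affine on each $I_k$ with slope $s_k = k(k+1)^{\rho} - \sum_{j=1}^{k} j^{\rho}$, verify continuity at the breakpoints, and check that the slopes are nondecreasing. Your explicit computation of $s_{k+1}-s_k = (k+1)\bigl[(k+2)^{\rho}-(k+1)^{\rho}\bigr]$ and the continuity check at $u_k$ are slightly more detailed than the paper's presentation, but the argument is the same.
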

\begin{proof}
From \eqref{eq: k-thm10}, the value of $k_u \in \naturals$ is fixed in each interval
$[1-\frac1m, 1-\frac1{m+1})$ with $m \in \naturals$. Hence, $f_{\rho}$ in \eqref{eq: f-thm10} is
a linear function in each such interval; its positive slope, denoted by
$s_{\rho}(m)$, is given by
\begin{align} \label{eq: slope-s}
s_{\rho}(m) = k_u (k_u+1)^{\rho} - \sum_{j=1}^{k_u} j^\rho.
\end{align}
By a transition from an interval $[1-\frac1m, 1-\frac1{m+1})$ to the successive interval
$[1-\frac1{m+1}, 1-\frac1{m+2})$, the value of the positive integer $k_u$ is increased by~1
(see \eqref{eq: k-thm10}); consequently, it can be verified from \eqref{eq: slope-s} that
for $\rho>0$
\begin{align}
s_{\rho}(m+1) > s_{\rho}(m), \quad m \in \naturals.
\end{align}
Hence, the slope of the linear function obtained by restricting $f_{\rho}$ to
the interval $[1-\frac1{m+1}, 1-\frac1{m+2})$ is larger than its slope in the interval
$[1-\frac1m, 1-\frac1{m+1})$. Hence, $f_{\rho}$ can be decomposed by linear functions
in each interval $[1-\frac1m, 1-\frac1{m+1})$ whose slopes are monotonically increasing
in $m \in \naturals$. It can be also verified from \eqref{eq: f-thm10} that the function
$f_{\rho}$ is continuous at the endpoints of these intervals, which therefore yields its
convexity on $[0,1) = \underset{m \in \naturals}{\bigcup} [1-\frac1m, 1-\frac1{m+1})$.
\end{proof}

\vspace*{0.2cm}
\begin{lemma} \label{lemma: 20171024}
The identity in \eqref{eq: 20171022-h} holds for every integer $M \geq 2$.
\end{lemma}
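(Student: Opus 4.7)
The plan is to first peel off the first row and column, then factor one logarithm out of each remaining column, and finally recognize what is left as a Vandermonde determinant.

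First, since the first column is $(1,0,\ldots,0)^{\top}$, expansion along this column immediately reduces the $M\times M$ determinant to the $(M-1)\times (M-1)$ determinant
\begin{align*}
D_M \triangleq \left| \begin{array}{ccc}
\log_{\mathrm{e}} 2 & \cdots & \log_{\mathrm{e}} M \\
\vdots & \vdots & \vdots \\
\log_{\mathrm{e}}^{M-1} 2 & \cdots & \log_{\mathrm{e}}^{M-1} M
\end{array} \right|.
\end{align*}
For the base case $M=2$, the original determinant equals $\log_{\mathrm{e}} 2 = c_2$, as required. For $M\geq 3$, factor $\log_{\mathrm{e}} k$ out of the column associated with $k\in\{2,\ldots,M\}$, producing the prefactor $\prod_{k=2}^{M}\log_{\mathrm{e}} k$ times the determinant
\begin{align*}
V_M \triangleq \left| \begin{array}{ccc}
1 & \cdots & 1 \\
\log_{\mathrm{e}} 2 & \cdots & \log_{\mathrm{e}} M \\
\vdots & \vdots & \vdots \\
\log_{\mathrm{e}}^{M-2} 2 & \cdots & \log_{\mathrm{e}}^{M-2} M
\end{array} \right|.
\end{align*}

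Next, I would recognize $V_M$ as a Vandermonde determinant in the variables $x_k = \log_{\mathrm{e}} k$ for $k=2,\ldots,M$, and invoke the standard Vandermonde identity to conclude
\begin{align*}
V_M \;=\; \prod_{2 \leq i < j \leq M} \bigl(\log_{\mathrm{e}} j - \log_{\mathrm{e}} i\bigr)
\;=\; \prod_{2 \leq i < j \leq M} \log_{\mathrm{e}}\!\left(\tfrac{j}{i}\right).
\end{align*}
Combining with the prefactor gives exactly $c_M$ as defined in \eqref{eq: 20171031-b}, completing the identity in \eqref{eq: 20171022-h}.

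There is no real obstacle in this argument; the only point that deserves care is the index reshuffling when identifying $V_M$ as a Vandermonde matrix (the columns are indexed by $k=2,\ldots,M$, and after the substitution $x_k=\log_{\mathrm{e}} k$ the product over ordered pairs $2\leq i<j\leq M$ is inherited directly from the Vandermonde product). Finally, since $\log_{\mathrm{e}} k>0$ for every $k\geq 2$ and $\log_{\mathrm{e}}(j/i)>0$ for every $2\leq i<j\leq M$, the constant $c_M$ is strictly positive, which yields the non-vanishing claim $c_M\neq 0$ asserted in the statement.
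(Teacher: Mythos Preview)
Your proposal is correct and follows essentially the same approach as the paper: expand along the first column, factor out $\log_{\mathrm{e}} k$ from each remaining column, and identify the resulting $(M-1)\times(M-1)$ determinant as a Vandermonde determinant in the nodes $\log_{\mathrm{e}} 2,\ldots,\log_{\mathrm{e}} M$. The only cosmetic difference is that the paper phrases the factoring step as being ``from the $(k-1)$-th row,'' but the computation and conclusion are identical to yours.
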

\begin{proof}
The result in \eqref{eq: 20171022-h} is trivial for $M=2$. Let $M \geq 3$, then
\begin{align}
\left| \begin{array}{cccc}
 1  &   1  &   \cdots  &   1 \\
 0  &  \log_{\mathrm{e}} 2  &   \cdots  &   \log_{\mathrm{e}} M \\
 0  &  \log_{\mathrm{e}}^2 2   &  \cdots  &   \log_{\mathrm{e}}^2 M \\
 0  & \vdots & \vdots & \vdots \\
 0  &  \log_{\mathrm{e}}^{M-1} 2   &  \cdots  &   \log_{\mathrm{e}}^{M-1} M
\end{array}
\right|
\label{eq: 20171024-a}
 & = \left| \begin{array}{ccc}
 \log_{\mathrm{e}} 2  &   \cdots  &   \log_{\mathrm{e}} M \\
 \log_{\mathrm{e}}^2 2   &  \cdots  &   \log_{\mathrm{e}}^2 M \\
 \vdots & \vdots & \vdots \\
 \log_{\mathrm{e}}^{M-1} 2   &  \cdots  &   \log_{\mathrm{e}}^{M-1} M
 \end{array}
 \right| \\[0.1cm]
 \label{eq: 20171024-b}
 & = \left(\prod_{k=2}^M \log_{\mathrm{e}} k \right)
 \left| \begin{array}{ccc}
 1  &   \cdots  &   1 \\
 \log_{\mathrm{e}} 2  &   \cdots  &   \log_{\mathrm{e}} M \\
 \vdots & \vdots & \vdots \\
 \log_{\mathrm{e}}^{M-2} 2   &  \cdots  &   \log_{\mathrm{e}}^{M-2} M
 \end{array}
 \right| \\[0.1cm]
 \label{eq: 20171024-d}
 & = \prod_{k=2}^M \log_{\mathrm{e}} k \;
 \prod_{2 \leq i < j \leq M} (\log_{\mathrm{e}} j - \log_{\mathrm{e}} i )
 \end{align}
where \eqref{eq: 20171024-a} holds by expanding according to the first column;
\eqref{eq: 20171024-b} holds by factoring $\log_{\mathrm{e}} k$ from the $(k-1)$-th row for $k=2, \ldots, M$;
finally, \eqref{eq: 20171024-d} relies on the Vandermonde determinant (e.g., \cite[p.~155]{Lang}).
\end{proof}

\end{document}